\numberwithin{equation}{section}
\theoremstyle{plain}
\newtheorem{theorem}{Theorem}[section]
\newtheorem{corollary}[theorem]{Corollary}
\newtheorem{lemma}[theorem]{Lemma}
\theoremstyle{definition}
\newtheorem{definition}[theorem]{Definition}
\theoremstyle{remark}
\newtheorem{remark}{Remark}[section]
\newtheorem{example}[theorem]{Example}
\begin{document}
%\layout
%%%%%%%%%%%%%%%%%%%%%%%%%%%%%%%%%%%%%%%%%%%%%%%%%%%%%%%%%
%\input laxmac.tex
%%%%%%%%%%%%%%%%%%%%%%%%%%%%%%%%%%%%%%%%%%%%%%%%%
%%%%%%%%%%%%  macrodefinitions
%%%%%%%%%%%%%%%%%%%%%%%%%%%%%%%%%%%%%%%%%%%%%%%%%
%  Macros (general)
%

%%%%%%%%%%%%%%%%%%%%%%%
\newcommand{\MgNekp}{\mathcal{M}_{g,N+1}^{(k,p)}} %% moduli space
\newcommand{\M}{\mathcal{M}}
\newcommand{\Teich}{\mathcal{T}_{g,N+1}^{(1)}}
\newcommand{\T}{\mathrm{T}}
%%%%   temporary
\newcommand{\corr}{\bf}
\newcommand{\vac}{|0\rangle}
\newcommand{\Ga}{\Gamma}
\newcommand{\new}{\bf}
\newcommand{\define}{\def}
\newcommand{\redefine}{\def}
\newcommand{\Cal}[1]{\mathcal{#1}}
\renewcommand{\frak}[1]{\mathfrak{{#1}}}
%%%%%%%%%%%%%%%%%%%%%%%%%%%%%%%%%%%%
%   Referencing Scheme of Martin
%%%%%%%%%%%%%%%%%%%%%%%%%%
\newcommand{\refE}[1]{(\ref{E:#1})}
\newcommand{\refS}[1]{Section~\ref{S:#1}}
\newcommand{\refSS}[1]{Section~\ref{SS:#1}}
\newcommand{\refT}[1]{Theorem~\ref{T:#1}}
\newcommand{\refO}[1]{Observation~\ref{O:#1}}
\newcommand{\refP}[1]{Proposition~\ref{P:#1}}
\newcommand{\refD}[1]{Definition~\ref{D:#1}}
\newcommand{\refC}[1]{Corollary~\ref{C:#1}}
\newcommand{\refL}[1]{Lemma~\ref{L:#1}}
\newcommand{\refR}[1]{Remark~\ref{R:#1}}
\newcommand{\refEx}[1]{Example~\ref{Ex:#1}}
%%%%%%%%%%%%%%%%%%%%%%%%%%%%%%%%%%
\newcommand{\R}{\ensuremath{\mathbb{R}}}
\newcommand{\C}{\ensuremath{\mathbb{C}}}
\newcommand{\N}{\ensuremath{\mathbb{N}}}
\newcommand{\Q}{\ensuremath{\mathbb{Q}}}
\renewcommand{\P}{\ensuremath{\mathcal{P}}}
\newcommand{\Z}{\ensuremath{\mathbb{Z}}}
%%%%%%%%%%%%%%%%%%%%%%%%%%%%%%%%%%%%%%%%%%
\newcommand{\kv}{{k^{\vee}}}
%%%%%%%%%%%%%%%%%%%%%%%%%%%%%%%%%%%%%%%%%%%%%
\renewcommand{\l}{\lambda}
%%%%%%%%%%%%%%%%%%%%%%%%%%%%%%%%%%%%%%%%%%%%%%%%%%
\newcommand{\gb}{\overline{\mathfrak{g}}}
\newcommand{\hb}{\overline{\mathfrak{h}}}
\newcommand{\g}{\mathfrak{g}}
\newcommand{\h}{\mathfrak{h}}
\newcommand{\gh}{\widehat{\mathfrak{g}}}
\newcommand{\ghN}{\widehat{\mathfrak{g}_{(N)}}}
\newcommand{\gbN}{\overline{\mathfrak{g}_{(N)}}}
\newcommand{\tr}{\mathrm{tr}}
\newcommand{\sln}{\mathfrak{sl}}
\newcommand{\sn}{\mathfrak{s}}
\newcommand{\so}{\mathfrak{so}}
\newcommand{\spn}{\mathfrak{sp}}
\newcommand{\tsp}{\mathfrak{tsp}(2n)}
\newcommand{\gl}{\mathfrak{gl}}
\newcommand{\slnb}{{\overline{\mathfrak{sl}}}}
\newcommand{\snb}{{\overline{\mathfrak{s}}}}
\newcommand{\sob}{{\overline{\mathfrak{so}}}}
\newcommand{\spnb}{{\overline{\mathfrak{sp}}}}
\newcommand{\glb}{{\overline{\mathfrak{gl}}}}
\newcommand{\Hwft}{\mathcal{H}_{F,\tau}}
\newcommand{\Hwftm}{\mathcal{H}_{F,\tau}^{(m)}}

%%%%%%%%%%%%%%%%%%%%%%%%%%%%%%%%%%%%%%%%%%%%%%%%%%%%
\newcommand{\car}{{\mathfrak{h}}}    % Cartan subalgebra
\newcommand{\bor}{{\mathfrak{b}}}    % Borel subalgebra
\newcommand{\nil}{{\mathfrak{n}}}    % nilpotent subalgebra
\newcommand{\vp}{{\varphi}}
\newcommand{\bh}{\widehat{\mathfrak{b}}}  % Borel subalgebra of KN algebra
\newcommand{\bb}{\overline{\mathfrak{b}}}  % Borel subalgebra of KN algebra
\newcommand{\Vh}{\widehat{\mathcal V}}
\newcommand{\KZ}{Kniz\-hnik-Zamo\-lod\-chi\-kov}
\newcommand{\TUY}{Tsuchia, Ueno  and Yamada}
\newcommand{\KN} {Kri\-che\-ver-Novi\-kov}
\newcommand{\pN}{\ensuremath{(P_1,P_2,\ldots,P_N)}}
\newcommand{\xN}{\ensuremath{(\xi_1,\xi_2,\ldots,\xi_N)}}
\newcommand{\lN}{\ensuremath{(\lambda_1,\lambda_2,\ldots,\lambda_N)}}
\newcommand{\iN}{\ensuremath{1,\ldots, N}}
\newcommand{\iNf}{\ensuremath{1,\ldots, N,\infty}}

\newcommand{\tb}{\tilde \beta}
\newcommand{\tk}{\tilde \kappa}
\newcommand{\ka}{\kappa}
\renewcommand{\k}{\kappa}

\newcommand{\Pif} {P_{\infty}}
\newcommand{\Pinf} {P_{\infty}}
\newcommand{\PN}{\ensuremath{\{P_1,P_2,\ldots,P_N\}}}
\newcommand{\PNi}{\ensuremath{\{P_1,P_2,\ldots,P_N,P_\infty\}}}
\newcommand{\Fln}[1][n]{F_{#1}^\lambda}
\newcommand{\tang}{\mathrm{T}}
\newcommand{\Kl}[1][\lambda]{\can^{#1}}
\newcommand{\A}{\mathcal{A}}
\newcommand{\U}{\mathcal{U}}
\newcommand{\V}{\mathcal{V}}
\renewcommand{\O}{\mathcal{O}}
\newcommand{\Ae}{\widehat{\mathcal{A}}}
\newcommand{\Ah}{\widehat{\mathcal{A}}}
\newcommand{\La}{\mathcal{L}}
\newcommand{\Lp}{\mathcal Lp}
\newcommand{\Le}{\widehat{\mathcal{L}}}
\newcommand{\Lh}{\widehat{\mathcal{L}}}
\newcommand{\eh}{\widehat{e}}
\newcommand{\Da}{\mathcal{D}}
\newcommand{\kndual}[2]{\langle #1,#2\rangle}
\newcommand{\cins}{\frac 1{2\pi\mathrm{i}}\int_{C_S}}
\newcommand{\cinsl}{\frac 1{24\pi\mathrm{i}}\int_{C_S}}
\newcommand{\cinc}[1]{\frac 1{2\pi\mathrm{i}}\int_{#1}}
\newcommand{\cintl}[1]{\frac 1{24\pi\mathrm{i}}\int_{#1 }}
\newcommand{\w}{\omega}
\newcommand{\ord}{\operatorname{ord}}
\newcommand{\res}{\operatorname{res}}
\newcommand{\nord}[1]{:\mkern-5mu{#1}\mkern-5mu:}
\newcommand{\ad}{\operatorname{ad}}
\newcommand{\Ad}{\operatorname{Ad}}
\newcommand{\codim}{\operatorname{codim}}
%%%%%%%%%%%%%%%%%%%%%%%%%%%%%%%%%%%%%%%%%%%%%%%%
\newcommand{\Fn}[1][\lambda]{\mathcal{F}^{#1}}
\newcommand{\Fl}[1][\lambda]{\mathcal{F}^{#1}}
\renewcommand{\Re}{\mathrm{Re}}

\newcommand{\ha}{H^\alpha}

\define\ldot{\hskip 1pt.\hskip 1pt}
\define\ifft{\qquad\text{if and only if}\qquad}
\define\a{\alpha}
\redefine\d{\delta}
\define\w{\omega}
\define\ep{\epsilon}
\redefine\b{\beta} \redefine\t{\tau} \redefine\i{{\,\mathrm{i}}\,}
\define\ga{\gamma}
\define\cint #1{\frac 1{2\pi\i}\int_{C_{#1}}}
\define\cintta{\frac 1{2\pi\i}\int_{C_{\tau}}}
\define\cintt{\frac 1{2\pi\i}\oint_{C}}
\define\cinttp{\frac 1{2\pi\i}\int_{C_{\tau'}}}
\define\cinto{\frac 1{2\pi\i}\int_{C_{0}}}
%\define\cinttt{\frac 1{24\pi\i}\int_{C_{\tau}}}
\define\cinttt{\frac 1{24\pi\i}\int_C}
\define\cintd{\frac 1{(2\pi \i)^2}\iint\limits_{C_{\tau}\,C_{\tau'}}}
\define\cintdr{\frac 1{(2\pi \i)^3}\int_{C_{\tau}}\int_{C_{\tau'}}
\int_{C_{\tau''}}}
\define\im{\operatorname{Im}}
\define\re{\operatorname{Re}}
%\define\res{\text{res}}
\define\res{\operatorname{res}}
\redefine\deg{\operatornamewithlimits{deg}}
\define\ord{\operatorname{ord}}
\define\rank{\operatorname{rank}}
\define\fpz{\frac {d }{dz}}
\define\dzl{\,{dz}^\l}
\define\pfz#1{\frac {d#1}{dz}}

\define\K{\Cal K}
\define\U{\Cal U}
\redefine\O{\Cal O}
\define\He{\text{\rm H}^1}
\redefine\H{{\mathrm{H}}}
\define\Ho{\text{\rm H}^0}
\define\A{\Cal A}
\define\Do{\Cal D^{1}}
\define\Dh{\widehat{\mathcal{D}}^{1}}
\redefine\L{\Cal L}
\newcommand{\ND}{\ensuremath{\mathcal{N}^D}}
\redefine\D{\Cal D^{1}}
\define\KN {Kri\-che\-ver-Novi\-kov}
\define\Pif {{P_{\infty}}}
\define\Uif {{U_{\infty}}}
\define\Uifs {{U_{\infty}^*}}
\define\KM {Kac-Moody}
\define\Fln{\Cal F^\lambda_n}
%%%%%%%%%%%%%%%%%%%%
\define\gb{\overline{\mathfrak{ g}}}
\define\G{\overline{\mathfrak{ g}}}
\define\Gb{\overline{\mathfrak{ g}}}
\redefine\g{\mathfrak{ g}}
\define\Gh{\widehat{\mathfrak{ g}}}
\define\gh{\widehat{\mathfrak{ g}}}
%%%%%%%%%%%%%%%%%%%%%%%%%%
\define\Ah{\widehat{\Cal A}}
\define\Lh{\widehat{\Cal L}}
\define\Ugh{\Cal U(\Gh)}
\define\Xh{\hat X}
\define\Tld{...}
\define\iN{i=1,\ldots,N}
\define\iNi{i=1,\ldots,N,\infty}
\define\pN{p=1,\ldots,N}
\define\pNi{p=1,\ldots,N,\infty}
\define\de{\delta}

\define\kndual#1#2{\langle #1,#2\rangle}
\define \nord #1{:\mkern-5mu{#1}\mkern-5mu:}
%\define \MgN{{\Cal M}_{g,N}} %% moduli space
%\define \MgNp{{\Cal M}_{g,N}^{(p)}} %% moduli space
%\define \MgNe{{\Cal M}_{g,N+1}} %% moduli space
%\define \mpt{(M,P_1,P_2,\ldots, P_N,\Pif)} %% moduli point
%\define \mpp{(M,P_1,P_2,\ldots, P_N)} %% moduli point
%\define \MgNn{{\Cal M}_{g,N}^{(1)}} %% moduli space
%\define \MgNen{{\Cal M}_{g,N+1}^{(1)}} %% moduli space
%\define \Mgo{{\Cal M}_{g,0}} %% moduli space
%\define \mptn{(M,P_1,P_2,\ldots, P_N,\Pif,z_1,\ldots,z_N,z_\infty)}
 %% moduli point
%\define \mppn{(M,P_1,P_2,\ldots, P_N,z_1,\ldots,z_N)} %% moduli point
\define \sinf{{\widehat{\sigma}}_\infty}
\define\Wt{\widetilde{W}}
\define\St{\widetilde{S}}
\newcommand{\SigmaT}{\widetilde{\Sigma}}
\newcommand{\hT}{\widetilde{\frak h}}
\define\Wn{W^{(1)}}
\define\Wtn{\widetilde{W}^{(1)}}
\define\btn{\tilde b^{(1)}}
\define\bt{\tilde b}
\define\bn{b^{(1)}}
%
%%%%%%%%%% Olegs definitions %%%%%%%%%%%%%%%%%%%%%%%%%%%%%%%%%%%
\define\eps{\varepsilon}    % Oleg
\define\doint{({\frac 1{2\pi\i}})^2\oint\limits _{C_0}
       \oint\limits _{C_0}}                            % Oleg
\define\noint{ {\frac 1{2\pi\i}} \oint}   % Oleg
\define \fh{{\frak h}}     % Oleg
\define \fg{{\frak g}}     % Oleg
\define \GKN{{\Cal G}}   % affine Krichever-Novikov algebra % Oleg
\define \gaff{{\hat\frak g}}   % affine Krichever-Novikov algebra
\define\V{\Cal V}
\define \ms{{\Cal M}_{g,N}} %% moduli space
\define \mse{{\Cal M}_{g,N+1}} %% moduli space
%%%%%%%%%%%%%%%%%%%%%%%%%%%%%%%%%%%%%%
\define \tOmega{\Tilde\Omega}
\define \tw{\Tilde\omega}
\define \hw{\hat\omega}
\define \s{\sigma}
\define \car{{\frak h}}    % Cartan subalgebra
\define \bor{{\frak b}}    % Borel subalgebra
\define \nil{{\frak n}}    % nilpotent subalgebra
\define \vp{{\varphi}}
\define\bh{\widehat{\frak b}}  % Borel subalgebra of KN algebra
\define\bb{\overline{\frak b}}  % Borel subalgebra of KN algebra
\define\Vh{\widehat V}
\define\KZ{Knizhnik-Zamolodchikov}
\define\ai{{\alpha(i)}}
\define\ak{{\alpha(k)}}
\define\aj{{\alpha(j)}}
%%%%%%%%%%%%%%%%%%%%%%%%%%%%%%%%%%%%%%%%%%%
%%%%%%%%%%%%%%%%%%%%%%%%%%%%%%%%%
%%%%%%%%%%%%%%   for laxcent
%%%%%%%%%%%%%%%%%%%%%%%%%%%%%%%%%%
\newcommand{\laxgl}{\overline{\mathfrak{gl}}}
\newcommand{\laxsl}{\overline{\mathfrak{sl}}}
\newcommand{\laxso}{\overline{\mathfrak{so}}}
\newcommand{\laxsp}{\overline{\mathfrak{sp}}}
\newcommand{\laxs}{\overline{\mathfrak{s}}}
\newcommand{\laxg}{\overline{\frak g}}
\newcommand{\bgl}{\laxgl(n)}
%%%%%%%%%%%%%%%%%%%%%%%%
\newcommand{\tX}{\widetilde{X}}
\newcommand{\tY}{\widetilde{Y}}
\newcommand{\tZ}{\widetilde{Z}}
%%%%%%%%%%%%%%%%%%%%%%%%%%%%%%%%%%%%%%%%%%
%%%%%%%%%%%%  END of macrodefinitions
%%%%%%%%%%%%%%%%%%%%%%%%%%%%%%%%%%%%%%%%%

%%%%%%%%%%%%%%%%%%%%%%%%%%%%%%%%%
%Top-Matter
%%%%%%%%%%%%%%%%%%%%%%%%%%%%%%%
%%%%%%%%%%%%%%%%%    private header  %%%%%%%%%%%%%%%%%%%%
\vspace*{-1cm}
%
%\hspace*{\fill} Steklov preprints xxx
%
%
\vspace*{2cm}
%%%%%%%%%%%%%%%%% end private header  %%%%%%%%%%%%%%%%%%%%

%\large{
\title[Lax operator algebras and integrable systems]
{%Градуировки на полупростых алгебрах Ли и иерархии лаксовых уравнений
Lax operator algebras and integrable systems}
\author[O.K.Sheinman]{O.K.Sheinman}
\thanks{This work is supported by the Russian Science Foundation under grant
14-50-00005.
%by the program "Fundamental Problems of Nonlinear Dynamics" of the
%Russian Academy of Sciences, and by the RFBR projects 14-01-00012-a, %13-01-12469 ofi-m2
}
%\address[ Oleg K. Sheinman]{Steklov Mathematical Institute, ul. Gubkina, 8,
%Moscow 119991, Russia, and Independent University of Moscow,
%Bolshoi Vlasievskii per. 11, Moscow, Russia}
%\email{sheinman@mi.ras.ru}
%\dedicatory{}

\begin{abstract}
A new class of infinite-dimensional Lie algebras given a name of Lax operator algebras, and the related unifying approach to finite-dimensional integrable systems with spectral parameter on a Riemann surface, such as Calogero--Moser and Hitchin systems, are presented. In particular, our approach includes (the non-twisted) Kac--Moody algebras and integrable systems with rational spectral parameter. The presentation is based on the quite simple ideas related to gradings of semisimple Lie algebras, and their interaction with the Riemann--Roch theorem. The basic properties of the Lax operator algebras, as well as the basic facts of the theory of integrable systems in question, are treated (and proven) from this general point of  view, in particular existence of commutative hierarchies and their Hamiltonian properties. We conclude with an application of the Lax operator algebras to prequantization of finite-dimensional integrable systems.

\end{abstract}
%\subjclass{17B66,
%17B67, 14H10, 14H15, 14H55,  30F30, 81R10, 81T40} \keywords{
%Current algebra, Lax operator algebra, graded semi-simple Lie algebra}
%\date{\today}
\maketitle
\tableofcontents
%%%%%%%%%%%%%%%%%%%%%%%%%%%%%%%%%%%%%%%%%%%%%%%%%%%%%%%%%

%%%%%%%%%%%%%%%%%%%%%%%%%%%%%%%%%%%%%%%%
%%%%%%%%%%%%%%%%%   Introduction
%%%%%%%%%%%%%%%%%%%%%%%%%%%%%%%%%%%%%%%%
\section{Introduction}\label{S:intro}

Methods of algebraic geometry, and methods of Lie group and algebra theory are two classical approaches in the theory of integrable systems originating in the 70th of the 20th century \cite{Nov_KdV1,Met_alg_g,rKNU,OPer_Inv}. A broad and, as people say, horizonless literature is devoted to them. However, by our opinion, the subject is sufficiently well characterized by the following fundamental surveys and monographs: \cite{ZMNP,IntSys1,AsPer,OPer,IntSys2,R_S_TSh,Dr_Soc}. In this work we will focus on finite-dimensional integrable systems.

The interrelation between algebro-geometric and group theory approaches in the theory of finite-dimensional integrable systems has been brightly manifested in creation of the Hitchin systems \cite{Hit}. It is based on the application of holomorphic bundles on Riemann surfaces with a simple Lie group as a structure group, and finally on the remarkable interrelation between the Riemann--Roch theorem and the structure of semi-simple Lie algebras. Applying of holomorphic $G$-bundles on elliptic curves led to many results on Calogero--Moser systems (for example, see \cite{Levin_Ol_Zot} and references therein).

In our work, we would like to elucidate a new direction related to integrable systems with spectral parameter on a Riemann surface \cite{Klax,Zakhar_Mikh}, and emergence of a new class of infinite-dimensional Lie algebras of algebro-geometric nature called \emph{Lax operator algebras} \cite{KSlax,Sh_DGr,Lax_Grad,Sh_DAN_LOA&gr}.

In \cite{Zakhar_Mikh}, it was shown that a naive generalization of Lax pairs with a spectral parameter onto the case when the last belongs to a Riemann surface results in overdetermined systems, as it follows from the Riemann--Roch theorem. In \cite{Klax}, I.Krichever got over the obstruction. In his work, the finite-zone integration technique previously elaborated for the Kadomtsev--Petviashvili equation \cite{rKNU}, had been applied to  description of a certain class of Lax equations including Hitchin equations, some their generalizations, and also zero curvature systems with one spatial variable. The description descended to pointing out the form of the Laurent expansions of Lax equations in terms of Tyurin parameters of holomorphic vector bundles on curves. For the introduced class of equations, Krichever has constructed hierarchies of commuting flows, proved they are Hamiltonian, and developed the methods of algebro-geometric integration, namely method of Baker--Akhieser functions, and method of deformation of Tyurin parameters. In this work I.Krichever actually has formulated a certain program of creating a general theory of finite-dimensional integrable systems, which passed through several generalizations already.

In \cite{KSlax} I.Krichever and the author have shown that the Lax operators introduced in \cite{Klax}, form a Lie algebra generalizing the loop algebra for $\gl(n)$, and constructed its analogs for the orthogonal and symplectic algebras. It was the main point also here that the form of the Laurent expansions of the Lax operators in terms of Tyurin parameters was invariant with respect to the commutator, and its codimension in the space of all formal $\g$-valued Laurent expansions was equal to $k\dim\g$, $k$ being the pole order. It was shown that the obtained infinite-dimensional Lie algebras named \emph{Lax operator algebras} by us, possess an almost graded structure and central extensions similar to those for loop algebras. In the coming next works (see a survey and references in \cite{Sh_DGr}) the author proved the existence theorem for commuting flows, and the Hamiltonian property for them, in case the Lax operators belong to those Lie algebras. The proofs were given for the classic Lie algebras and depended on their type, though a similarity between them was obvious. It drows attention that all the integrability theorems finally appeals to the same relations on Tyurin parameters that implied closeness of the space of the Lax operators with respect to the commutator. These results, as well as some others, like a relation of the Lax equations in question to conformal field theory, has been summarized in the monograph \cite{Sh_DGr}.

In \cite{Sh_G2,Sh_G2_AMS}, there were constructed Lax operator algebras for the Lie algebra of type $G_2$, again in terms of Tyurin parameters. Uniqueness of central extensions, and Lax operator algebras for an arbitrary number of marked points on the Riemann surface have been studied in \cite{SSlax} and \cite{Sch_Laxmulti}, respectively.

In \cite{Sh_DAN_LOA&gr,Lax_Grad}, there was proposed a construction of Lax operator algebras in terms of roots, suitable for arbitrary semi-simple Lie algebras. It came true due to discussions with E.B.Vinberg who associated the Laurent expansions in the existing examples of Lax operator algebras  with $\Z$-gradings of the corresponding finite-dimensional Lie algebras (see relation \refE{ga_expan} below). Starting from this observation, in \cite{Sh_DAN_LOA&gr,Lax_Grad} there was constructed the Lax operator algebra, its almost graded structure and central extensions corresponding to an arbitrary semi-simple Lie algebra and its $\Z$-grading. It is shown how retrieve the Tyurin parameters or their analogs by a $\Z$-grading.

In the consequent works \cite{Sh_TMPh_14,Tr_MIAN_15,Sh_MMJ_15}, at the same level of generality which was given by works \cite{Sh_DAN_LOA&gr,Lax_Grad} (i.e. for a Riemann surface of an arbitrary genus, and arbitrary complex semi-simple Lie algebra), there were considered basic question of the theory of finite-dimensional integrable systems, i.e. existence of commutative hierarchies and their Hamiltonian properties.

The present work is devoted to systematic treatment of the above outlined circle of  questions.

In \refS{LaxAlgs} we give a general construction of Lax operator algebras, and establish its correspondence to the construction in terms of Tyurin parameters, previously proposed for classical Lie algebras. We introduce  almost graded structures and describe almost graded central extensions of Lax operator algebras.

In \refS{Hierarchies} we define $M$-operators -- the counterparts of  $L$-operators in Lax pairs. Given a Lax operator, we construct a family of $M$-operators determining the hierarchy of commuting flows related to $L$.

In \refS{Ham_th}, for the systems in question, we construct the Krichever--Phong symplectic structure \cite{Sh_DGr,Klax}, and the family of Hamiltonians in involution corresponding to the above constructed flows. Every Hamiltonian is defined by an invariant polynomial on the Lie algebra, a point of the Riemann surface, and an integer. The same data define $M$-operators constructed in \refS{Hierarchies}. The relation between them is as follows: the main part of an $M$-operator is given by the gradient of the corresponding invariant polynomial. This relation is well-known in the group-theoretic approach to Lax integrable systems \cite{R_S_TSh,Goldman}. Further on, we consider examples in \refS{Ham_th}: the Hitchin and Calogero--Moser systems for classical Lie algebras. It should be noted here that the obtained results are also applicable to many known systems with a rational spectral parameter such as classical gyroscopes, integrable cases of a solid in a flow.

In the concluding \refS{LaxCFT}, following \cite{Sh_L_KZ,Sh_DGr}, we treat the interrelation between integrable systems in question and 2-dimensional conformal field theories. Namely, with every integrable system we associate a unitary representation of the algebra of its classical observables by Knizhnik--Zamolodchikov-type operators on the space of spectral curves.

The author is grateful to I.M.Krichever and M.Schlichenmaier for numerous, long-term discussions, and joint works, and to E.B.Vinberg for discussions on appliance of gradings of semi-simple Lie algebras to the construction of Lax operator algebras. 
%%%%%%%%%%%%%%%%%%%%%%%
\section{Lax operator algebras}\label{S:LaxAlgs}
%%%%%%%%%%%%%%%%%%%%%%%%%%%%%%%%%%%%%%%%
%%%%%%%%%%%%%%%%%   Introduction
%%%%%%%%%%%%%%%%%%%%%%%%%%%%%%%%%%%%%%%%
In addition to the Introduction, let us stress that the current state of the theory of Lax operator algebras by the end of 2013 is summarized in \cite{Sh_DGr}. In that monograph, a construction of Lax operator algebras is given for classical Lie algebras. The presentation was based on Tyurin parameters. Here, we will give a general construction suitable for an arbitrary semi-simple Lie algebra. We introduce almost graded structures and consider central extensions of Lax operator algebras. We show how to retrieve the Tyurin parameters from $\Z$-gradings for classical Lie algebras.

The two basic results of this chapter are given by \refT{almgrad} of  \refSS{constr}, and \refT{central} of \refSS{constr}. The first of them gives a description of the Lie structure, and the almost graded structures of Lax operator algebras, while the second treats construction and classification of their central extensions. In \refSS{Tyupar} we consider a number of examples, in particular classical Lie algebras, and $G_2$, in order to point out the correspondence with the previous approach, in particular,  explain emergence of the Tyurin parameters.

%%%%%%%%%%%%%%%%%%%%%%%%%%%%%%%%%%%%%%%%
\subsection{Current algebra and its almost graded structures} \label{SS:constr}

Let $\g$ be a semi-simple Lie algebra over $\C$, $\h$ be its Cartan subalgebra, and $h\in\h$ be such element that  $p_i=\a_i(h)\in\Z_+$ for every simple root $\a_i$ of $\g$. Let $\g_p=\{ X\in\g\ |\ (\ad h)X=pX \}$, and $k=\max\{p\ |\ \g_p\ne 0\}$. Then the decomposition $\g=\bigoplus\limits_{i=-k}^{k}\g_p$ gives a $\Z$-grading on $\g$. For the theory and classification results on such kind of gradings we refer to \cite{Vin}. Call $k$ a \emph{depth} of the grading. Obviously, $\g_p=\bigoplus\limits_{\substack{\a\in R\\ \a(h)=p}}\g_\a$ where $R$ is the root system of $\g$. Define also the following filtration on $\g$:  $\tilde\g_p=\bigoplus\limits_{q=-k}^p\g_q$. Then $\tilde\g_p\subset\tilde\g_{p+1}$ ($p\ge -k$), $\tilde\g_{-k}=\g_{-k},\ldots,\tilde\g_k=\g$, $\tilde\g_p=\g$, $p>k$.

Let $\Sigma$ be a complex compact Riemann surface with two given finite sets of marked points: $\Pi$ and $\Gamma$. Let $L$ be a meromorphic mapping $\Sigma\to\g$ holomorphic outside the marked points which may have poles of an arbitrary order at the points in $\Pi$, and has the decomposition of the following form at the points in $\Gamma$:
\begin{equation}\label{E:ga_expan}
   L(z)=\sum\limits_{p=-k}^\infty L_pz^p,\ L_p\in\tilde\g_p
\end{equation}
where $z$ is a local coordinate in the neighborhood of a $\ga\in\Gamma$. In general, the grading element $h$ may vary from one to another point of $\Gamma$. For simplicity, we assume that $k$ is the same all over $\Gamma$, though it would be no difference otherwise.

We denote by $\L$ a linear space of all such mappings. Since the relation \refE{ga_expan} is preserved under commutator, $\L$ is a Lie algebra. Below, we fix this important fact as  assertion $1^\circ$ of \refT{almgrad}. The Lie algebra $\L$, its almost graded structures and central extensions are the main subject of the present section. Sometimes we use the notation $\gb$ instead $\L$, in order to stress the relation of this algebra to a certain semi-simple Lie algebra $\g$. We keep also the name of {\it Lax operator algebras} for just constructed current algebras, in order to stress their succession to those in \cite{KSlax,Sh_DGr}. The observation that the Laurent expansions of the elements of Lax operator algebras considered in  \cite{KSlax,Sh_DGr}, have the form (2.1) at the points in $\Gamma$ is due to E.B.Vinberg \footnote{Private communication}.
\begin{definition}
Given a Lie algebra $\L$, by its {\it almost graded structure} we mean a system of its finite-dimensional subspaces $\L_m$, and two nonnegative integers $R$, $S$ such that $\L =\bigoplus\limits_{m=-\infty}^\infty \L_m$, and $[\L_m,\L_n]\subseteq\bigoplus\limits_{r=m+n-R}^{m+n+S}\L_r$ ($R$, $S$ don't depend on $m$, $n$).
\end{definition}
The knowledge of almost graded structure on associative and Lie algebras is introduced by I.M.Krichever and S.P.Novikov in \cite{KNFa}. For Lax operator algebras, in the case when $\Pi$ consists of two points (we distinguish the two-point and the many point cases depending on the number of elements in $\Pi$), it was investigated in \cite{KSlax}. Under more general assumptions, almost graded structures for Krichever--Novikov algebras, and for Lax operator algebras as well, were considered by M.Schlichenmaier \cite{SLa,SLb,Sch_Laxmulti}.

The above defined Lie algebra $\L$ admits several almost graded structures. To define such a structure we give a splitting of $\Pi$ to two disjoint subsets: $\Pi=\{ P_i\ |\ i=1,\ldots,N \}\cup \{ Q_j\ |\ j=1,\ldots,M \}$. Following \cite{SLa,SLb,Sch_Laxmulti}, for every $m\in\Z$ we consider three divisors:
\begin{equation}\label{E:3div}
 D^P_m=-m\sum\limits_{i=1}^NP_i,\ D^Q_m=\sum_{j=1}^M(a_jm+b_{m,j})\,Q_j,\ D^\Gamma=k\sum_{\ga\in\Gamma}\ga ,
\end{equation}
where $a_j,b_{m,j}\in\Q$, $a_j>0$, and $a_jm+b_{m,j}$ is an icreasing $\Z$ -valued function of $m$, and also there exists $B\in\R_+$ such that
$|b_{m,j}|\le B, \forall m\in\Z, j=1,\ldots,M$\label{bjmbound}.
We require that
\begin{equation}\label{E:condd}
\sum_{j=1}^Ma_j=N,\qquad
\sum_{i=j}^Mb_{m,j}=N+g-1.
\end{equation}
Let
\begin{equation}\label{E:Dm}
  D_m=D_m^P+D_m^Q+D^\Gamma
\end{equation}
and
\begin{equation}\label{E:homos}
   \L_m=\{L\in\L\ |(L)+D_m\ge 0\},
\end{equation}
where $(L)$ is the divisor of a $\g$-valued function $L$.
With relation to the definition of $(L)$ we stress that by order of a meromorphic \emph{vector-valued} function at a point we call the minimal order of its entries.

We call $\L_m$ {\it (homogeneous, grading) subspace of degree $m$} of the Lie algebra $\L$.
\begin{theorem}\label{T:almgrad}
\begin{itemize}
  \item[]
  \item[$1^\circ $] The subspace $\L$ is closed with respect to the point-wise commutator $[L,L'](P)=[L(P),L'(P)]$ $(P\in\Sigma)$.
  \item[$2^\circ$] $\dim\,\L_m=N\dim\,\g$;
  \item[$3^\circ$] $\L =\bigoplus\limits_{m=-\infty}^\infty \L_m$ ;
  \item[$4^\circ$] $[\L_m,\L_n]\subseteq\bigoplus\limits_{r=m+n}^{m+n+ S}\L_r$ where $S$ is a positive integer depending on $N$, $M$, and
$g$, and independent of $m$, $n$.
\end{itemize}
\end{theorem}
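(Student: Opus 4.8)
The plan is to prove the four assertions in the order $1^\circ,2^\circ,3^\circ,4^\circ$, since each later part reuses the structure established in the earlier ones. For $1^\circ$ I would argue locally: away from $\Gamma\cup\Pi$ the pointwise bracket of two holomorphic maps is holomorphic, and at $\Pi$ no condition is imposed, so only stability of \refE{ga_expan} at $\Gamma$ must be checked. Writing $L=\sum_{p\ge -k}L_pz^p$ and $L'=\sum_{q\ge -k}L'_qz^q$ with $L_p\in\tilde\g_p$, $L'_q\in\tilde\g_q$, the coefficient of $z^r$ in $[L,L']$ is $\sum_{p+q=r}[L_p,L'_q]$. Since $[\g_a,\g_b]\subseteq\g_{a+b}$ the filtration satisfies $[\tilde\g_p,\tilde\g_q]\subseteq\tilde\g_{p+q}$, so each bracket lies in $\tilde\g_r$; moreover $\tilde\g_r=0$ for $r<-k$ because the grading has depth $k$. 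Hence $[L,L']$ again has the form \refE{ga_expan}. The point worth stressing is that \emph{no cancellation is needed}: the vanishing of the terms of order below $-k$ is forced by $\tilde\g_r=0$.

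For $2^\circ$ I would realize $\L$ as the global sections of a coherent sheaf and turn the count into Riemann--Roch. Let $\mathcal{S}\subset\g\otimes\O(D^\Gamma)$ be the subsheaf whose stalk at $\ga\in\Gamma$ consists of expansions $\sum_{p\ge -k}L_pz^p$ with $L_p\in\tilde\g_p$; it is an $\O$-submodule (closed under multiplication by $z$ because $\tilde\g_{p-1}\subseteq\tilde\g_p$), of rank $\dim\g$, with torsion quotient $\bigoplus_{p=-k}^{k-1}\g/\tilde\g_p$ at each point of $\Gamma$. The length of this quotient is
\[
 \sum_{p=-k}^{k-1}\dim(\g/\tilde\g_p)=\sum_{q=-k}^{k}(q+k)\dim\g_q=\sum_{q}q\dim\g_q+k\dim\g=k\dim\g ,
\]
using $\sum_q q\dim\g_q=0$, which holds by the symmetry $\dim\g_q=\dim\g_{-q}$ of a grading by $\ad h$. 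Consequently $\deg\mathcal{S}=(\dim\g)\,k|\Gamma|-k(\dim\g)|\Gamma|=0$, and $\L_m=H^0(\Sigma,\mathcal{S}\otimes\O(D_m^P+D_m^Q))$. By \refE{condd} the twisting line bundle has degree $\deg(D_m^P+D_m^Q)=-mN+mN+(N+g-1)=N+g-1$, \emph{independently of $m$}, so Riemann--Roch for a rank-$\dim\g$ sheaf gives Euler characteristic $(\dim\g)(N+g-1)+(\dim\g)(1-g)=N\dim\g$. The equality $\dim\L_m=N\dim\g$ then follows once $H^1(\mathcal{S}\otimes\O(D_m^P+D_m^Q))=0$; this non-speciality is exactly where the genericity of the marked points and the increasing/boundedness conditions on $a_jm+b_{m,j}$ enter, and I would establish it following Schlichenmaier \cite{SLa,SLb,Sch_Laxmulti}.

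For $3^\circ$ the two tasks are directness and exhaustiveness of $\sum_m\L_m$. Directness I would obtain by separating orders at the in-points $P_i$: an element of $\L_m$ vanishes to order $\ge m$ at every $P_i$, so in a relation $\sum_m C_m=0$ the term of lowest degree cannot be cancelled by the higher ones, forcing all $C_m=0$; this is the Krichever--Novikov order-separation mechanism, available because the $D_m$ form a strictly increasing family at $P$ and $Q$. Exhaustiveness, together with the exact value of $\dim\L_m$, I would secure by a squeeze over a window $-T\le m\le T$: directness gives $\dim\bigoplus_{|m|\le T}\L_m=\sum_{|m|\le T}\dim\L_m$, while this space sits inside the sections of a single explicitly bounded divisor whose dimension grows like $2TN\dim\g$; comparing with the lower bound $\dim\L_m\ge N\dim\g$ coming from the Euler characteristic forces equality for all $m$ and simultaneously yields $\L=\bigoplus_m\L_m$. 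This interlocking of the Riemann--Roch count with the pole/zero bookkeeping at the three types of points ($P_i$, $Q_j$, $\Gamma$) is the step I expect to be the main obstacle, since it is where $2^\circ$ and $3^\circ$ must be made fully consistent.

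Finally, $4^\circ$ follows from $1^\circ$ and $3^\circ$ by comparing divisors. For $L\in\L_m$ and $L'\in\L_n$ one has $([L,L'])\ge (L)+(L')\ge-(D_m+D_n)$, and a direct computation from \refE{3div}--\refE{Dm} gives $D_m+D_n=D_{m+n}+E$ with $E=\sum_j(b_{m,j}+b_{n,j}-b_{m+n,j})Q_j+D^\Gamma$, an effective divisor of bounded degree (bounded because $|b_{m,j}|\le B$ and $D^\Gamma$ is fixed). Writing $[L,L']=\sum_r C_r$ via $3^\circ$, the order $\ge m+n$ of $[L,L']$ at each $P_i$ forces $C_r=0$ for $r<m+n$, while the bounded excess $E$ at the $Q_j$ and at $\Gamma$ (absorbed as $D_r$ grows at the rate $\sum_j a_j=N$) forces $C_r=0$ for $r>m+n+S$, where $S$ depends only on $B$, $M$, $k|\Gamma|$ and $g$, i.e. on $N$, $M$, $g$. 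This yields $[\L_m,\L_n]\subseteq\bigoplus_{r=m+n}^{m+n+S}\L_r$ and completes the proof.
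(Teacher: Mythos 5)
Your proof is correct and, on the one assertion the paper actually argues in detail --- $2^\circ$ --- it is the same computation: a Riemann--Roch count (for you, the Euler characteristic of a twisted sheaf whose degree is independent of $m$) minus a codimension of $k\dim\g$ per point of $\Gamma$, obtained in both cases from the symmetry $\dim\g_q=\dim\g_{-q}$ of the grading; your bookkeeping via $\sum_q q\dim\g_q=0$ and the paper's pairing $\codim_{\,\g}\tilde\g_p+\codim_{\,\g}\tilde\g_{-p-1}=\dim\g$ are trivially equivalent. Where you diverge is in scope rather than method: the paper declares $1^\circ$ obvious and refers $3^\circ$ and $4^\circ$ entirely to Schlichenmaier's multipoint paper, whereas you supply the filtration inclusion $[\tilde\g_p,\tilde\g_q]\subseteq\tilde\g_{p+q}$ (correctly noting that vanishing below order $-k$ is forced by $\tilde\g_r=0$, with no cancellation) and sketch the Krichever--Novikov order-separation and divisor-comparison arguments for the almost-grading. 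Your sheaf-theoretic packaging of $2^\circ$ buys a cleaner statement of exactly where genericity enters (the vanishing of $H^1$), at the modest cost of checking that the conditions at $\Gamma$ cut out an $\O$-submodule; the paper's direct subtraction of conditions from $\dim L(D_m)$ is more elementary but leaves the independence of those conditions and the nonspeciality of $D_m$ implicit. One small remark on $4^\circ$: since by $1^\circ$ the commutator again has pole order at most $k$ along $\Gamma$, the extra copy of $D^\Gamma$ in your excess divisor $E$ is harmless, and the genuine excess --- hence the constant $S$ --- comes only from the bounded $b$-coefficients at the points $Q_j$.
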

\begin{proof}
Proof of $1^\circ $ is obvious as it has been noted above. For the proofs of $3^\circ$, $4^\circ$ we refer to \cite{Sch_Laxmulti} (where they are given for classical Lie algebras but are true in the set-up of the present work in fact). This is only assertion $2^\circ$ which requires any special proof, and we will give it now.

Let $L(D_m)=\{L\ |\ (L)+D_m\ge 0\}$ where $L:\Sigma\to\g$ still is a meromorphic function but the requirement $L\in\L$ is relaxed. Let $l_m=\dim L(D_m)$. In case the points in $\Pi$, $\Gamma$ are in generic position the value $l_m$ is given by the Riemann--Roch theorem:
\[
  l_m=(\dim\g)(\deg\, D_m-g+1).
\]
Observe that
\[
  \deg D_m=-mN+m\sum_{i=1}^Ma_i+\sum_{i=1}^Mb_{m,i}+k|\Gamma|
\]
where $|\Gamma|$ denotes cardinality of $\Gamma$.
By \refE{condd} we obtain
$
  \deg D_m=N+g-1+k|\Gamma|
$.
Hence
\[
 l_m=(\dim\g)(N+k|\Gamma|).
\]

By definition, $\L_m$ is a subspace in $L(D_m)$ distinguished by means the conditions \refE{ga_expan} which must hold at every $\ga\in\Gamma$. For any $\ga\in\Gamma$ the codimension of expansions of the form \refE{ga_expan}, given at $\ga$, in the space of all $\g$-valued Laurent expansions in $z$ starting with  $z^{-k}$ can be computed as follows:
$c_\ga=\sum\limits_{p=-k}^{k-1} \codim_{\,\g}\,\tilde\g_p$ (taking into account that $\codim_{\,\g}\tilde\g_p=0$ for $p\ge k$). The grading $\g=\bigoplus\limits_{p=-k}^k\g_p$ is symmetric in the sense  that $\dim\g_p=\dim\g_{-p}$ \cite{Vin} (moreover, $\g_p$ and $\g_{-p}$ are contragredient $\g_0$-modules). By definition, $\codim_{\,\g}\tilde\g_p=\sum\limits_{q=p+1}^k\dim\g_q$, by symmetry $\sum\limits_{q=p+1}^k\dim\g_q=\dim\tilde\g_{-p-1}$, hence $\codim_{\,\g}\,\tilde\g_p+\codim_{\,\g}\tilde\g_{-p-1}=\dim\g$. Obviously, $c_\ga=\sum\limits_{p=-k}^{-1} (\codim_{\,\g}\,\tilde\g_p+\codim_{\,\g}\tilde\g_{-p-1})$.   Hence $c_\ga=k\dim\g$.

Further on, $\codim_{L(D_m)}\L_m=\sum\limits_{\ga\in\Gamma} c_\ga=k(\dim\g)|\Gamma|$, and finally
\[
 \dim\L_m=l_m-k(\dim\g)|\Gamma|=N\dim\g .
\]
\end{proof}
This computation gives one more example of non-trivial interaction between the Riemann--Roch theorem and the structure of semi-simple Lie algebras (the first one was mentioned in the Introduction). Its idea can be tracked back to \cite{KN_2point} (via \cite{Sch_Laxmulti,Sh_DGr,KSlax}) where a similar computation has been carried out with no relation to any Lie algebra theory.
%%%%%%%%%%%%%%%%%%%%%%%%%%%%%%%%%%%%%%%%%%%%%%%%%%%%%%%%%%%

%%%%%%%%%%%%%%%%%%%%%%%%%%%%%%%%%%%%%%%%
%%%%%%%%%%%%%%%%%%%%%%%%%%%%%%%%%%%%%%%%
\subsection{Central extensions}
\label{SS:cexst}
In this section, we will construct almost graded central extensions of the Lie algebra $\L$.

The \emph{central extension} of a Lie algebra $\L$ is a short exact sequence of Lie algebras
\begin{equation}
\begin{CD}
0@>>>\C@>i>>\widehat\L@>p>>\L@>>>0
\end{CD}
\end{equation}
where ${\rm im}(i)=\ker(p)$ is a center of $\widehat\L$. The Lie algebra $\widehat\L$ itself is also often called a central extension of the Lie algebra $\L$.

Two central extensions $\widehat\L$ and $\widehat\L'$  are called equivalent if there exists an isomorphism $e$ ({\it equivalence}) such that the following diagram is commutative:
\begin{equation}
\xymatrix{ &   &   \widehat\L\ar[dr]^p \ar[dd] & &  \\
0\ar[r] & \C\ar[ur]^i\ar[dr]_{i'} &             &   \L\ar[r] & 0. \\
  &  &  \widehat\L'\ar[ur]_{p'} \ar[uu]_{e} &  &
}
\end{equation}

By a \emph{2-cocycle} on $\L$ we mean a bilinear skew-symmetric form $\eta$ satisfying the relation
\begin{equation}\label{E:cocycle}
\eta([f,g],h)+\eta([g,h],f)+\eta([h,f],g)=0,\quad\forall f,g,h\in\L.
\end{equation}
If $\eta(f,g)=\phi([f,g])$ where $\phi\in\L^*$ the $\eta$ is called a
\emph{coboundary} (and denoted by $\d\phi$).
If $\eta-\eta'=\d\phi$ then $\eta$ and $\eta'$ are called {\it
cohomological}.

Every 2-cocycle $\eta$ on $\L$ gives a central extension, and vise versa, by means the relations
$\widehat\L=\L\oplus\C t$ and
\begin{equation}\label{E:centext}
[f,g] =[f,g]_{{}_\L}+\eta(f,g)\cdot t,\quad [t,\L]=0
\end{equation}
where $f,g\in\L$, $t$ is a formal generator of the 1-dimensional space $\C t$, $[\cdot,\cdot]$ is a commutator on $\widehat\L$, and $[\cdot,\cdot]_{{}_\L}$ on $\L$. Two central extensions are equivalent if, and only if their defining  cocycles are cohomological. Thus the equivalence classes of central extensions are in a one-to-one correspondence with the elements of the space $\H^2(\L,\C)$.
\begin{example}
Let $\V=\C^{2n}$ be thought as a commutative Lie algebra, and let
$\eta$ be a symplectic form on $\V$. Then $\widehat\V$ is the Heisenberg algebra.
\end{example}

A central extension is called almost graded if it inherits the almost graded structure from the original Lie algebra while central elements are relegated to the degree~$0$ subspace.

Almost graded central extensions are given by \emph{local cocycles}.
We remind from \cite{KNFa,KSlax,SSlax,Sh_DGr,Sch_Laxmulti} that a 2-cocycle $\eta$ on $\L$ is called local if there exists $M\in\Z_+$ such that for every $m,n\in\Z$, $|m+n|>M$, and every $L\in\L_m$, $L'\in\L_n$ one has $\eta(L,L')=0$.

Let $\langle\cdot ,\cdot\rangle$ denote an invariant symmetric bilinear form on $\g$. In abuse of notation, we use the same symbol to denote a natural extension of this form to $\g$-valued functions and $1$-forms on $\Sigma$. For example, for $L,L'\in{\mathcal L}$ we denote by $\langle L,L'\rangle$ the scalar function on $\Sigma$ taking value $\langle L(P),L'(P)\rangle$ at an arbitrary $P\in\Sigma$.

Finally, let ${\omega}$ be a  $\g_0$-valued 1-form on $\Sigma$ having the expansion of the form
\[
          {\omega}(z)=\left(\frac{h}{z}+{\omega}_0+\ldots\right)dz
\]
at every $\ga\in\Gamma$, where $h\in\h$ is the element giving the grading of $\g$ at $\ga$.

It is our main purpose in this section to give a proof of the following theorem.
\begin{theorem}\label{T:central}
\begin{itemize}
  \item[]
  \item[$1^\circ $] For any $L,L'\in{\mathcal L}$ the 1-form $\langle L,(d - \ad{\omega})L'\rangle$ is holomorphic outside the points
      $\{ P_i\ |\ i=1,\ldots,N \}$ and $\{ Q_j\ |\ j=1,\ldots,M \}$.
  \item[$2^\circ$] For any invariant symmetric bilinear form $\langle\cdot , \cdot\rangle$ on~$\g$, the relation \[\eta(L,L')=\sum\limits_{i=1}^N\res_{P_i}\langle L,(d - \ad{\omega})L'\rangle\] gives a local cocycle on ${\mathcal L}$.
  \item[$3^\circ$] Up to equivalence, almost graded central extensions of the Lie algebra ${\mathcal L}$ are in one-to-one correspondence with invariant symmetric bilinear forms on~$\g$. In partcular, if $\g$ is simple then the central extension given by a cocycle $\eta$ is unique in the class of almost graded central extensions up to equivalence and normalization of the central generator.
\end{itemize}
\end{theorem}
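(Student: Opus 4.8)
The plan is to handle $1^\circ$ and $2^\circ$ by direct local analysis at the points of $\Gamma$, $\{P_i\}$ and $\{Q_j\}$, powered by a single mechanism, and to obtain $3^\circ$ by a cohomological reduction to a local model. The whole proof rests on reformulating \refE{ga_expan} grade-by-grade. Writing $\pi_s$ for the projection $\g\to\g_s$, condition \refE{ga_expan} at $\ga\in\Gamma$ is equivalent to $\ord_z\pi_s(L)\ge s$ for every $s$, because $L_p\in\tilde\g_p=\bigoplus_{q\le p}\g_q$ forces the $\g_s$-component of $L_p$ to vanish for $p<s$. First I would prove $1^\circ$: since $\langle\cdot,\cdot\rangle$ is invariant, $\langle\g_a,\g_b\rangle=0$ unless $a+b=0$, and since $\omega$ is $\g_0$-valued the operator $\nabla:=d-\ad\omega$ preserves each $\g_s$-component, so $\langle L,\nabla L'\rangle=\sum_s\langle\pi_{-s}(L),\nabla\pi_s(L')\rangle$. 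The decisive point is that because the leading term of $\omega$ at $\ga$ is $h\,dz/z$ and $[h,X]=sX$ for $X\in\g_s$, the operator $\nabla$ preserves $z^s\g_s[[z]]$: the pole $s\,c_s\,z^{s-1}$ produced by $d$ on the lowest term $c_s z^s$ is exactly cancelled by the $s\,c_s\,z^{s-1}$ coming from $-\ad(h\,dz/z)$. Hence $\nabla\pi_s(L')$ still has order $\ge s$ while $\pi_{-s}(L)$ has order $\ge -s$, so each summand has order $\ge 0$ and $\langle L,\nabla L'\rangle$ is holomorphic at every $\ga\in\Gamma$.

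For $2^\circ$, bilinearity is clear. Skew-symmetry follows from the compatibility $d\langle f,g\rangle=\langle\nabla f,g\rangle+\langle f,\nabla g\rangle$ (the $\ad\omega$ terms cancel by invariance), which gives $\langle L,\nabla L'\rangle+\langle L',\nabla L\rangle=d\langle L,L'\rangle$; since residues of exact forms vanish, $\eta(L,L')+\eta(L',L)=0$. The cocycle identity \refE{cocycle} rests on two structural facts: $\nabla$ is a derivation of the bracket, $\nabla[f,g]=[\nabla f,g]+[f,\nabla g]$ (immediate from Jacobi), and the compatibility above. Together these yield the pointwise identity $\langle[f,g],\nabla h\rangle+\langle[g,h],\nabla f\rangle+\langle[h,f],\nabla g\rangle=d\langle f,[g,h]\rangle$, again exact, so summing residues over the $P_i$ gives \refE{cocycle}. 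For locality I would invoke $1^\circ$: $\langle L,\nabla L'\rangle$ has poles only at the $P_i$ and $Q_j$, so by the residue theorem $\sum_i\res_{P_i}=-\sum_j\res_{Q_j}$. For $L\in\L_m$, $L'\in\L_n$ the order bounds from \refE{homos}, \refE{3div}, \refE{condd} give $\ord_{P_i}\langle L,\nabla L'\rangle\ge m+n-1$, so all $P_i$-residues vanish once $m+n\ge 1$, while $\ord_{Q_j}\langle L,\nabla L'\rangle\ge -a_j(m+n)-2B-1$, so all $Q_j$-residues vanish once $m+n$ is sufficiently negative (using $|b_{m,j}|\le B$, $a_j>0$). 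Thus $\eta$ vanishes outside a bounded band of $m+n$, which is locality.

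For $3^\circ$, the general theory of \refSS{cexst} identifies equivalence classes of central extensions with $\H^2(\L,\C)$ and almost graded ones with classes of local cocycles, so the claim reduces to showing that the assignment (invariant symmetric form on $\g$) $\mapsto$ (class of $\eta$) is a bijection onto local classes. Well-definedness is $2^\circ$, and the class is independent of $\omega$: replacing $\omega$ by $\omega+\mu$ with $\mu$ a holomorphic $\g_0$-valued $1$-form alters $\eta$ by the coboundary $\delta\phi$ with $\phi(X)=\sum_i\res_{P_i}\langle\mu,X\rangle$. Injectivity I would get by evaluating $\eta$ on homogeneous pairs concentrated near one $\ga\in\Gamma$ and reading off the underlying form. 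The substantive part is surjectivity: every local cocycle is cohomologous to some $\eta$. Here I would use the almost graded structure of \refT{almgrad} to reduce a local cocycle, modulo coboundaries, to its values on the finitely many diagonal pairs $\L_m\times\L_{-m}$, then localize at the marked points so the computation becomes the classical classification of local cocycles on the current model $\g\otimes\C((z))$, whose $\H^2$ is spanned by invariant symmetric forms, leaning on \cite{KSlax,SSlax,Sch_Laxmulti}. For simple $\g$ this space is one-dimensional (multiples of the Killing form), giving the stated uniqueness up to equivalence and normalization of the central generator.

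The hard part will be the surjectivity step in $3^\circ$. Whereas $1^\circ$ and $2^\circ$ are residue computations driven entirely by the fact that $\nabla$ respects the $z^s\g_s[[z]]$-filtration, $3^\circ$ genuinely requires computing $\H^2_{\mathrm{loc}}(\L,\C)$. The delicate points are controlling the coboundary freedom across all degrees simultaneously, so that the reduction to the diagonal is legitimate, and matching the resulting finite datum with an invariant symmetric form rather than some larger space; this is exactly where the almost graded bounds of \refT{almgrad} and the symmetry $\dim\g_p=\dim\g_{-p}$ of the grading must be used quantitatively.
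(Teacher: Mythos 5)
Your proposal is correct and follows essentially the same route as the paper: part $1^\circ$ is the paper's local computation at $\ga\in\Gamma$ repackaged grade-by-grade (resting on the same two ingredients, $\langle\g_i,\g_j\rangle=0$ for $i+j\ne 0$ and the cancellation of $d$ against $\ad(h\,dz/z)$), the locality argument in $2^\circ$ is the paper's residue-theorem-plus-divisor-estimate argument (the paper additionally allows $\omega$ to have poles at the $P_i$ and $Q_j$, which only shifts your order bounds by constants and does not affect the conclusion), and $3^\circ$ defers to the same references \cite{SSlax,Sch_Laxmulti}. Your only additions are the explicit checks of skew-symmetry and the cocycle identity, which the paper treats as standard and omits.
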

\begin{proof}[Proof of \refT{central},\,$1^\circ$]
In the neighborhood of a point $\ga\in\Gamma$, let
\[
   L=\sum_{p\ge -k} L_pz^p, \quad L'=\sum_{q\ge -k} L'_qz^q, \quad L_p\in{\tilde\g_p},\ L_q\in{\tilde\g_q}
\]
where $\tilde\g_p\subset\g$ be a filtration subspace, i.e. $\tilde\g_p=\bigoplus\limits_{s\le p}\g_s$, and $\g_s$'s are grading subspaces. Then
\[
  dL'=\sum_{q\ge -k} qL'_qz^{q-1}\, dz,
\]
and in the neighborhood of $\ga$
\begin{equation}\label{E:stand}
   \langle L,dL'\rangle=\sum_{p,q\ge -k} q\langle L_p,L'_q\rangle z^{p+q-1}\, dz.
\end{equation}
Observe that $(\ad h)L_p=pL_p+\tilde L_{p-1}$, где $\tilde L_{p-1}\in\tilde\g_{p-1}$. For this reason
\[
\begin{split}
  (\ad{\omega})L' &=\ad(hz^{-1}+{\omega}_0+\ldots)\sum_{q\ge -k} L'_qz^q\, dz= \\
  &=\sum_{q\ge -k}(qL'_q+\tilde L_{q-1})z^{q-1}\, dz +\sum_{\substack{q\ge -k\\l\ge 0\hphantom{ii}}} L_{q,l}z^{q+l}\, dz
\end{split}
\]
where $L_{q,l}=(\ad{\omega}_l)L_q'\in\tilde\g_q$ (since ${\omega}_l\in\g_0$). The summands $\tilde L_{q-1}z^{q-1}$ of the first sum have the same form as the summands of the second sum (with $l=0$). By abuse of notation, we omit them regarding them to as having been shifted into the second sum (where we don't change the notation nevertheless):
\begin{equation*}
 (\ad{\omega})L' = \sum_{q\ge -k} qL'_qz^{q-1}\, dz +\sum_{\substack{q\ge -k\\ l\ge 0\hphantom{ii}}} L_{q,l}z^{q+l}\, dz .
\end{equation*}
Therefore
\begin{equation}\label{E:cobound}
 \langle L,(\ad{\omega})L'\rangle = \sum_{p,q\ge -k} q\langle L_p,L'_q\rangle z^{p+q-1}\, dz +\sum_{\substack{p,q\ge -k\\ l\ge 0\hphantom{i,ii}}}\langle L_p, L_{q,l}\rangle z^{p+q+l}\, dz .
\end{equation}
Subtracting \refE{cobound} from \refE{stand} we obtain
\begin{equation}
 \langle L,(d-\ad{\omega})L'\rangle = -\sum_{\substack{p,q\ge -k\\ l\ge 0\hphantom{i,ii}}}\langle L_p, L_{q,l}\rangle z^{p+q+l}\, dz .
\end{equation}
We will show that the last expression is holomorphic in a neighborhood of $z=0$. Assume that $p+q+l < 0$. Since $l\ge 0$, this would imply $p+q < 0$. By definition, $L_p\in\bigoplus\limits_{i\le p}\g_i$, $L_{q,l}\in\bigoplus\limits_{j\le q}\g_j$. Therefore $i+j\le p+q<0$, hence $\langle \g_i,\g_j\rangle = 0$. This implies $\langle L_p,L_{q,l}\rangle =0$.
\end{proof}
\begin{proof}[Proof of \refT{central},$2^\circ$]
Here we will give a proof of the locality of the cocycle.
The technique we use for that became a kind of standard already \cite{KSlax,SSlax,Sh_DGr,Sh_G2}.

As above, let $D_m$ be defined as follows:
\begin{equation}\label{E:Dm}
  D_m=-m\sum\limits_{i=1}^NP_i+\sum_{j=1}^M(a_jm+b_{m,j})\,Q_i+k\sum_{s=1}^K\ga_s.
\end{equation}
Denote the divisor of a function, or a 1-form, by means of the notation of this function (resp., 1-form) enclosed in brackets. Let $L\in{\mathcal L}_m$, $L'\in{\mathcal L}_{m'}$. Then
\[
  (\langle L,dL'\rangle)\ge (m+m'-1)\sum\limits_{i=1}^NP_i-\sum_{j=1}^M(a_j(m+m')+b_{m,j}+b_{m',j}-1)\,Q_j+D_\Gamma ,
\]
where $D_\Gamma$ is a divisor supported on $\Gamma$.

Assume that $m_1^+,\ldots,m_N^+$, $m_1^-,\ldots,m_M^-$ are such integers that $({\omega})\ge \sum\limits_{i=1}^Nm_i^+P_i-\sum_{j=1}^Mm_j^-Q_j-\sum_{s=1}^K\ga_s$. Then
$
  (\langle L,(\ad{\omega})L'\rangle )\ge  \sum\limits_{i=1}^N(m+m'+m_i^+)P_i-\sum_{j=1}^M(a_j(m+m')+b_{m,j}+b_{m',j}+m_j^-)\,Q_j+D'_\Gamma
$
where, by \refT{central},$1^\circ$, $D_\Gamma-D'_\Gamma\ge 0$.

In order the 1-form $\rho=\langle L,(d-\ad{\omega})L'\rangle$ had a nontrivial residue at least at one of the points $P_i$, it is necessary that
\[
  \min_{i=1,\ldots,N} \{ m+m'-1,m+m'+m_i^+\}\le -1,
\]
in other words,
\begin{equation}\label{E:upperb}
  m+m'\le -1-\min_{i=1,\ldots,N} \{ -1,m_i^+\}.
\end{equation}

Further on, it is necessary that $\sum\limits_{i=1}^N \res_{P_i}\rho\ne 0$, otherwise the cocycle $\eta$ is equal to zero. But then $\sum\limits_{j=1}^M \res_{Q_j}\rho\ne 0$ also, hence $\rho$ has a nontrivial residue at least at one of the points~$Q_j$. It implies that
\[
  \max_{j=1,\ldots,M} \{a_j(m+m')+b_{m,j}+b_{m',j}-1 ,a_j(m+m')+b_{m,j}+b_{m',j}+m_j^-\}\ge 1.
\]
Since $b_{j,m}\le B, \forall j,m$ (as required in \refE{3div}), the last inequality implies that
\[
  \max_{j=1,\ldots,M} \{a_j(m+m')+2B-1 ,a_j(m+m')+2B+m_j^-\}\ge 1,
\]
and further on
\[
  \max_{j=1,\ldots,M} \{a_j(m+m')+2B-1 ,a_j(m+m')+2B+\max_{j=1,\ldots,M}m_j^-\}\ge 1,
\]
then
\[
  \max_{j=1,\ldots,M} \{a_j(m+m')\}\ge 1-\max\{  2B-1\, ,\, 2B+\! \max_{j=1,\ldots,M}m_j^- \},
\]
and finally
\begin{equation}\label{E:lowerb}
   m+m' \ge \min_{j=1,\ldots,M}\{ a_j^{-1}(1-\max\{  2B-1\, ,\, 2B+\! \max_{j=1,\ldots,M}m_j^- \})\}.
\end{equation}
By \refE{upperb} and \refE{lowerb} we conclude that $\eta(L,L')$ is a local cocycle.
\end{proof}
\begin{remark}
By $\g$-invariance and symmetry of the bilinear form
\[\sum_{i=1}^N\res_{P_i}\langle L,(\ad{\omega})L'\rangle=-\sum_{i=1}^N\res_{P_i}\langle {\omega},[L,L']\rangle .
\]
Therefore the corresponding part of the cocycle $\eta$ is a coboundary.  Hence \refT{central},$2^\circ$ can be formulated in the following form: the standard cocycle given by the 1-form $\langle L,dL'\rangle$ is local up to a coboundary.
\end{remark}

\begin{proof}[Proof of \refT{central},$3^\circ$] For a proof of uniqueness  in the case when $\g$ is simple, we refer to \cite{SSlax} where it is given for an arbitrary simple Lie algebra $\g$ in the 2-point case ($N=M=1$), and to \cite{Sch_Laxmulti} where it is given for arbitrary $N$ and $M$. The remainder of the assertion $3^\circ$ easily follows from this result.
\end{proof}

%%%%%%%%%%%%%%%%%%%%%%%%%%%%%%%%%%%%%%%%
\subsection{Examples: gradings of classical Lie algebras and $G_2$} \label{SS:ex_grad}
%%%%%%%%%%%%%%%%%%%%%%%%%%%%%%%%%%%%%%%%
Here, we will consider gradings of depth 1 and 2 of classical Lie algebras, and of depth 2 and 3 for $G_2$. In these cases we will reproduce the expansions of Lax operators obtained earlier in \cite{Klax,KSlax,Sh_DGr,Sh_G2}, and then (in \refSS{Tyupar}) their Tyurin parameters. We also will point out new Lax operator algebras arising in these cases. We will focus on the gradings given by simple roots. For a simple root $\a_i$ it is a grading given by the element $h\in\h$ such that $\a_i(h)=1$, and $\a_j(h)=0$ ($j\ne i$). Therefore the grading subspace $\g_p$ is a direct sum of the root subspaces $\g_\a$ such that $\a_i$ is contained in the expansion of $\a$ over simple roots with multiplicity $p$. There is a dual grading  when this multiplicity is set to $-p$. Observe that the depth of the grading given by a simple root, is equal to its multiplicity in the expansion of the highest root. Below, we keep the following conventions: $\g_i\subset\g$ is the eigenspace with the eigenvalue $(-i)$; in the figures, the lines going beyond the matrices, correspond to their medians. For the detailed information on $\Z$-gradings of semi-simple Lie algebras we refer to \cite[Ch.2,\ \S 3.5]{Vin}. Below, $e_1,\ldots,e_n$ denote the elements of the orthonormal basis in the space the root system is embedded in, with respect to the Cartan--Killing form.

\subsubsection{The case of $A_n$}\label{SS:An}

In this case $\g$ has $\left[\frac{n}{2}\right]$ gradings of depth 1 (and has no grading of depth 2 and more, given by simple roots). The grading number $r$ ($1\le r\le\left[\frac{n}{2}\right]$) is given by assigning the degree $-1$ to the simple root $\a_r=e_r-e_{r+1}$. To begin with, consider the grading number $1$.
The block structure
\begin{figure}[h]           % A_n
\begin{picture}(0,0)
\put(165,57){\text{--}\ $\g_0$}
\put(165,74){\text{--}\ $\g_{-1}$}
\put(165,38){\text{--}\ $\g_{1}$}
\put(67,-10){\text{a}}
\end{picture}
  \includegraphics[width=6cm]{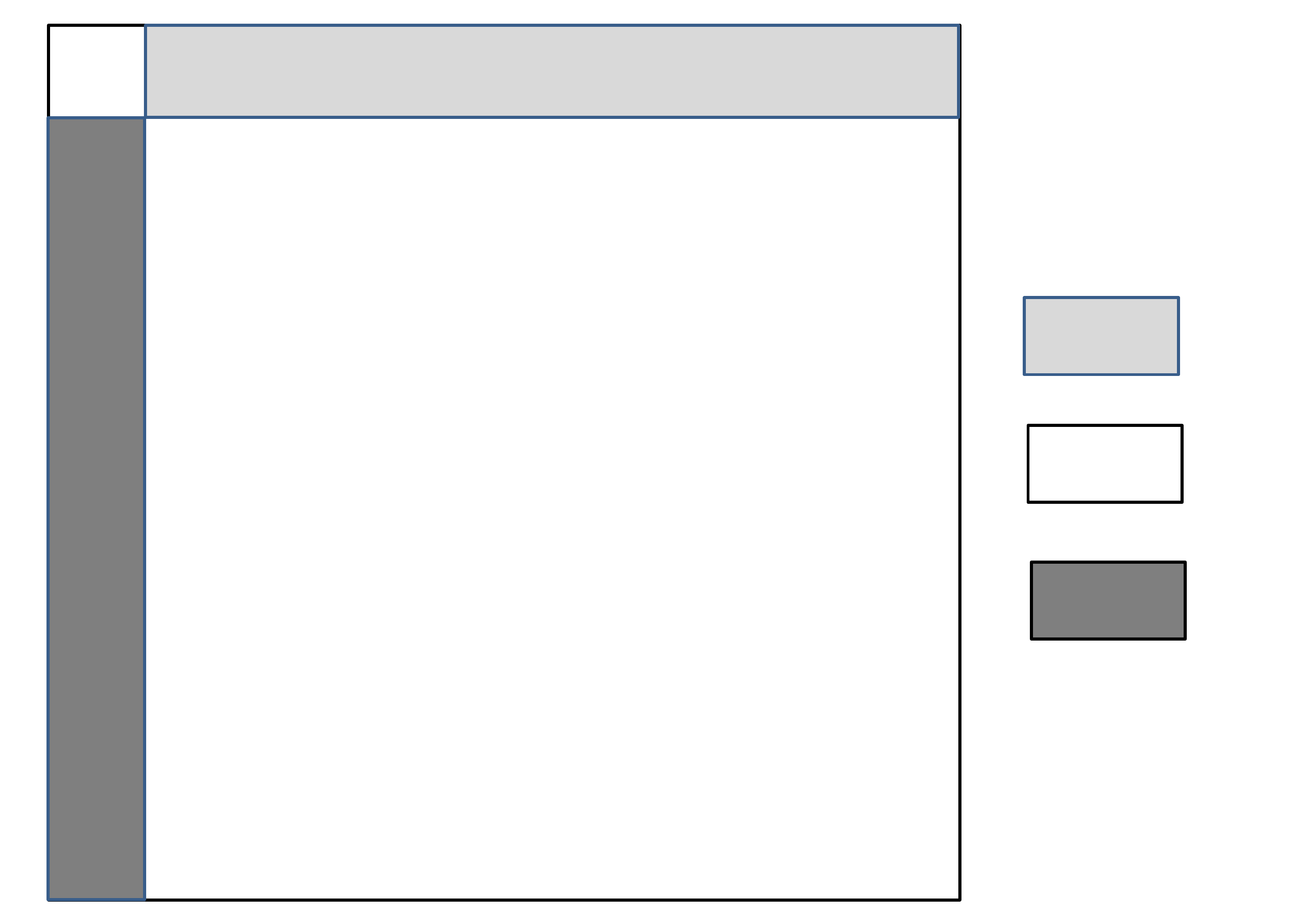}
\begin{picture}(30,0)% роль этой фигуры - создать интервал
                     % и подписи поверх первого рисунка
\end{picture}
\includegraphics[width=6cm]{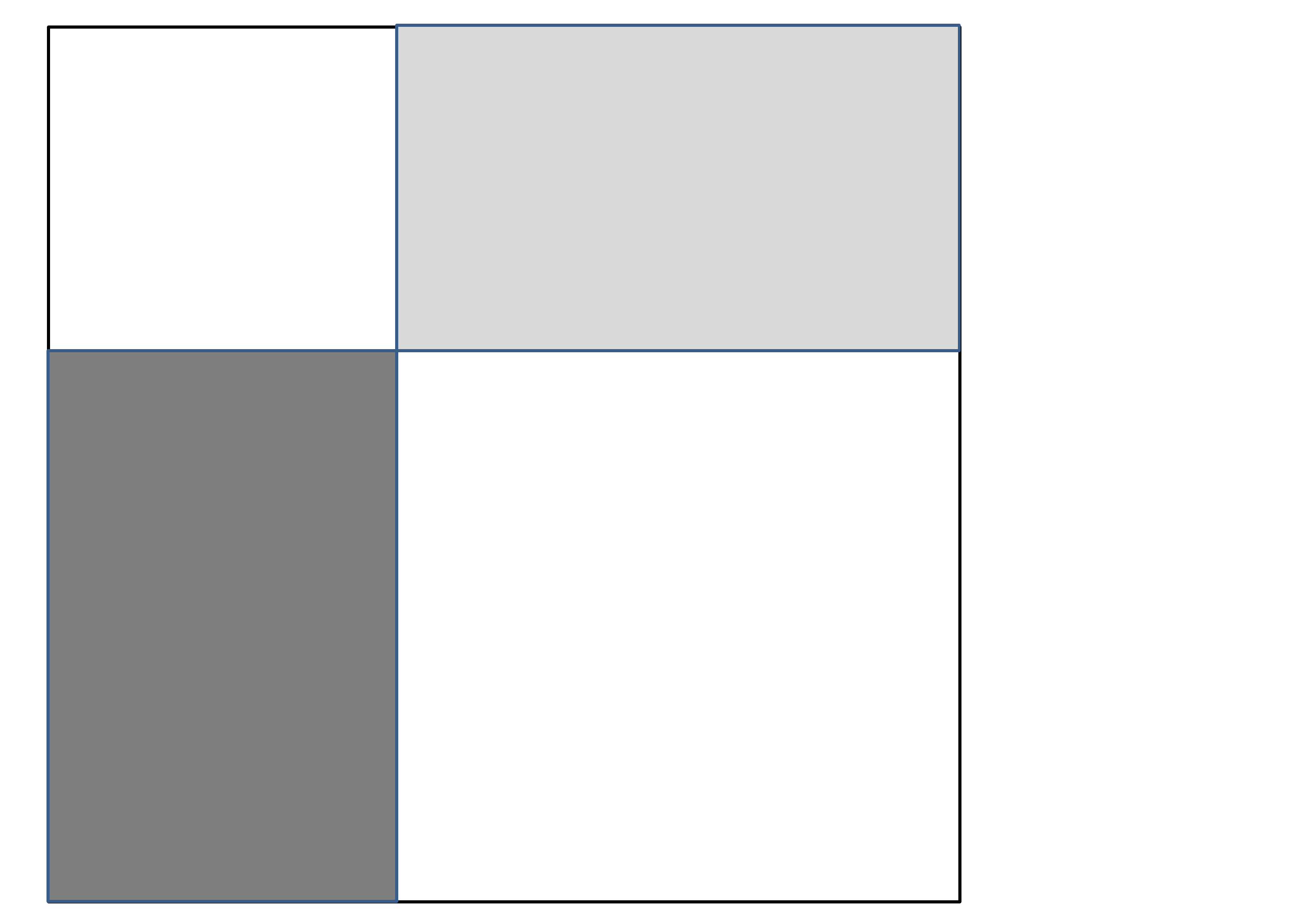}
\begin{picture}(0,0)% роль этой фигуры - создать интервал
                     % и подписи поверх первого рисунка
\put(-112,-10){\text{b}}
\end{picture}
  \caption{Case $A_n$}\label{An}
\end{figure}
of grading subspaces in this case is drown in Figure \ref{An},a. The matrices corresponding to the subspace $\g_{-1}=\tilde\g_{-1}$ can be represented in the form $\a\b^t$ where $\a\in\C^n$, the transposed vector to $\a$ is $\a^t=(1,0,\ldots,0)$, and $\b\in\C^n$ is arbitrary.
Such matrix belongs to $\sln(n)$ if $\b^t\a=0$. Next, an element $L_0\in\g$ belongs to the filtration subspace $\tilde\g_0=\g_{-1}\oplus\g_0$ if, and only if $\a$ is its eigenvector. Since all elements defining the grading on $\g$ are given up to an inner automorphism, $\a=g(1,0,\ldots,0)^t$ ($g\in GL(n)$) can take an arbitrary value in $\C^n$, while the above relations between $\a$, $\b$ and $L_0$ are preserved. This way, we arrive to the following expansion of the Lax operator for $\sln(n)$ at the point $\ga\in\Gamma$, first proposed in \cite{Klax} (see also \cite{KSlax,Sh_DGr}):
\begin{equation}\label{E:pTgln}
   L(z)=\a\b^tz^{-1}+L_0+\ldots ,
\end{equation}
where $\b^t\a=0$, and there exists $\varkappa\in\C$ such that $L_0\a=\varkappa\a$.

The grading number $r$ is given by simple root $\a_r=e_r-e_{r+1}$. The corresponding matrix realization is represented in Figure \ref{An},b. This root is contained in the expansions of the the roots $e_i-e_j$ for $i=1,\dots,r$, $j=r+1\ldots,n$. The sum of the corresponding root subspaces gives the grading subspace $\g_{-1}$. Therefore the upper block $\g_0$ has dimension $r\times r$, blocks $\g_{-1}$ and $\g_1$ have dimensions $r\times (n-r)$ and $(n-r)\times r$, respectively. The subspace $\g_{-1}$ consists of matrices of the form $\tilde\a_1\b_1^t+\ldots + \tilde\a_r\b_r^t$ where $\tilde\a_i=(0,\ldots,0,1,0,\ldots,0)$ ($1$ in $i$th positions). Vectors $\tilde\a_i$ and $\b_j$ are mutually orthogonal, the linear span of the vectors $\tilde\a_1,\ldots,\tilde\a_r$ is an invariant subspace of the subalgebra $\tilde\g_0$.

%%%%%%%%%%%%%%%%%%%%
\subsubsection{The case of $D_n$}\label{S:Dn}
%%%%%%%%%%%%%%%%%%%%%%%%%%%%%%%%%%%%%%%%
The Dynkin diagram $D_n$ has the form:

\begin{figure}[h]  % диаграмма Дынкина D_n
\begin{picture}(100,45)
\put(-40,10){
\begin{picture}(100,30)
\put(0,10){\circle*{3}}
\put(0,10){\line(1,0){30}}
\put(30,10){\circle*{3}}
\put(30,10){\line(1,0){30}}
\put(60,10){\circle*{3}}
\put(60,10){\line(1,0){15}}
\put(100,10){\line(1,0){15}}
\put(115,10){\circle*{3}}
\put(115,10){\line(1,0){30}}
\put(145,10){\circle*{3}}
\put(145,10){\line(1,1){20}}
\put(165,30){\circle*{3}}
\put(145,10){\line(1,-1){20}}
\put(165,-10){\circle*{3}}
\put(80,9){$\ldots$}% если поместить эту команду в середине, возникает
                    % смещение по вертикали
\put(-5,0){$\a_1$}
\put(25,0){$\a_2$}
\put(150,7){$\a_{n-2}$}
\put(170,27){$\a_{n-1}$}
\put(170,-13){$\a_n$}
\end{picture}   }
\end{picture}
%\caption{}
\end{figure}
where
\[
  \a_1=e_1-e_2,\ \ldots, \ \a_{n-1}=e_{n-1}-e_n,\ \a_n=e_{n-1}+e_n,
\]
and the full set of positive roots is given by the following list: $e_i\pm e_j$, $1\le i<j\le n$. We will need the expressions of the positive roots via simple ones, they are as follows:
\[
  e_i-e_j=\a_i+\ldots+\a_{j-1}\quad (1\le i<j\le n),
\]
and
\[
e_i+e_j=\left\{
   \begin{array}{ll}
            \a_i+\ldots+\a_{j-1}+2\a_j+\ldots+2\a_{n-2}+\a_{n-1}+\a_n, & \hbox{$i<j\le n-2$;} \\
            \a_i+\ldots+\a_{n-1}+\a_n, & \hbox{$i<j= n-1$;} \\
            \a_i+\ldots+\a_{n-2}+\a_n, & \hbox{$i\le n-2, j=n$;} \\
            \a_n, & \hbox{$i=n-1, j=n$.}
   \end{array}
        \right.
\]
The highest root is equal to $\theta=\a_1+2\a_2+\ldots+2\a_{n-2}+\a_{n-1}+\a_n$, hence there are three simple roots $\a_1$, $\a_{n-1}$ and $\a_n$ giving gradings of depth~$1$, but the last two are equivalent under an outer automorphism.

Consider the grading corresponding to the simple root $\a_1$. This simple root is contained in the expansions of the roots $e_1\pm e_j$, $j=2,\ldots,n$. The corresponding root subspaces form the grading subspace $\g_{-1}$. The blocks corresponding to the matrix realization of $\g=\so(2n)$ with respect to the quadratic form $\s=\begin{pmatrix}
          0 & E \\
          E & 0 \\
    \end{pmatrix}
$ are represented in Figure~\ref{Dn},a:
\begin{figure}[h]           % D_n
\begin{picture}(0,0)
\put(165,57){\text{--}\ $\g_0$}
\put(165,74){\text{--}\ $\g_{-1}$}
\put(165,38){\text{--}\ $\g_{1}$}
\put(67,-10){\text{a}}
\end{picture}
  \includegraphics[width=6cm]{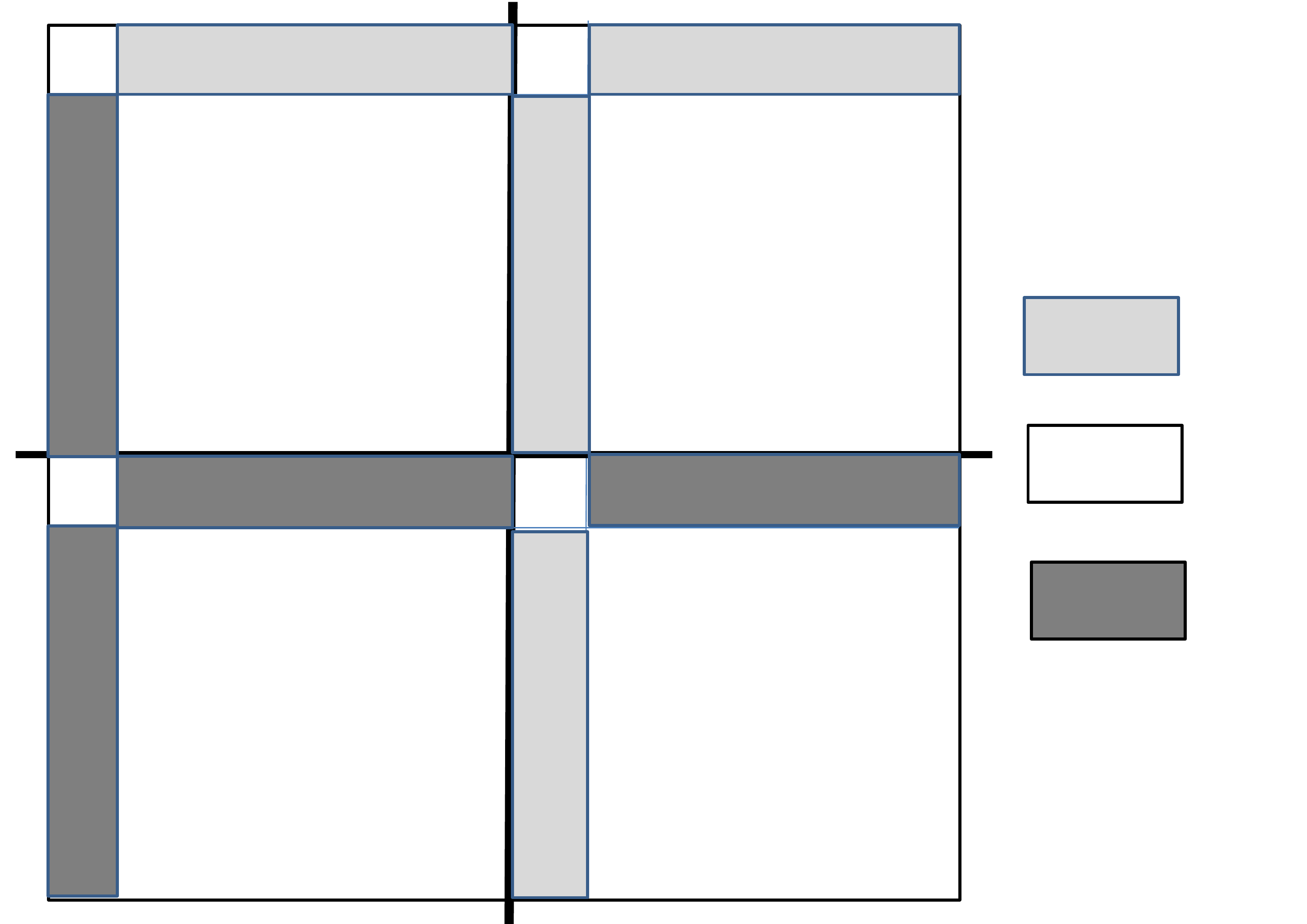}
\begin{picture}(30,0)% роль этой фигуры - создать интервал
                     % и подписи поверх первого рисунка
\put(-167,53){\small $0$}
\put(-105,109){\small $0$}
\end{picture}
\includegraphics[width=6cm]{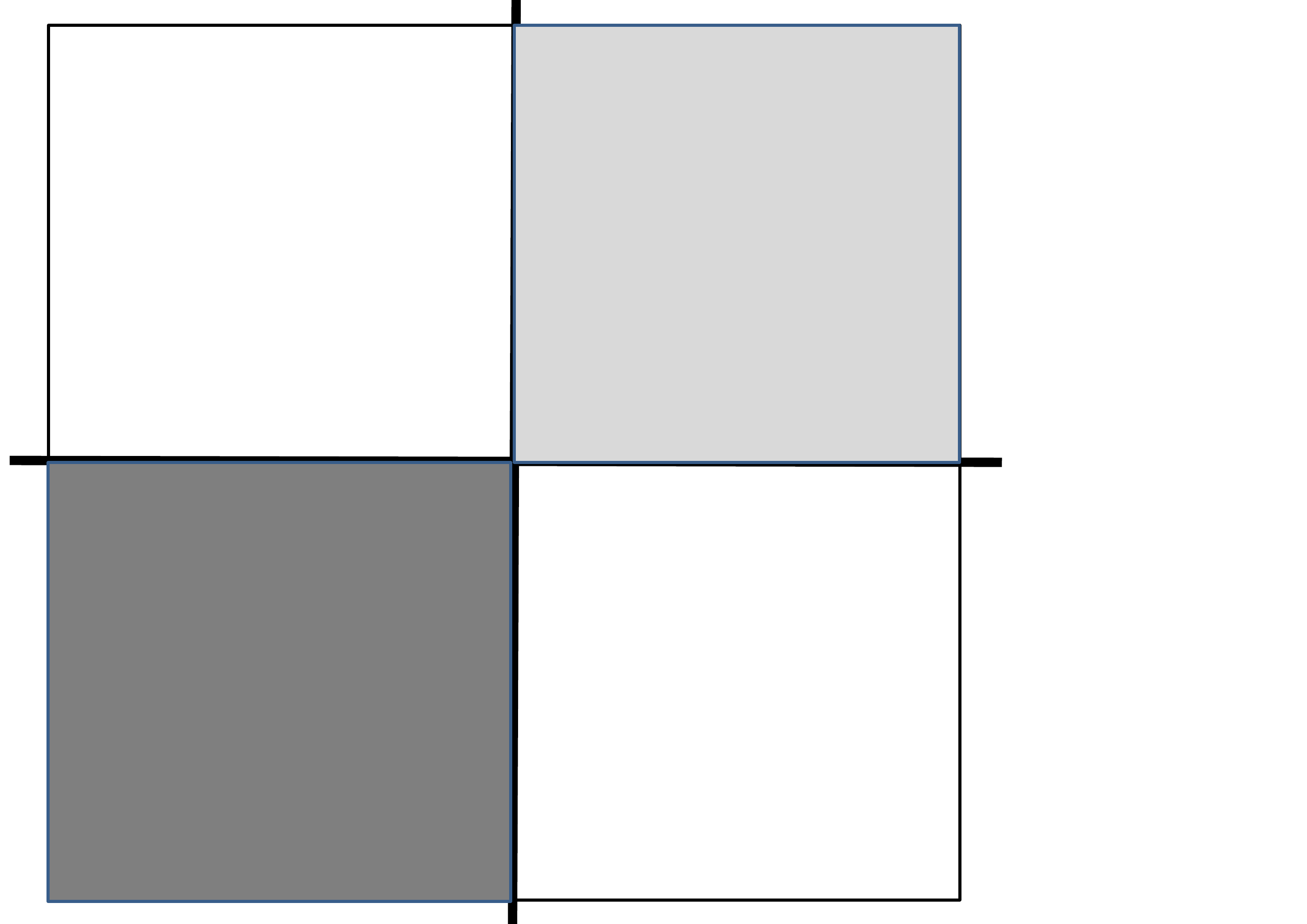}
\begin{picture}(0,0)% роль этой фигуры - создать интервал
                     % и подписи поверх первого рисунка
\put(-112,-10){\text{b}}
\end{picture}
  \caption{Case of $D_n$}\label{Dn}
\end{figure}

Observe that $\g_{-1}$ can be represented as the space of rank 2 matrices of the form $(\a\b^t-\b\a^t)\s$ where $\a,\b\in\C^{2n}$, $\a^t=(1,0,\ldots,0)$, and $\b^t\s\a=0$. Observe also that $\a$ is an eigenvector of the subalgebra $\tilde\g_0$, and $\a^t\s\a=0$. Hence we obtain the following expansion (first found out in \cite{KSlax}) for $L$ at $\ga\in\Gamma$:
\begin{equation}\label{E:pTso2n}
   L(z)=(\a\b^t-\b\a^t)\s z^{-1}+L_0+\ldots
\end{equation}
where $\a$, $\b$ satisfy the above conditions, and there exists a $\varkappa\in\C$ such that $L_0\a=\varkappa\a$.

Considering next the grading given by the simple root $\a_n$, we see that this root is contained in the expansions of the positive roots $e_i+e_j$ for $i<j$. The remainder of simple roots forms the Dynkin diagram $A_{n-1}$. The blocks corresponding to the grading subspaces are represented in Figure \ref{Dn},b. In particular, the subspace $\g_{-1}$ is represented by matrices of the form $(\tilde\a_1\b_1^t-\b_1\tilde\a_1^t)\s+\ldots+(\tilde\a_n\b_n^t-\b_n\tilde\a_n^t)\s$, where $\tilde\a_i,\b_i\in\C^{2n}$ for $i=1,\ldots,n$, vector $\tilde\a_i$ is given by its entries $\tilde\a_i^j=\d_i^j$ (where $j=1,\ldots,2n$, $\d_i^j$ is the Kronecker symbol), $\b_i=(\b_i^1,\ldots,\b_i^n,0,\ldots,0)$ (wher $\b_i^j\in\C$ are arbitrary). Observe that $\tilde\a_i^t\s\b_j=\tilde\a_i^t\s\tilde\a_j=\b_i^t\s\b_j=0$ for all $i,j=1,\ldots,n$. Using these relations, we can independently show by methods of  \cite{KSlax} that that the property of $L$ to
have simple poles at the points $\ga\in\Gamma$ is conserved under the commutator.
%%%%%%%%%%%%%%%%%%%%%%%%%%%%%%%%%%%%%%%%%%%%%%%%%%%%%%%%%%%%%%%
\subsubsection{The case of $C_n$}\label{S:Cn}
The Dynkin diagram $C_n$ is as follows:
\begin{figure}[h]
\begin{picture}(100,40)
\put(-40,10){
\begin{picture}(100,30)
\put(0,10){\circle*{3}}
\put(0,10){\line(1,0){30}}
\put(30,10){\circle*{3}}
\put(30,10){\line(1,0){30}}
\put(60,10){\circle*{3}}
\put(60,10){\line(1,0){15}}
\put(100,10){\line(1,0){15}}
\put(115,10){\circle*{3}}
\put(115,10){\line(1,0){30}}
\put(145,10){\circle*{3}}
\put(145,11){\line(1,0){30}}
\put(175,10){\circle*{3}}
\put(145,9){\line(1,0){30}}
\put(175,10){\line(-2,1){10}}
\put(175,10){\line(-2,-1){10}}
\put(80,9){$\ldots$}% если поместить эту команду в середине, возникает
                    % смещение по вертикали
\put(-5,0){$\a_1$}
\put(25,0){$\a_2$}
\put(130,0){$\a_{n-1}$}
\put(170,0){$\a_n$}
\end{picture}   }
\end{picture}
%\caption{}
\end{figure}
\newline where
\[
  \a_1=e_1-e_2,\ \ldots, \ \a_{n-1}=e_{n-1}-e_n,\ \a_n=2e_n,
\]
and all positive roots are given by the following list: $e_i\pm e_j$, $1\le i<j\le n$, and $2e_i$ ($i=1,\ldots,n$).
The expressions of the positive roots via simple ones are as follows:
\[
   \begin{array}{lll}
       e_i-e_j&=\a_i+\ldots+\a_{j-1}, & \hbox{$1\le i<j\le n$;} \\
       2e_i&=2\a_i+\ldots\hphantom{+\a_{j-1}}+2\a_{n-1}+\a_n, & \hbox{$i=1,\ldots,n-1$;}\\
       2e_n&=\hphantom{2\a_i+\ldots+\a_{j-1}+2\a_{n-1}+\ }\a_n,
   \end{array}
\]
and
\[
e_i+e_j=\left\{
   \begin{array}{ll}
            \a_i+\ldots+\a_{j-1}+2\a_j+\ldots+2\a_{n-1}+\a_n, & \hbox{$i<j\le n-1$;} \\
            \a_i+\ldots+\a_{n-1}+\a_n, & \hbox{$i<j= n$;}.
   \end{array}
        \right.
\]
The highest root is equal to $\theta=2e_1=2\a_1+2\a_2+\ldots+2\a_{n-1}+\a_n$. We will consider here the grading of depth $2$ given by the root $\a_1$, and the grading of depth $1$ given by the root $\a_n$. The Lax operator algebra corresponding to the first of them has been found out in \cite{KSlax}, see also \cite{Sh_DGr}. The algebra corresponding to the second is found out in \cite{Lax_Grad}.

In the matrix realization of the Lie algebra $\g=\spn(2n)$ with respect to the symplectic form
$\s=\begin{pmatrix}
          0 & E \\
          -E & 0 \\
    \end{pmatrix}
$, the blocks corresponding to the grading subspaces are represented in Figure~\ref{Cn},a.
\begin{figure}[h]          % C_n
\begin{picture}(0,0)
\put(165,92){\text{--}\ $\g_{-2}$}
\put(165,21){\text{--}\ $\g_2$}
\put(165,57){\text{--}\ $\g_0$}
\put(165,74){\text{--}\ $\g_{-1}$}
\put(165,38){\text{--}\ $\g_{1}$}
\put(67,-10){\text{a}}
\end{picture}
  \includegraphics[width=6cm]{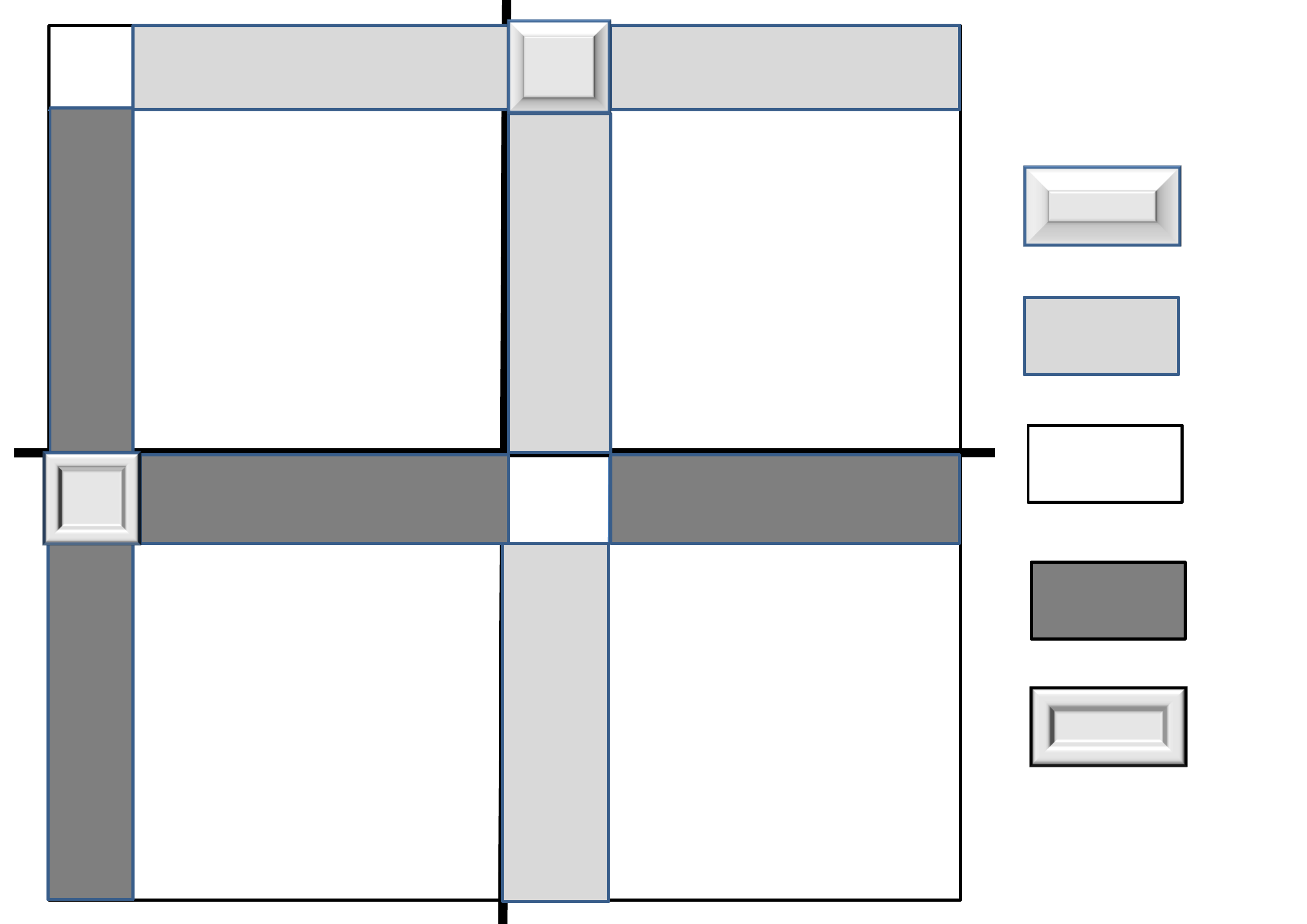}
\begin{picture}(30,0)% роль этой фигуры - создать интервал
                     % и подписи поверх первого рисунка
\end{picture}
\includegraphics[width=6cm]{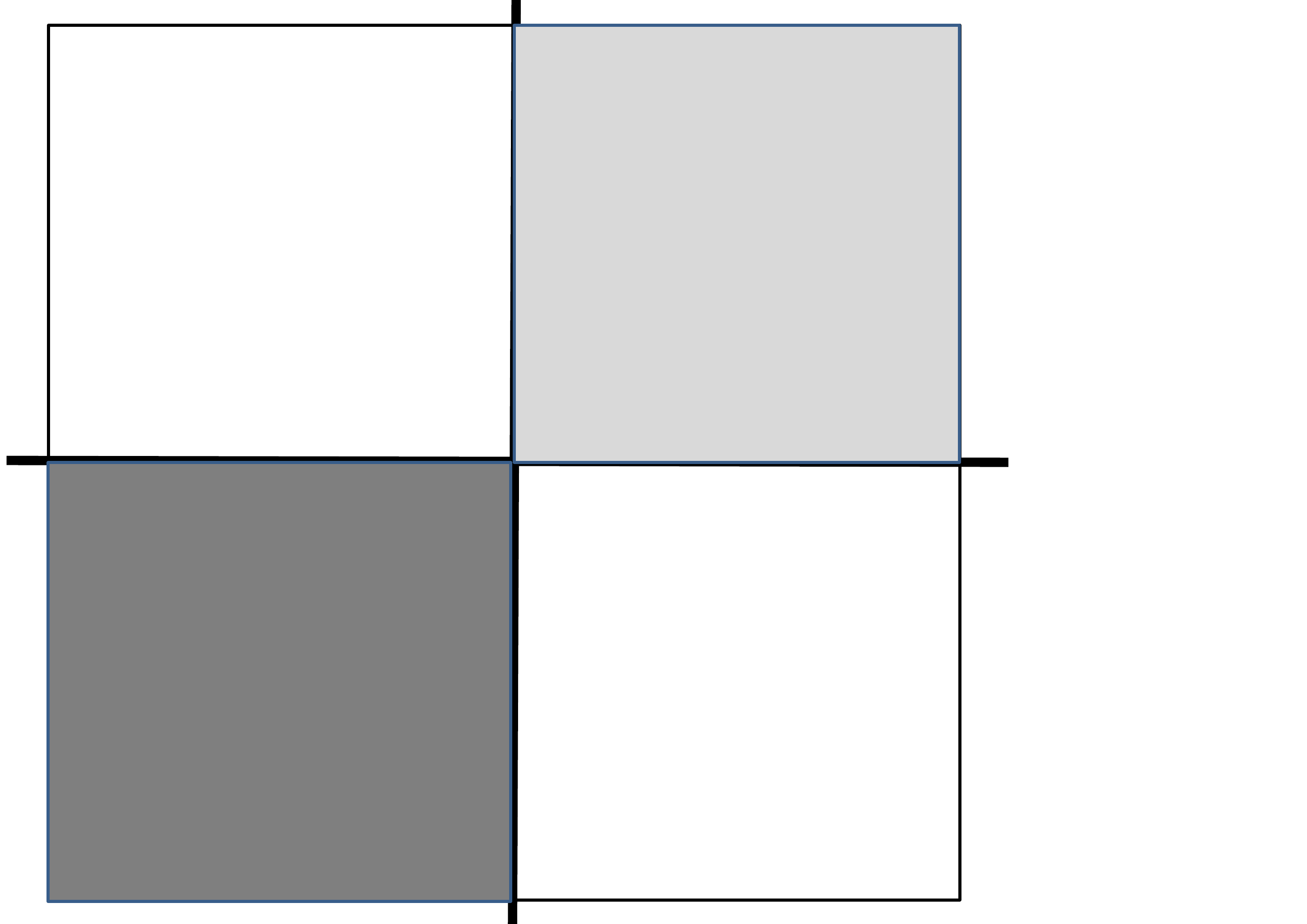}
\begin{picture}(0,0)% роль этой фигуры - создать интервал
                     % и подписи поверх первого рисунка
\put(-112,-10){\text{b}}
\end{picture}
  \caption{Case $C_n$}\label{Cn}
\end{figure}
In particular, the 1-dimensional subspace $\g_{-2}$, corresponding to the highest root, consists of the matrices of the form $\nu\a\a^t\s$ where $\a\in\C^{2n}$, $\a^t=(1,0,\ldots,0)$, $\nu\in\C$. The subspace $\g_{-1}$ consists of the matrices of the form $(\a\b^t+\b\a^t)\s$ where $\a,\b\in\C^{2n}$, $\a^t=(1,0,\ldots,0)$, and $\b^t\s\a=0$. Observe also that $\a$ is an eigenvector of the subalgebra $\tilde\g_0$, i.e. for every $L_0\in\tilde\g_0$ there exists $\varkappa\in\C$ such that $L_0\a=\varkappa\a$. Observe finally that for every $L_1\in\tilde\g_1$ we have $\a^t\s L_1\a=0$. Therefore we arrive to the following form of the expansion of the element $L$ at $\ga\in\Gamma$:
\begin{equation}\label{E:pTsp2n}
   L(z)=\nu\a\a^t\s z^{-2}+(\a\b^t+\b\a^t)\s z^{-1}+L_0+L_1z+\ldots
\end{equation}
where $\a$, $\b$, $L_0$, $L_1$ satisfy the above relations (see also \cite{KSlax,Sh_DGr}).

The matrix realization of the grading given by the simple root $\a_n$ is given in Figure \ref{Cn},b. It is quite similar to the one for $D_n$, with the only distinction that the matrices in $\g_{-1}$ have the form $(\tilde\a_1\b_1^t+\b_1\tilde\a_1^t)\s+\ldots+(\tilde\a_n\b_n^t+\b_n\tilde\a_n^t)\s$.
As earlier, $\tilde\a_i$ and $\b_j$ satisfy the orthogonality relations which enable one to independently derive by methods of \cite{KSlax} that the simple poles at the points $\ga\in\Gamma$ keep simple after commutation (like in \refS{Dn}).

%%%%%%%%%%%%%%%%%%%%%%%%%%%%%%%%%%%%%%%%%%%%%%%%%%%%%%%%%%

\subsubsection{The case of $B_n$}\label{S:Bn}
The Dynkin diagram $B_n$ has the following form:
\begin{figure}[h]
\begin{picture}(100,45)
\put(-40,10){
\begin{picture}(100,30)
\put(0,10){\circle*{3}}
\put(0,10){\line(1,0){30}}
\put(30,10){\circle*{3}}
\put(30,10){\line(1,0){30}}
\put(60,10){\circle*{3}}
\put(60,10){\line(1,0){15}}
\put(100,10){\line(1,0){15}}
\put(115,10){\circle*{3}}
\put(115,10){\line(1,0){30}}
\put(145,10){\circle*{3}}
\put(145,11){\line(1,0){30}}
\put(175,10){\circle*{3}}
\put(145,9){\line(1,0){30}}
\put(145,10){\line(2,1){10}}
\put(145,10){\line(2,-1){10}}
\put(80,9){$\ldots$}% если поместить эту команду в середине, возникает
                    % смещение по вертикали
\put(-5,0){$\a_1$}
\put(25,0){$\a_2$}
\put(133,0){$\a_{n-1}$}
\put(170,0){$\a_n$}
\end{picture}   }
\end{picture}
%\caption{}
\end{figure}
\newline where
\[
  \a_1=e_1-e_2,\ \ldots, \ \a_{n-1}=e_{n-1}-e_n,\ \a_n=e_n,
\]
and all positive roots are given by the following list: $e_i\pm e_j$, $1\le i<j\le n$, and $e_i$ ($i=1,\ldots,n$).
The expressions of the positive roots via simple ones are as follows:
\[
   \begin{array}{lll}
       e_i-e_j&=\a_i+\ldots+\a_{j-1}, &  \\
       e_i&=\a_i+\ldots\hphantom{+\a_{j-1}}+\a_{n-1}+\a_n, & \hbox{$1\le i<j\le n$;}\\
       e_i+e_j&=\a_i+\ldots+\a_{j-1}+2\a_j+\ldots+2\a_{n-1}+2\a_n,&
   \end{array}
\]
The highest root is as follows: $\theta=e_1+e_2=\a_1+2\a_2+\ldots+2\a_n$. We will consider here the grading of depth 1 given by the simple root $\a_1$, and the grading of depth $2$ corresponding to the simple root $\a_n$. The Lax operator algebra corresponding to the first of them has been found out in \cite{KSlax}, see also \cite{Sh_DGr}. The algebra corresponding to the second one is found out in \cite{Lax_Grad}.

In the matrix realization of the Lie algebra $\g=\so(2n+1)$ corresponding to the quadratic form
$\s~=~\begin{pmatrix}
          0 & 0 & E \\
          0 & 1 & 0 \\
          E & 0 & 0 \\
    \end{pmatrix} ,
$ the blocks corresponding to the grading subspaces  are represented in Figure \ref{Bn},a.
\begin{figure}[h] % B_n
\begin{picture}(0,0)
\put(165,92){\text{--}\ $\g_{-2}$}
\put(165,21){\text{--}\ $\g_2$}
\put(165,57){\text{--}\ $\g_0$}
\put(165,74){\text{--}\ $\g_{-1}$}
\put(165,38){\text{--}\ $\g_{1}$}
\put(67,-10){\text{a}}
\end{picture}
  \includegraphics[width=6cm]{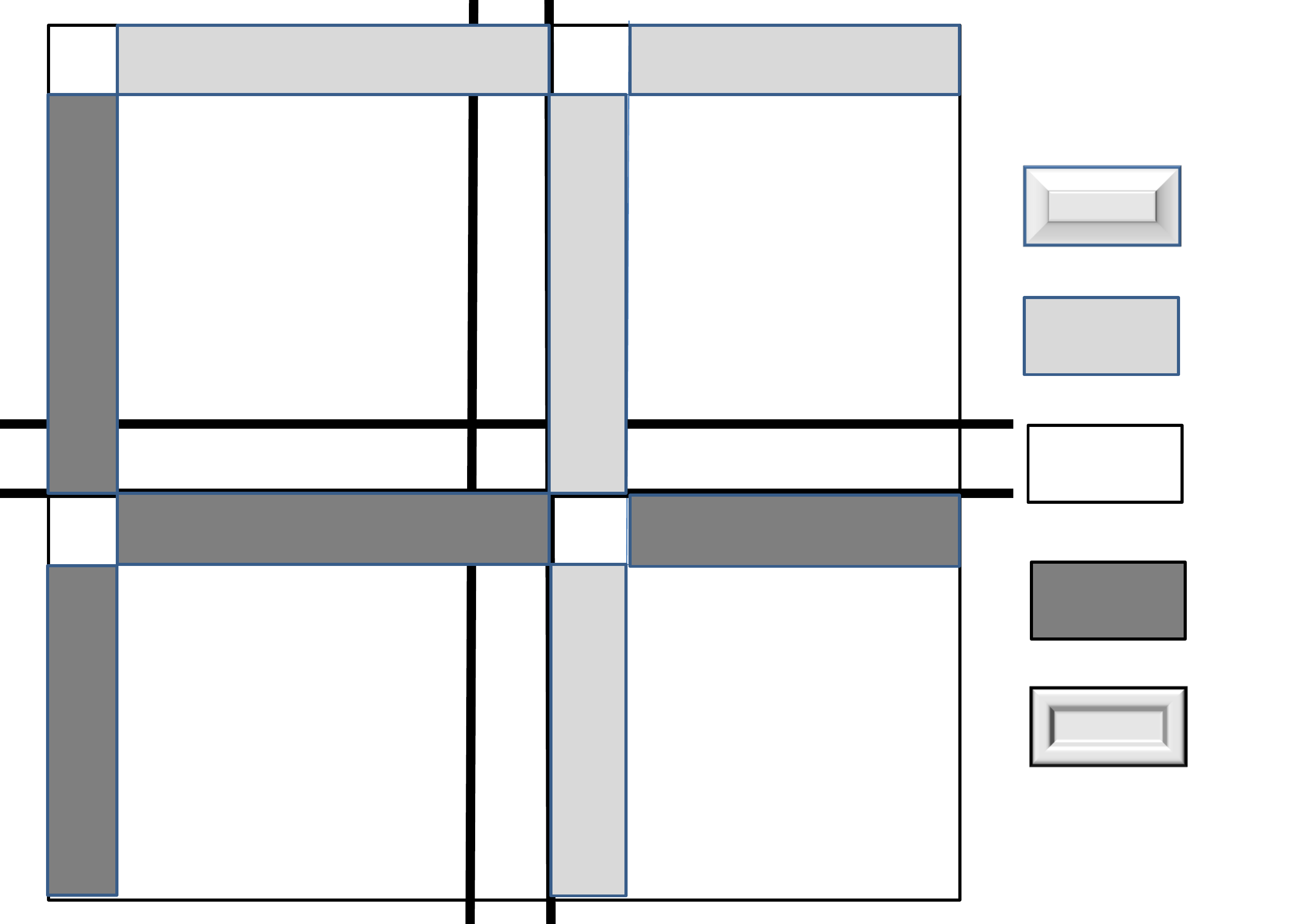}
\begin{picture}(30,0)% роль этой фигуры - создать интервал
                     % и подписи поверх первого рисунка
\put(-101,109){\small $0$}
%\put(-101,48){\small $0$}
\put(-111,57){\small $0$}
\put(-167,48){\small $0$}
\end{picture}
\includegraphics[width=6cm]{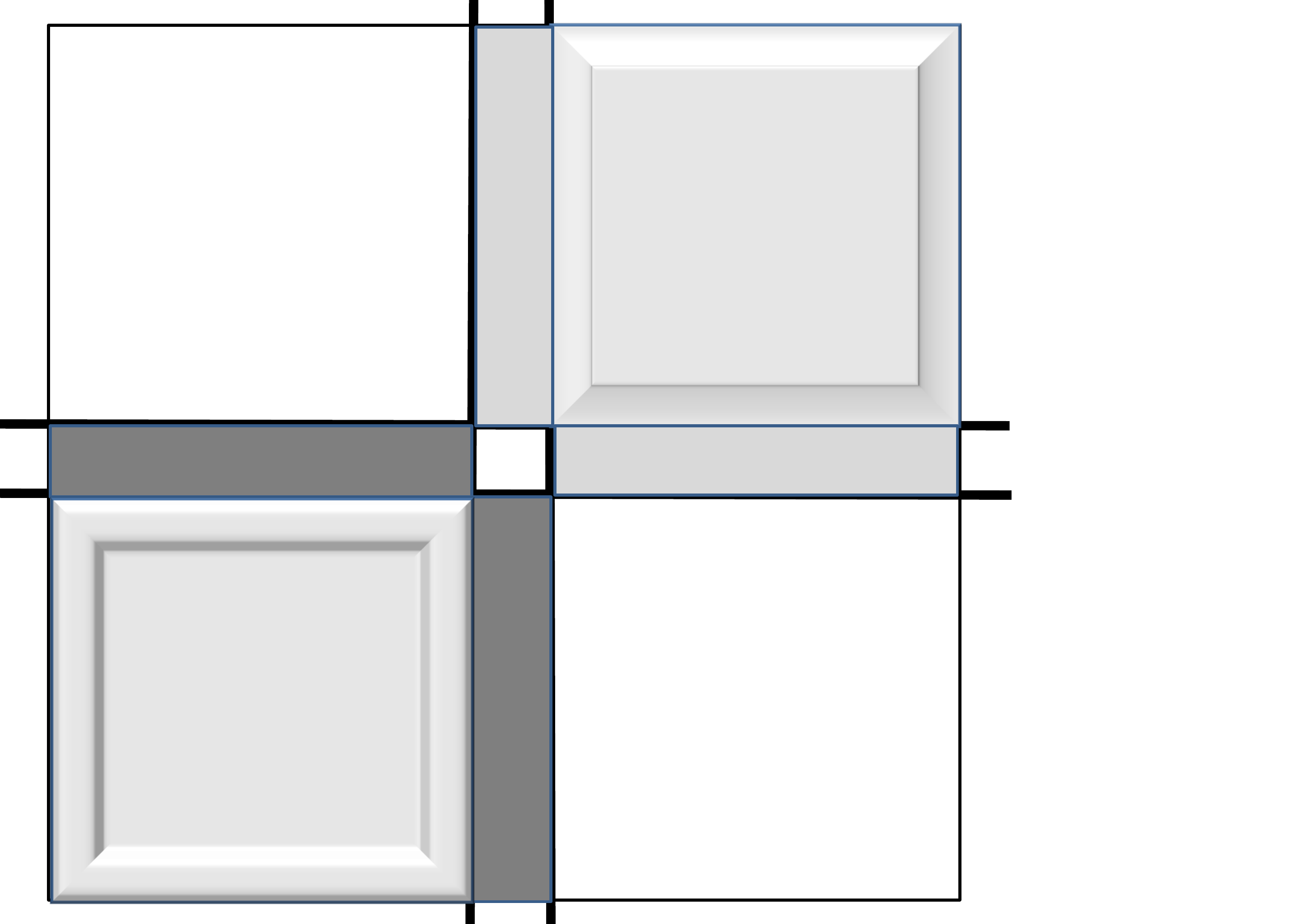}
\begin{picture}(0,0)% роль этой фигуры - создать интервал
                     % и подписи поверх первого рисунка
\put(-111,57){\small $0$}
\put(-112,-10){\text{b}}
\end{picture}
  \caption{Case $B_n$}\label{Bn}
\end{figure}

The subspace $\g_{-1}$ consists of matrices of the form $(\a\b^t-\b\a^t)\s$ where $\a,\b\in\C^{2n+1}$, $\a^t=(1,0,\ldots,0)$, and $\b^t\s\a=0$ (observe that also $\a^t\s\a=0$). As above, $\a$ is an eigenvector of the subalgebra $\tilde\g_0$. Therefore we arrive to the following form of Laurent expansions of the element $L$ at $\ga\in\Gamma$:
\begin{equation}\label{E:pTso2n+1}
   L(z)=(\a\b^t-\b\a^t)\s z^{-1}+L_0+\ldots
\end{equation}
where $\a$, $\b$ satisfy the just listed relations, and there exists $\varkappa\in\C$ such that $L_0\a=\varkappa\a$ (see also \cite{KSlax,Sh_DGr}).

The matrix realization of the grading given by the simple root $\a_n$ is presented in Figure \ref{Bn},b. The subspace $\g_{-1}$ is a direct sum of the root subspaces of the roots $e_i$ ($i=1,\ldots,n$). The subspace $\g_{-2}$ is a direct sum of the root subspaces of the roots $e_i+e_j$ for all $i,j=1,\ldots,n$. The matrices from $\g_{-1}$ have the form $\tilde\a_0\b_0^t-\b_0\tilde\a_0^t$ where $\tilde\a_0,\b_0\in\C^{2n+1}$, $(\tilde\a_0)_i=(\d_{i,n+1})$, $\b_0=(\b_0^1,\ldots,\b_0^n,0,\ldots,0)$ ($\b_0^j\in\C$ are arbitrary). The matrices from $\g_{-2}$ have the form $(\tilde\a_1\b_1^t-\b_1\tilde\a_1^t)\s+\ldots+(\tilde\a_n\b_n^t-\b_n\tilde\a_n^t)\s$
where $\tilde\a_i,\b_i\in\C^{2n+1}$ for $i=1,\ldots,n$, vector $\tilde\a_i$ is given by its coordinates $\tilde\a_i^j=\d_i^j$ (where $j=1,\ldots,2n+1$, $\d_i^j$ is the Kronecker symbol), $\b_i=(\b_i^1,\ldots,\b_i^n,0,\ldots,0)$ (where $\b_i^j\in\C$ is arbitrary).

%%%%%%%%%%%%%%%%%%%%%%%%%%%%%%%%%%%%%%%%%%%%%%%%%%%%%%%%%%%%%%%%%%
\subsubsection{The case of $G_2$}\label{S:G2}
The Dynkin diagram $G_2$ is as follows:
\begin{figure}[h]
\begin{picture}(100,45)
\put(-40,10){
\begin{picture}(100,30)
\put(0,10){\circle*{4}}
\put(0,10){\line(1,0){30}}
\put(0,12){\line(1,0){30}}
\put(0,8){\line(1,0){30}}
\put(30,10){\circle*{4}}
\put(20,10){\line(-2,-1){10}}
\put(20,10){\line(-2,1){10}}

\put(-5,0){$\a_1$}
\put(25,0){$\a_2$}
\end{picture}   }
\end{picture}
%\caption{}
\end{figure}
\newline where $\a_1,\a_2\in\C^2$,
\[
  \a_1=(1,0),\  \ \a_2=(-3/2,\sqrt{3}/2),
\]
and all positive roots are located in the vertices of two regular hexagons with common center at $(0,0)$, having vertices at the ends of the vectors $\a_1$, $\a_2$, respectively. The Lie algebra $G_2$ has an exact $7$-dimensional representation by the matrices of the form presented in the Figure \ref{G2},b. In the figure, the dependent blocks have the same color (bright gray, dark gray or white). By $[x]$ (where $x\in\C^3$, $x^T=(x_1,x_2,x_3)$) we denote the skew-symmetric matrix $[x]=\left(\begin{smallmatrix} 0&x_3&-x_2\\
-x_3&0&x_1\\
x_2&-x_1&0\end{smallmatrix}\right)$. Below, we give the full list of positive roots, and their correspondence with matrix elements of the $7$-dimensional representation. For every positive root, we give the value of an entry corresponding to this root, and the list of all other corresponding entries (in the form $(i,j)$).
\[
   \begin{array}{lll}
       \a_1, & (a_1)_1=\sqrt{2}, & (2,1),(1,5),(6,4), \\
       \a_2, &  A_{21}=1,& (3,2),(5,6),\\
       \a_1+\a_2, & (a_1)_2=\sqrt{2},  &  (3,1),(1,6),(5,4),  \\
      2\a_1+\a_2, & (a_2)_3=\sqrt{2},   &  (7,1),(1,4),(2,6),  \\
      3\a_1+\a_2, & A_{13}=1,  &  (2,4),(7,5),  \\
     3\a_1+2\a_2, & A_{23}=1,  &  (3,4),(7,6).
   \end{array}
\]

The highest root is equal to $\theta=3\a_1+2\a_2$. Consider first the grading of depth $2$ given by the simple root $\a_2$. The blocks, corresponding to the grading subspaces in the matrix realization, are presented in Figure \ref{G2},a.
\begin{figure}[h]  % G_2 a,b
\begin{picture}(0,0)
\put(165,92){\text{--}\ $\g_{-2}$}
\put(165,21){\text{--}\ $\g_2$}
\put(165,57){\text{--}\ $\g_0$}
\put(165,74){\text{--}\ $\g_{-1}$}
\put(165,38){\text{--}\ $\g_{1}$}
%\put(67,-10){\text{a}}
\end{picture}
 \includegraphics[width=6cm]{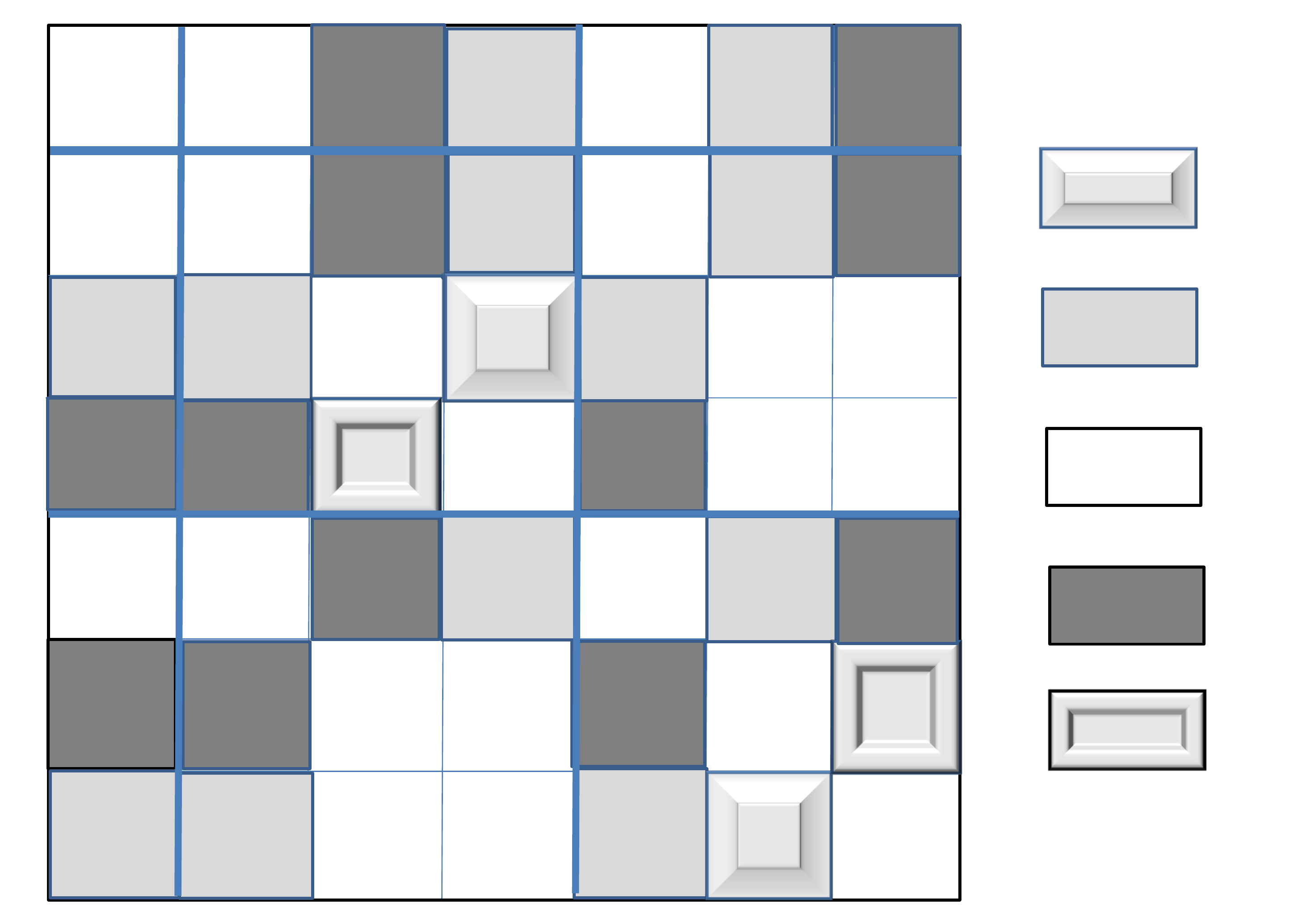}
\begin{picture}(30,0)% роль этой фигуры - создать интервал
                     % и подписи поверх первого рисунка
\put(-163,105){\small $0$}
\put(-95,90){\small $0$}
\put(-79,75){\small $0$}
\put(-60,57){\small $0$}
\put(-147,42){\small $0$}
\put(-130,25){\small $0$}
\put(-112,8){\small $0$}
\put(-105,-10){\text{a}}
\end{picture}
\includegraphics[width=6cm]{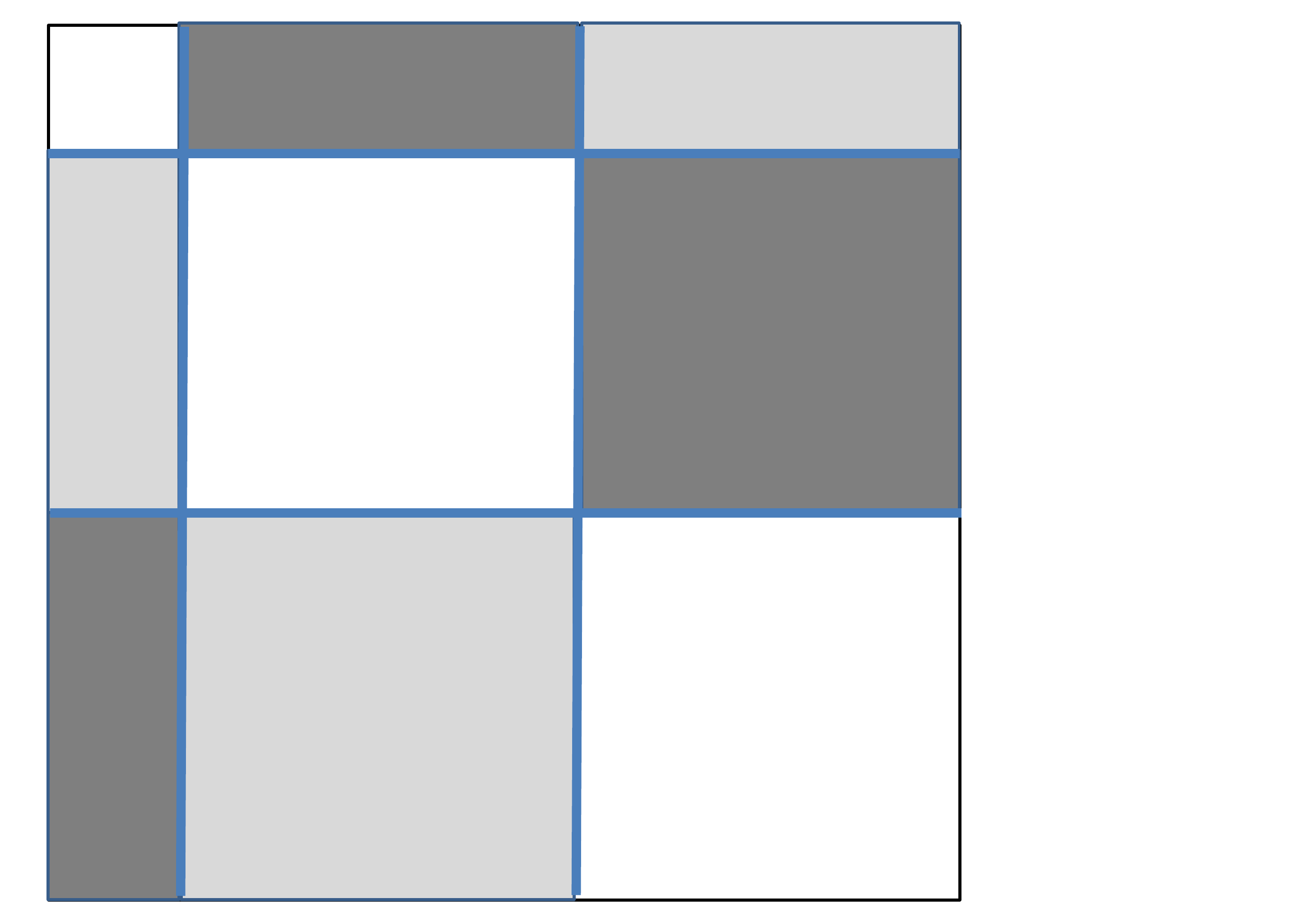}
\begin{picture}(0,0)
\put(-163,105){\small $0$}
\put(-80,105){\small $-a_1^t$}
\put(-165,75){\small $a_1$}
\put(-165,25){\small $a_2$}
\put(-130,105){\small $-a_2^t$}
\put(-130,75){$A$}
\put(-85,25){$-A^t$}
\put(-140,25){$\frac{1}{\sqrt{2}}[a_1]$}
\put(-90,75){$\frac{1}{\sqrt{2}}[a_2]$}
\put(-112,-10){\text{b}}
\end{picture}
\caption{Case of $G_2$: depth $2$}\label{G2}
\end{figure}

It is easy to check that the subspace $\g_{-2}$ consists of matrices of the form
\begin{equation}\label{E:L{-2}}
   L_{-2}=\mu\begin{pmatrix}
       0 & 0 & 0\\
       0 & \tilde\a_1\tilde\a_2^t &  0   \\
       0 &  0  & -\tilde\a_2\tilde\a_1^t    \\
       \end{pmatrix},\quad \mu\in\C ,
\end{equation}
where $\tilde\a_1=(0,1,0)$, $\tilde\a_2=(0,0,1)$, while the subspace $\g_{-1}$ consist of matrices of the form
\begin{equation}\label{E:resid}
   L_{-1}=\begin{pmatrix}
       0 & -\sqrt{2}\b_{02}\tilde\a_2^t & -\sqrt{2}\b_{01}\tilde\a_1^t \\
       \sqrt{2}\b_{01}\tilde\a_1 & \tilde\a_1\b_2^t-\b_1\tilde\a_2^t & \b_{02}[\tilde\a_2] \\
       \sqrt{2}\b_{02}\tilde\a_2 & \b_{01}[\tilde\a_1] & \tilde\a_2\b_1^t-\b_2\tilde\a_1^t \\
       \end{pmatrix},
\end{equation}
where $\b_{01},\b_{02}\in\C$ are arbitrary, $\b_1,\b_2\in \C^3$ satisfy to the following orthogonality relations: $\tilde\a_1^T\b_2=0$, $\tilde\a_2^T\b_1=0$. Observe also that $\tilde\a_1^T\tilde\a_2=0$, and if
$L_0\in\tilde\g_0$ is as in Figure \ref{G2},b, then
\begin{equation}\label{E:eigen}
\tilde\a_1^Ta_2=0, \quad \tilde\a_2^Ta_1=0, \quad A\tilde\alpha_1=\varkappa_1\tilde\alpha_1,  \quad -A^T\tilde\alpha_2=\varkappa_2\tilde\alpha_2 ,
\end{equation}
where $\varkappa_1,\varkappa_2\in\C$.

As a result we obtain the Lax operator algebra found in \cite{Sh_G2}. So we claim that the last corresponds to the grading of depth $2$ of the Lie algebra $G_2$ given by the simple root $\a_2$.

Besides, the Lie algebra $G_2$ has a grading of depth $3$ given by the simple root $\a_1$. The matrix realization of this grading is presented in Figure \ref{G2c}.
\begin{figure}[h] % G_2,c
\includegraphics[width=6cm]{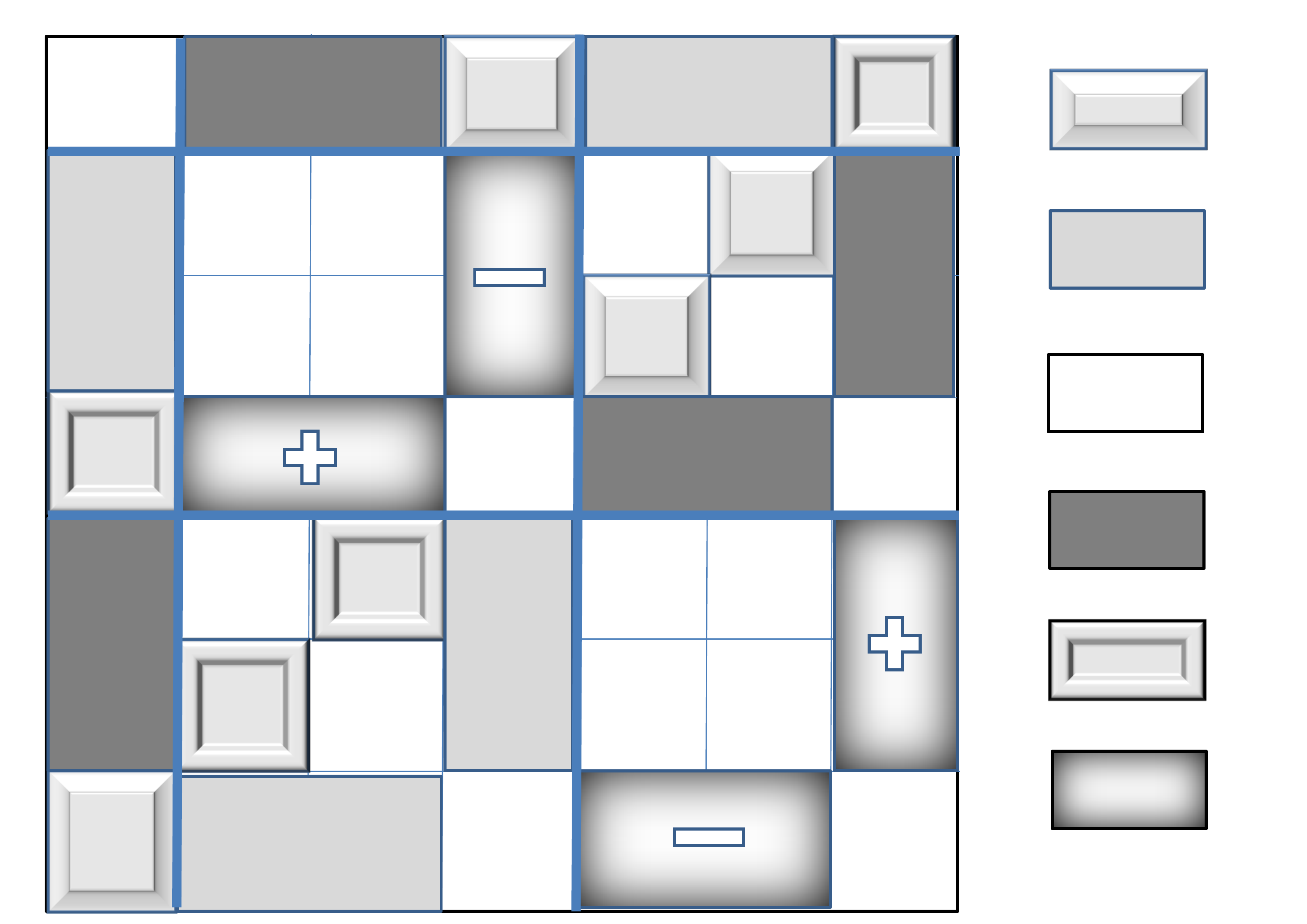}
\begin{picture}(30,0)% роль этой фигуры - создать интервал
                     % и подписи поверх первого рисунка
\put(-163,105){\small $0$}
\put(-95,90){\small $0$}
\put(-79,75){\small $0$}
\put(-60,57){\small $0$}
\put(-147,42){\small $0$}
\put(-130,25){\small $0$}
\put(-112,8){\small $0$}
\put(-10,102){\text{--}\ $\g_{-2}$}
\put(-10,85){\text{--}\ $\g_{-1}$}
\put(-10,68){\text{--}\ $\g_0$}
\put(-10,51){\text{--}\ $\g_1$}
\put(-10,31){\text{--}\ $\g_2$}
\put(-10,14){\text{--}\ $\g_{\pm 3}$}
\end{picture}
\caption{Case $G_2$: depth $3$}\label{G2c}
\end{figure}
%%%%%%%%%%%%%%%%%%%%%%%%%%%%%%%%%%%%%%%%
\subsection{Tyurin parameters}\label{SS:Tyupar}
%%%%%%%%%%%%%%%%%%%%%%%%%%%%%%%%%%%%%%%%
A Lax operator algebra $\L$ is being given by a choice of a Cartan subalgebra, and a $\Z$-grading of the Lie algebra $\g$ at every $\ga\in\Gamma$. Such pair of objects is defined up to an inner automorphism which depends on $\ga$. Operating by it on the above objects we actually operate on local expansions \refE{ga_expan} of elements $L\in\L$.

We call an inner automorphism of the Lie algebra of $\g$-valued Laurent expansions at a point $\ga\in\Gamma$, given by a constant (in $z$) element $g\in G$, a \emph{local inner automorphism}.

Operating by local inner automorphisms, in general, we deform the Lax operator algebra. In this section, for some of examples considered in \refSS{ex_grad}, we will point out independent parameters giving the Lax operator algebra. They are called Tyurin parameters. For the case of $\g=\gl(n)$, they emerged in classification of holomorphic vector bundles on Riemann surfaces \cite{Tyvb,rKNU}. In the context of integrable systems with spectral parameter on a Riemann suface, and Lax operator algebras, they emerged in \cite{Klax}, and in \cite{KSlax,Sh_DGr}, respectively.

Consider the expansion \refE{pTgln} of an element $L$ in the case of  $\g=\gl(n)$. A local automorphism operates on this expansion as follows:
\[
   L\to L',\ \a\to\a' ,\ \b\to\b' ,\ L_0\to L_0',
\]
where
\begin{equation}\label{E:loc_auto}
  L'=g^{-1}Lg, \ \a'=g^{-1}\a , \ \b'^t=\b^tg,\ L_0'= g^{-1}L_0'g.
\end{equation}
It is clear that the expansion \refE{ga_expan}, and the relations $\b^t\a=0$, $L_0\a=\varkappa\a$ are preserved under the transformation.

The set $\Gamma$ and the set of parameters $\a'$ for all $\ga\in\Gamma$ define the Lax operator algebra uniquely in this case. They are called \emph{Tyurin parameters}.

In the same way the Tyurin parameters are defined in case of local expansions \refE{pTso2n}, \refE{pTsp2n}, \refE{pTso2n+1} for orthogonal and symplectic algebras \cite{KSlax,Sh_DGr}.

We will need the parametrization of Lax operator algebras by means the local automorphisms (not only in the case of existence of Tyurin parameters) in \refSS{LP}, with relation to the definition of Lax equations.

%%%%%%%%%%%%%%%%%%%%%%%%%%%%%%%%%%%%%%%%

%%%%%%%%%%%%%%%%%%%%%%%%%%%%%%%%%%%%
\section{Lax equations and commutative hierarchies}\label{S:Hierarchies}
%%%%%%%%%%%%%%%%%%%%%%%%%%%%%%%%%%%%%%%%
In this chapter we define a certain class of finite-dimensional evolution systems corresponding to Lax operator algebras, and start investigating their integrability. It is one of definitions of integrability that there exists a full, in some sense, set of commuting flows the evolution system commutes with (\emph{a commutative hierarchy of flows}). Given a Lax operator of the class in question we construct here the corresponding family of commuting flows.

%%%%%%%%%%%%%%%%%%%%%%%%%%%%%%%%%%%%%%%%%%%%%%%%%%%%%%%%%%%%%%%%%%%
\subsection{Lax pairs}\label{SS:LP}

Elements of the Lie algebra $\L$, which the last chapter was devoted to, are called $L$-operators here. We complete the notation of the algebra by inventing an indication to the sets $\Gamma$ and $\{ h\}$ defining it (see \refSS{constr}), and will denote it by $\L_{\Gamma,\{ h\}}$ from now on. Next, we formulate an important property of $L$-operators generalizing \refT{almgrad},$2^\circ$. Let $D$ be a nonspecial nonnegative divisor on $\Sigma$. We associate it with the subspace
\[
  \L^D_{\Gamma,\{h\}}=\{ L\in\L_{\Gamma,\{h\}}\ |\ (L)+D +k\sum_{\ga\in\Gamma} \ga\ge 0 \},
\]
where $(L)$ denotes the divisor of the mapping $L$. We would like make more precise that by divisor of a vector-valued function we mean the pointwise minimum of divisors of its entries. Then the following \emph{dimension formula} for $L$-operators holds:
\begin{equation}\label{E:dimL^D}
  \dim\L^D_{\Gamma,\{h\}}=(\dim\g)(\deg D-g+1).
\end{equation}
Proof of the relation \refE{dimL^D} is also quite similar to the proof of \refT{almgrad},$2^\circ$. Namely, the dimension of the space of all meromorphic $\g$-valued functions with divisor $D$ outside $\Gamma$, and the poles of order $k$ at $\ga\in\Gamma$, generically is equal to $(\dim\g)(\deg D+k|\Gamma|-g+1)$, by Riemann-Roch theorem. In addition, the $L$-operators satisfy the relations \refE{ga_expan} which have a codimension $k(\dim\g)|\Gamma|$, as it is shown in \refSS{constr}, which proves \refE{dimL^D}.

The relation \refE{dimL^D} is applied below (in \refSS{Hitchin}) to the computation of dimensions of phase spaces of integrable systems.

Next, we allow the elements of $\Gamma$ to vary in such way that all $\ga\in\Gamma$ remain mutually different, and $\Gamma\cap\Pi=\emptyset$. The elements of the set $\{h\}$ are also assumed variables with the following range. For every $\ga\in\Gamma$, we fix $h_\ga^0\in\h$ satisfying the integrality and positivity requirements formulated in \refSS{constr}, and assume that $h_\ga=(\Ad g_\ga) h_\ga^0$ where $g_\ga\in G$, $G$ is a connected Lie group with the Lie algebra $\g$. Supplying the obtained this way space of data sets with an appropriate topology (and even a complex structure), we can consider a sheaf $\L$ of Lax operator algebras $\L_{\Gamma,\{h\}}$ on it, and its subsheaf $\L^D$ of subspaces $\L^D_{\Gamma,\{h\}}$. The set $\Pi$ is assumed to be fixed. The sheaf $\L^D$ plays the role of the phase space of the Lax equation below.

We notice that considering the variable sets $\Gamma$ and $\{ h\}$ generalizes the \emph{method of deformation of Tyurin parameters}, applying of which in the theory of Kadomtsev--Petviashvili equation goes back to \cite{rKNU}, and which is heavily used in \cite{Klax}.

More formally, we can define the base of the sheaves $\L$ и $\L^D$ as follows. Let $H$ be the centralizer of the element $\h^0_\ga$ in $G$, then $g_\ga\in G/H$. For a fixed finite set $\Gamma$, denote by $G_H^\Gamma$ the set of all mappings $\Gamma\to G/H$, and by $\Sigma^\Gamma$ the set of all embeddings $\Gamma\to\Sigma$. Then we regard to $\Sigma^\Gamma\times G_H^\Gamma$ as to the base.

A meromorphic mapping $M:\ \Sigma\to\g$,  holomorphic outside $\Pi$ and $\Gamma$, is said to be an \emph{$M$-operator} if at any $\ga\in\Gamma$ it has a Laurent expansion
\begin{equation}\label{E:M_oper}
  M(z)=\frac{\nu h}{z}+\sum_{i=-k}^\infty M_iz^i
\end{equation}
where $M_i\in\tilde\g_i$ for $i<0$, $M_i\in\g$ for $i\ge 0$, $h\in\h$ is the element giving the grading on $\g$ at the point $\ga$, and $\nu\in\C$. We denote the collection of $M$-operators by $\M_{\Gamma,\{h\}}$.  Obviously, $\L_{\Gamma,\{h\}}\subset\M_{\Gamma,\{h\}}$.

For an arbitrary nonspecial, nonnegative divisor $D$ let
\[
  \M^D_{\Gamma,\{h\}}=\{ M\in\M_{\Gamma,\{h\}}\ |\ (M)+D +k\sum_{\ga\in\Gamma} \ga\ge 0 \}.
\]
According to \cite{Sh_TMPh_14}, the following \emph{dimension formula for $M$-operators} holds:
\begin{equation}\label{E:dimM^D2''}
\dim\M^D_{\Gamma,\{h\}} =  (\dim\g)(\deg D+l-g+1)
\end{equation}
where $l\in\Z_+$ is determined by the number of points in $\Gamma$, and dimensions of the filtration spaces of the Lie algebra $\g$. We will prove \refE{dimM^D2''} in \refSS{constrM}.

In particular cases $M$-operators, like $L$-operators, can be given by Tyurin parameters, see \cite{Sh_DGr} for the detailes.

In a similar way it was done for $L$-operators, we will consider the sheaves $\M$ and $\M^D$, with the same base.

\begin{definition}\label{D:Lax}
A \emph{Lax pair} is a pair consisting of smooth sections of the sheaves $\L^D$ and $\M^D$. From these two, the section of the sheaf $\L^D$ is called the \emph{Lax operator}.
\end{definition}

Let two smooth curves $\Gamma(t)$ in $\Sigma^\Gamma$, and $h(t)$ in $G_H^\Gamma$ to be given. Giving a Lax pair, defines a pull back of the curves to the sheaves $\L^D$ and $\M^D$. We will obtain the curves $L(t)$ and $M(t)$ where $L(t)\in\L_{\Gamma(t),\{h(t)\}}$,  $M(t)\in\M_{\Gamma(t),\{h(t)\}}$.

Let the curves $L=L(t)$, $M=M(t)$ to be obtained this way. The equation
\begin{equation}\label{E:Lax_eq0}
 \dot L=[L,M]
\end{equation}
where $\dot L=\frac{dL}{dt}$, is called \emph{Lax equation}. This is a system of ordinary differential equations on the curves $\Gamma(t)$ and $h(t)$, and on the main parts of meromorphic functions $L(t)$ and $M(t)$ at the points of $\Pi$ and $\Gamma$. In order this system was closed, it is necessary to give $M$ as a function of~$L$.

\begin{example}
Consider an example which will be investigated in more detail later (\refSS{CMoser}), namely the elliptic Calogero--Moser system for the root system $A_n$. From the physical point of view this is a system of pairwise interacting particles on a torus with coordinates $q_1,\ldots,q_n$, and momenta $p_1,\ldots,p_n$. Its Lax operator is a meromorphic function in $z$ on the torus taking values in $\gl(n)$, having the entries of the form:
\begin{equation}\label{E:matrLgl0}
 L_{ij}=f_{ij}\frac{\s(z+q_j-q_i)\s(z-q_j)\s(q_i)}{\s(z)\s(z-q_i)\s(q_i-q_j)\s(q_j)}\ (i\ne j), \ \
 L_{jj}=p_j ,
\end{equation}
where $\s$ is the Weierstra\ss\ function. In this example,  $\Gamma=\{q_1,\ldots,q_n\}$, and, at the point $q_i$, the element $h_i$ is given by a constant (in time) diagonal matrix equal to $diag(0,\ldots,0,1,0,\ldots,0)$ ($1$ at $i$'s position). The equation \refE{Lax_eq0} gives a motion in the phase space  $\{p_1,\ldots,p_n,q_1,\ldots,q_n\}$.
\end{example}
%%%%%%%%%%%%%%%%%%%%%%%%%%%%%%%%%%%%%%%%

\subsection{Dimension formula for $M$-operators}\label{SS:constrM}
%%%%%%%%%%%%%%%%%%%%%%%%%%%%%%%%%%%%%%%%
We choose an arbitrary nonnegative nonspecial divisor
\[
  D=\sum_{i=1}^N m_iP_i, \quad m_i\ge 0\ (i=1,\ldots,N)
\]
and compute the dimension of the space $\M^D_{\Gamma,\{h\}}=\{M\in\M_{\Gamma,\{h\}}\ |\ (M)+D\ge~0\}$.
By the Riemann-Roch theorem, taking account of the codimension of the expansions of $M$-operators at the points $\ga\in\Gamma$, and of the additional parameter $\nu$ at every of those points, we have:
\begin{equation}\label{E:dimM^D1}
\dim\M^D_{\Gamma,\{h\}}= (\dim\g)(\deg D+k|\Gamma|-g+1)-\sum_{\ga\in\Gamma}\sum_{i=-k}^{-1}\codim\tilde\g_i^\ga+|\Gamma| \end{equation}
where $\g_i^\ga$ is a space of the grading at $\ga$ (here we don't assume the gradings to be the same at different points), $\tilde\g_i^\ga$ is the corresponding filtration space. Replacing  $\codim\tilde\g_i^\ga$ with $\dim\g-\dim\tilde\g_i^\ga$ in \refE{dimM^D1}, we obtain
\begin{equation*}
\begin{aligned}
\sum_{\ga\in\Gamma}\sum_{i=-k}^{-1}\codim\tilde\g_i^\ga &=
\sum_{\ga\in\Gamma}\sum_{i=-k}^{-1}(\dim\g-\dim\tilde\g_i^\ga)\\
&= (\dim\g)k|\Gamma|-\sum_{\ga\in\Gamma}\sum_{i=-k}^{-1}\dim\tilde\g_i^\ga .
\end{aligned}
\end{equation*}
Therefore
\begin{equation*}
\dim\M^D_{\Gamma,\{h\}} = (\dim\g)(\deg D-g+1)+ \sum_{\ga\in\Gamma}\sum_{i=-k}^{-1}\dim\tilde\g_i^\ga+|\Gamma|,
\end{equation*}
and finally
\begin{equation}\label{E:dimM^D2}
\dim\M^D_{\Gamma,\{h\}} =  (\dim\g)(\deg D-g+1) +\sum_{\ga\in\Gamma}\left(\sum_{i=-k}^{-1}\dim\tilde\g_i^\ga +1\right) .
\end{equation}
Next, assume that the gradings are the same up to inner automorphisms at all points $\ga~\in~\Gamma$. Then
\begin{equation}\label{E:dimM^D2'}
\dim\M^D_{\Gamma,\{h\}} =  (\dim\g)(\deg D-g+1)+\left(\sum_{i=-k}^{-1}\dim\tilde\g_i^\ga +1\right)|\Gamma| .
\end{equation}
Choose $|\Gamma|$ so that the last summand is equal to $(\dim\g)l$ where $l\in\Z_+$.  It is always possible, and can be done in several ways. Then \cite{Sh_TMPh_14}
\begin{equation}\label{E:dimfo}
\dim\M^D_{\Gamma,\{h\}} =  (\dim\g)(\deg D+l-g+1).
\end{equation}
Below, we always assume that
\[
            l-g+1\ge 0.
\]
For example, for the Lie algebras $\gl(n)$, $\so(2n)$, $\spn(2n)$, $G_2$ and a certain choice of gradings, the following relation holds
\begin{equation}\label{E:mist}
\dim\g-\left(\sum_{i=-k}^{-1}\dim\tilde\g_i +1\right)n=0,
\end{equation}
and we can take $|\Gamma|=ng$ where $n=\rank\g$. Then $l=g$,  and we obtain
\begin{equation}\label{E:dimM^D3}
\dim\M^D_{\Gamma,\{h\}} =  (\dim\g)(\deg D+1) .
\end{equation}
In this form, the dimension formula for $\M^D_{\Gamma,\{h\}}$, in the case $\g=\gl(n)$, has been obtained by I.M.Krichever in \cite{Klax}. For the classical Lie agebras it has been obtained by author, see \cite{Sh_DGr} and references therein. The validity of the relation \refE{mist} in the above listed cases is verified in the next four examples.
\begin{example}\label{Ex:gr_gln}
$\g=\gl(n)$, a grading of depth 1 is given by the simple root $\a_1$. Then $k=1$, $\dim\tilde\g_{-1}=n-1$, $\dim\g-(\dim\tilde\g_{-1}+1)n=\dim\g-n^2=0$.
\end{example}
\begin{example}\label{Ex:gr_so2n}
$\g=\so(2n)$, a grading of depth 1 is given by the simple root $\a_1$, $k=1$, $\dim\tilde\g_{-1}=2n-2$, $\dim\g-(\dim\tilde\g_{-1}+1)n=\dim\g-(2n-1)n=0$.
\end{example}
\begin{example}\label{Ex:gr_sp2n}
$\g=\spn(2n)$, a grading of depth 2 is given by the simple root $\a_1$, $k=2$, $\dim\g_{-2}=1$, $\dim\g_{-1}=2n-2$, $\dim\tilde\g_{-1}=\dim\g_{-2}+\dim\g_{-1}=2n-1$, $\dim\g-(\dim\tilde\g_{-2}+\dim\tilde\g_{-1}+1)n=\dim\g-(2n+1)n=0$.
\end{example}
\begin{example}\label{Ex:gr_G2}
$\g=G_2$, a grading of depth 2 is given by the simple root $\a_1$, $k=2$, $n=2$, $\dim\g_{-2}=1$, $\dim\g_{-1}=4$, $\dim\tilde\g_{-1}=\dim\g_{-2}+\dim\g_{-1}=5$, $\dim\g-(\dim\tilde\g_{-2}+\dim\tilde\g_{-1}+1)n=\dim\g-7\cdot 2=0$.
\end{example}
Consider the case $g=1$. Then the example \refEx{gr_gln} corresponds to the Calogero--Moser system for the root system $A_n$. The two following examples give the Calogero--Moser systems for the root systems $D_n$, $C_n$ and $B_n$.
\begin{example}\label{Ex:gr_ell}
$\g=\so(2n)$ or $\g=\spn(2n)$, $|\Gamma|=2n$. We consider the same gradings on $\g$ as in the examples \ref{Ex:gr_so2n}, \ref{Ex:gr_sp2n}, respectively. Since the relation \refE{mist} still holds, $l=2$. We obtain
\begin{equation}\label{E:dimM^D3'}
   \dim\M^D_{\Gamma,\{h\}}=(\dim\g)(\deg D+2).
\end{equation}
\end{example}
\begin{example}\label{Ex:gr_so2n+1}
\label{so_2n+1}
$\g=\so(2n+1)$, the grading of depth 1 is given by a simple root $\a_1$,
$k=1$, $\dim\tilde\g_{-1}=2n-1$, $\dim\g=n(2n+1)$. Therefore $2\dim\g=(\dim\tilde\g_{-1}+1)(2n+1)$, and we can take $|\Gamma|=2n+1$, $l=2$. The relation \refE{dimM^D3'} holds also in this case.
\end{example}
Observe that neither requirement \refE{mist}, nor the dimension formula \refE{dimM^D3} are satisfied by $\g=\so(2n+1)$, as well as by $\g=\sln(n)$. In the last case $k=1$, $\dim\g_{-1}=n-1$, $\dim\g=n^2-1$, while $(\dim\g_{-1}+1)\rank\g=n(n-1)$.

For $g=1$ and $\g=\so(2n+1)$ there is an alternative to the dimension formula \refE{dimM^D3}:
\begin{equation}\label{E:alt}
  \dim\M^D_{\Gamma,\{h\}}=(\dim\g)(\deg D)+\dim\g-\rank\g .
\end{equation}
This relation immediately follows say from \refE{dimM^D2}.

Below, we make use of the dimension formula \refE{dimM^D2''}. It will be substantial starting from \refSS{hierarch}.

%%%%%%%%%%%%%%%%%%%%%%%%%%%%%%%%%%%%%%%%%%%%%%%%%%
%%%%%%%%%%%%%%%%%%%%%%%%%%%%%%%%%%%%%%%%
\subsection{$M$-operators and vector fields}\label{SS:DimMs}

Consider the Lax equation
\begin{equation}\label{E:Lax_eq}
 \dot L=[L,M].
\end{equation}
In order the pair $L$, $M$ be its solution, it is necessary that the following relations take place at every $\ga\in\Gamma$:
\begin{equation}\label{E:Tyur_eq}
  \dot z=-\nu,\ \dot L_p=\sum_{i+j=p} [L_i,M_j]+\nu\sum_{s=-k}^p(p+1-s)L_{p+1}^s\quad (p=-k,\ldots,0)
\end{equation}
where $L_{p+1}^s$ is a projection of $L_{p+1}$ onto $\g_s$.
To obtain them, it is sufficient to compare the expansions
\begin{equation}\label{E:exp_dot}
  \dot L=-kL_{-k}\dot zz^{-k-1}+\sum_{p=-k}^\infty(\dot L_p+(p+1)L_{p+1}\dot z)z^p,
\end{equation}
and
\begin{equation}\label{E:exp_comm}
\begin{aligned}
  \phantom{a}[L,M]=&kL_{-k}\nu z^{-k-1}+
  \sum_{p=-k}^\infty\bigg(\sum_{i+j=p} [L_i,M_j]+ \\ &+\nu\sum_{s=-k}^p(p+1-s)L_{p+1}^s
  -(p+1)L_{p+1}\nu\bigg)z^p.
\end{aligned}
\end{equation}
In abuse of notation, we write here $\dot z$ instead $\dot z_\ga$. Formally, it would be necessary to write the expansions for $L$, $M$ in degrees of $z-z_\ga$ where $z_\ga$ is the coordinate of the point $\ga$ which is also assumed to be  depending on~$t$. Differentiating such expansion in $t$ we would obtain $\dot z_\ga$.

Observe that the summands containing $L_{p+1}^s$ and $L_{p+1}$ in \refE{exp_comm} come from the commutator $L$ with $\nu h/z$, the last coming from the expansion of $M$.

Observe also that, due to the conditions $M_i\in\tilde\g_i$ ($i<0$), the terms of degree $p<-k-1$ vanish on both hand sides of the equation \refE{Lax_eq}, i.e. the equation does not give any new relation in these degrees.

Let $T_{L}\L^D_{\Gamma,\{ h\}}$ be the tangent space to $\L^D_{\Gamma,\{ h\}}$ at the point $L$, and $M\in\M$. We stress that it is not assumed above that $\dot L\in\T_L\L^D_{\Gamma,\{ h\}}$. As well, we can not tell anything about $\dot L_p$, except that it is an element of $\g$. The last -- because the element $h$ giving the grading, depends on $t$.
\begin{theorem}\label{T:Lax_corr}
Assume that $\dot L$ and $M$ satisfy the relations \refE{Tyur_eq} for every $\ga\in\Gamma$. Then $[L,M]\in T_L\L^D_{\Gamma,\{ h\}}$ if, and only if $([L,M])+D\ge 0$ outside $\Gamma$.
\end{theorem}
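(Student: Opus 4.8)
The plan is to separate the tangency condition into a part supported at the points of $\Gamma$ and a part living away from $\Gamma$, to recognise the hypothesis \refE{Tyur_eq} as exactly the first part, and thereby reduce membership in $T_L\L^D_{\Gamma,\{h\}}$ to the divisor condition. First I would fix the meaning of the tangent space. By the dimension formula \refE{dimL^D} the fibres $\L^D_{\Gamma,\{h\}}$ have dimension independent of $(\Gamma,\{h\})$, so $\L^D$ is a smooth bundle over the base $\Sigma^\Gamma\times G_H^\Gamma$, and $T_L\L^D_{\Gamma,\{h\}}$ consists of the velocities $\dot L(0)$ of smooth curves $t\mapsto L(t)\in\L^D_{\Gamma(t),\{h(t)\}}$ with $L(0)=L$. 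Such a velocity records both the motion of the base -- the speeds $\dot z_\ga$ of the points and the deformation of the grading elements $h_\ga=(\Ad g_\ga)h_\ga^0$ -- and the genuine change of the coefficients of $L$.

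For necessity I would argue directly and without using \refE{Tyur_eq}. Fix $P\notin\Gamma$; for all small $t$ the points $\Gamma(t)$ stay away from $P$, so on a fixed neighbourhood of $P$ each $L(t)$ is a holomorphic family of $\g$-valued functions whose only poles lie at the fixed points of $\Pi$ and are bounded there by $D$, because $\Gamma(t)\cap\Pi=\emptyset$ and $L(t)\in\L^D_{\Gamma(t),\{h(t)\}}$ has, off $\Gamma(t)$, divisor $\ge -D$. Differentiating at $t=0$ preserves this bound, whence $(\dot L)+D\ge 0$ near $P$. As $P$ was an arbitrary point off $\Gamma$, tangency of $[L,M]=\dot L$ forces $([L,M])+D\ge 0$ outside $\Gamma$.

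For sufficiency I would construct a realising curve, and this is where \refE{Tyur_eq} is used. The expansion \refE{exp_comm} shows that once \refE{Tyur_eq} holds, with $\dot z_\ga=-\nu$, the Laurent expansion of $[L,M]$ at each $\ga$ is literally of the form \refE{exp_dot}: at $\Gamma$ the commutator already has the shape of the velocity of a curve that transports the point with speed $-\nu$, deforms $h_\ga$ in the direction dictated by the lower coefficients, and changes the coefficients $L_p$ by the prescribed $\dot L_p$. I would then assemble $L(t)$ from these local motions at the points of $\Gamma$ together with an extension over the rest of $\Sigma$; here the divisor hypothesis $([L,M])+D\ge 0$ off $\Gamma$ provides the behaviour at $\Pi$ and on the generic fibre, and the constancy of $\dim\L^D_{\Gamma,\{h\}}$ guarantees that this prescribed velocity is attained by an actual curve in the bundle. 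Thus $[L,M]=\dot L(0)\in T_L\L^D_{\Gamma,\{h\}}$.

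The step I expect to be the main obstacle is the bookkeeping of the moving grading in the sufficiency direction. Because $h_\ga$ depends on $t$ the filtration spaces $\tilde\g_p$ themselves move, so the coefficients $\dot L_p$ need not lie in $\tilde\g_p$ but only in the larger space swept out by infinitesimal conjugation, and one must check that the values forced by \refE{Tyur_eq} are admissible tangent directions of this moving flag. This is precisely where the grading relation $(\ad h)L_p=pL_p+\tilde L_{p-1}$ enters: it shows that the $\nu h/z$ term of $M$ produces exactly the $L_{p+1}$-corrections appearing in \refE{exp_dot}, so the local model of a tangent vector and the local form of $[L,M]$ agree term by term. Verifying this matching, and that the off-$\Gamma$ data can be completed to a genuine curve in $\L^D$, is the heart of the argument; the necessity computation and this local identification are then glued by the observation that, under \refE{Tyur_eq}, tangency is equivalent to the sole remaining requirement that $([L,M])+D\ge 0$ away from $\Gamma$.
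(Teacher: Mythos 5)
Your decomposition of the problem --- local conditions at $\Gamma$ captured by \refE{Tyur_eq}, the only remaining condition being the divisor bound away from $\Gamma$ --- is the same as the paper's, your necessity argument is fine, and your term-by-term matching of the expansion \refE{exp_comm} of $[L,M]$ with the velocity expansion \refE{exp_dot} is exactly the observation the paper makes. The gap is in the sufficiency direction, at precisely the step you yourself flag as ``the heart of the argument'': you assert that the prescribed velocity ``is attained by an actual curve in the bundle'' because $\dim\L^D_{\Gamma,\{h\}}$ is constant. But constancy of the fibre dimension only says that the set of realizable velocities at $L$ is a linear space of dimension equal to the dimension of the base plus the dimension of the fibre; it does not say that this space coincides with the space of \emph{all} global meromorphic $\g$-valued functions having the local form \refE{exp_dot} at each $\ga\in\Gamma$ and divisor $\ge -D$ elsewhere. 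A priori the latter space could be strictly larger, in which case $[L,M]$ could satisfy both of your conditions and still fail to be tangent; nothing in your argument excludes this.

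The paper closes exactly this gap with a Riemann--Roch count. It introduces the auxiliary space $\mathcal T^D$ of all candidate velocities (mappings $T:\Sigma\to\g$ with $(T)+D\ge 0$ off $\Gamma$ and satisfying \refE{mathcalT} at each $\ga$, the $\dot L_i\in\g$ being free parameters subject to \refE{Tyur_eq}), observes the easy containment $T_L\L^D_{\Gamma,\{h\}}\subseteq\mathcal T^D$, and then computes $\dim\mathcal T^D=(\dim\g)(\deg D-g+1)=\dim\L^D_{\Gamma,\{h\}}$ by \refE{dimL^D}: the $(\dim\g)(k+1)|\Gamma|$ linear conditions \refE{mathcalT} are offset by the equally many free parameters $\dot L_i$, which are in turn constrained by the equally many relations \refE{Tyur_eq}. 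Equality of dimensions upgrades the containment to $\mathcal T^D=T_L\L^D_{\Gamma,\{h\}}$, after which membership of $[L,M]$ in the tangent space reduces to the divisor condition. Your proof needs this count (or an explicit construction of the realizing curve, which would amount to proving the same surjectivity); without it the assembly of local data at $\Gamma$ into a genuine curve in $\L^D$ is unjustified.
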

\begin{proof}
The plan of the proof is as follows. First, we construct an auxiliary subspace $\mathcal T^D$, and, applying the Riemann-Roch theorem, derive that $\dim\mathcal T^D=\dim T_L\L^D_{\Gamma,\{ h\}}$. Then we observe that $T_L\L^D_{\Gamma,\{ h\}}\subseteq\mathcal T^D$, hence these spaces coincide. Finally, $[L,M]\in\mathcal T^D\Leftrightarrow ([L,M])+D\ge 0$ outside $\Gamma$.

We define $\mathcal T^D$ as the subspace of meromorphic mappings $T: \Sigma~\to~\g$ holomorphic outside the sets $\Pi$, $\Gamma$, satisfying the condition
\[
 (T)+D\ge 0
\]
outside $\Gamma$, and having the expansion
\begin{equation}\label{E:Texp}
  T=\sum_{i=-k-1}^\infty T_iz^i
\end{equation}
at every $\ga\in\Gamma$ where $T_i\in\g$, $i=-k-1,k,\ldots,\infty$,
\begin{equation}\label{E:mathcalT}
 \begin{array}{llll}
   T_i     &=  & \dot L_i+(i+1)L_{i+1}\dot z ,  & (i=-k,\ldots,-1), \\
   T_0     &=  & \dot L_0+L_1\dot z,            &
 \end{array}
 \end{equation}
$L_i\in\tilde\g_i$ are fixed, $\dot L_i\in\g$ are free parameters. Comparing the right hand sides of \refE{exp_dot} and \refE{mathcalT}, conclude that $T_L\L^D_{\Gamma,\{ h\}}\subseteq\mathcal T^D$.

Next, compute the dimension of the space $\mathcal T^D$. If we relax the conditions \refE{mathcalT}, then the dimension of the obtained space  $\mathcal T$ of meromorphic functions can be computed by the Riemann-Roch theorem: $\dim\mathcal T = (\dim\g) (\deg D+(k+1)|\Gamma|-g+1)$. The conditions \refE{mathcalT} give $(\dim\g)(k+1)$ relations at every point $\ga\in\Gamma$, i.e. $(\dim\g)(k+1)|\Gamma|$ relations in total. The summarized dimension of the parameters $\dot L_i$, $i=-k,\ldots,0$ is also equal to $(\dim\g)(k+1)|\Gamma|$, but they satisfy to the same number of relations \refE{Tyur_eq}. Therefore the subspace $\mathcal T^D$ is distinguished in the space $\mathcal T$ by means $(\dim\g)(k+1)|\Gamma|$ effective relations, hence
\[
   \dim \mathcal T^D=\dim \mathcal T-(\dim\g)(k+1)|\Gamma|=(\dim\g)(\deg D-g+1).
\]
But $\dim T_L\L^D_{\Gamma,\{ h\}}$ is the same, because $\dim T_L\L^D_{\Gamma,\{ h\}}=\dim\L^D_{\Gamma,\{ h\}}$ (these are finite-dimensional spaces), and by the relation \refE{dimL^D}. Therefore we have proved that $T_L\L^D_{\Gamma,\{ h\}}=\mathcal T^D$.

Comparing \refE{exp_comm} and \refE{mathcalT}, we see that, under conditions \refE{Tyur_eq}, the expansion \refE{exp_comm} satisfies \refE{mathcalT}. Therefore it is necessary and sufficient for $[L,M]\in\mathcal T^D$ that $([L,M])+D\ge 0$ (outside $\Gamma$).
\end{proof}
Consider the next property of $M$-operators. For any $M$-operator we will write down its Laurent expansion in the neighborhood of $\ga\in\Gamma$ in the form $M(z)=\frac{\nu h}{z}+\sum_{i=-k}^\infty M_iz^i$, and also in the form $M=M^-+\frac{\nu h}{z}+M^+$ where $M^-=\sum_{i=-k}^{-1} M_iz^i$, $M^+=\sum_{i=0}^\infty M_iz^i$.
\begin{lemma}\label{L:mult}
Given two $M$-operators $M_a$ and $M_b$, consider two vector fields $\partial_a$, $\partial_b$ given by the equalities $\partial_aL=[L,M_a]$ and $\partial_bL=[L,M_b]$ on $\L^D$, and continued to $\M^D$ so that
\begin{equation}\label{E:local}
\partial_a M_b^-=[M_b^-,M_a]^-+L_a^-,\
\partial_b M_a^-=[M_a^-,M_b]^-+L_b^- ,
\end{equation}
and
\begin{equation}\label{E:local1}
\begin{aligned}
\partial_a z&=-\nu_a,\quad  \partial_ah=[h,M_{a,0}]+O(z),\\
\partial_b z&=-\nu_b,\quad\partial_bh=[h,M_{b,0}]+O(z),
\end{aligned}
\end{equation}
where $L_{a,i}^-,L_{b,i}^-\in\tilde\g_i$ ($i=-k,\ldots,-1$), and the upper minus at the commutator means its main part\footnote{The relations for $h$ in \refE{local1} are additional conditions the continuation of the vector fields satisfies to; it does not follow from the above because $h$, as a coefficient at $z^{-1}$, appears only in $M$-operators. The equations for $h$ are made more precise below by means the relation \refE{dvih}.}. Then
\[
\partial_aM_b-\partial_bM_a+[M_a,M_b]\in\M.
\]
\end{lemma}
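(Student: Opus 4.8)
The plan is to reduce the statement to a purely local verification at each $\ga\in\Gamma$ and then to check, order by order in the local coordinate $z$, that the principal part of $F:=\partial_aM_b-\partial_bM_a+[M_a,M_b]$ has the shape \refE{M_oper}. Away from $\Gamma$ the three constituents of $F$ are visibly meromorphic with poles confined to $\Pi\cup\Gamma$, and since membership in $\M$ imposes no condition at the points of $\Pi$, it suffices to control the Laurent expansion of $F$ at each $\ga\in\Gamma$. Writing $M_a=M_a^-+\nu_ah z^{-1}+M_a^+$ and likewise for $M_b$, I would compute the coefficient $F_p$ of $z^p$ and establish: (i) $F_{-k-1}=0$, i.e. the pole order does not exceed $k$; (ii) $F_p\in\tilde\g_p$ for $-k\le p\le -2$; (iii) $F_{-1}=\nu_Fh+(\text{element of }\tilde\g_{-1})$ for a suitable scalar $\nu_F$; while for $p\ge0$ nothing has to be checked, since $F_p\in\g$ automatically.

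The computation of $F_p$ mixes two kinds of contributions to $\partial_aM_b$: the variation of the coefficients, prescribed by \refE{local} and \refE{local1}, and the \emph{shift} terms coming from the motion of the point, governed by $\partial_az=-\nu_a$ exactly as in the passage from \refE{exp_dot} to \refE{exp_comm}. The first structural point I would isolate is that the regular parts $M_a^+,M_b^+$ enter the principal part of $F$ only through the commutator contributions $[M_a^-,M_b^+]^-+[M_a^+,M_b^-]^-$, whose coefficients lie merely in $\g$; these are cancelled identically by the matching terms $[M_b^-,M_a^+]^-$ and $-[M_a^-,M_b^+]^-$ that \refE{local} supplies through $\partial_aM_b^--\partial_bM_a^-$. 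After this cancellation only the terms built from $M_a^-,M_b^-$, from $L_a^-,L_b^-$, and from the singular $\nu h/z$-data survive; the first give $-([M_a^-,M_b^-])_p\in\tilde\g_p$ by the grading property $[\tilde\g_i,\tilde\g_j]\subseteq\tilde\g_{i+j}$, and the second lie in $\tilde\g_p$ by the hypothesis $L_{a,i}^-,L_{b,i}^-\in\tilde\g_i$.

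The heart of the argument is the treatment of the $\nu h/z$-contributions, which enter both through \refE{local}--\refE{local1} and through the genuine commutators $[\nu_a h z^{-1},M_b^-]$, $[M_a^-,\nu_b h z^{-1}]$, and which must be assembled together with the shift terms. Using $[h,X]=sX$ for $X\in\g_s$, I would show that for $-k\le p\le -2$ they collapse to $\sum_{s\le p+1}(p+1-s)\bigl(\nu_bM_{a,p+1}^s-\nu_aM_{b,p+1}^s\bigr)$, precisely the combination already seen in the $L$-operator evolution \refE{Tyur_eq}. The decisive feature is the factor $(p+1-s)$, which vanishes at the top weight $s=p+1$: this simultaneously kills the would-be $z^{-k-1}$ pole (the case $p=-k-1$, where $s=p+1=-k$ is forced) and removes the only $\g_{p+1}$-component that could spoil the filtration, leaving the sum in $\tilde\g_p$. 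For $p=-1$ the shift coefficient $p+1$ vanishes outright, the cross-commutator pieces $\nu_a[h,M_{b,0}]-\nu_b[h,M_{a,0}]$ cancel against the $\partial_ah,\partial_bh$ contributions of \refE{local1}, and the variation of the scalars yields $F_{-1}=\nu_Fh+(\text{element of }\tilde\g_{-1})$ with $\nu_F=\partial_a\nu_b-\partial_b\nu_a$, which is exactly the admissible form for the coefficient of $z^{-1}$ in an $M$-operator.

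I expect the main obstacle to be precisely this last bookkeeping: tracking, degree by degree, how the moving point (the shift terms governed by $\partial_az=-\nu_a$), the commutator with the singular part $\nu h/z$, and the $t$-dependence of the grading element $h$ from \refE{local1} interact, and verifying in particular that the surviving $\g_0$-component at $z^{-1}$ is a multiple of $h$ rather than a general element of $\g_0$. Once the $(p+1-s)$-structure is isolated---mirroring the computation already carried out for $L$-operators in \refT{Lax_corr}---the verifications (i)--(iii) are routine, and combined with holomorphy off $\Pi\cup\Gamma$ they give $F\in\M$.
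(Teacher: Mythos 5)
Your proposal follows the same route as the paper's proof: everything reduces to computing the principal part of $F=\partial_aM_b-\partial_bM_a+[M_a,M_b]$ at each $\ga\in\Gamma$; the cross terms involving $M_a^+,M_b^+$ cancel between the prescription \refE{local} and the genuine commutator $[M_a,M_b]$; the surviving $[M_a^-,M_b^-]$ contribution lies in $\tilde\g_p$ (and its poles of order $>k$ vanish) by $[\tilde\g_i,\tilde\g_j]\subseteq\tilde\g_{i+j}$; the $\partial_ah,\partial_bh$ terms of \refE{local1} cancel the $z^{-1}$ contributions $\nu_a[h,M_{b,0}]-\nu_b[h,M_{a,0}]$; and the coefficient of $z^{-1}$ comes out as $(\partial_a\nu_b-\partial_b\nu_a)h$ plus an element of $\tilde\g_{-1}$, exactly as in the paper. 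The one point where your bookkeeping genuinely diverges is what you single out as the heart of the argument: the $(p+1-s)$-weighted sum. In the paper's computation no such combination survives, because $\partial_aM_b^-$ in \refE{local} is the full derivative of the local function $M_b^-$ (motion of the point included), so the piece $[M_b^-,\nu_ah z^{-1}]^-$ contained in $[M_b^-,M_a]^-$ cancels \emph{identically} against $[\nu_ah z^{-1},M_b^-]$ inside $[M_a,M_b]$; the only explicit shift term is $\nu_a\nu_bhz^{-2}$, which disappears by antisymmetry in $a,b$. Your $(p+1-s)$ pattern, mimicking \refE{Tyur_eq}, is what one gets only if one splits $\partial_aM_b^-$ into coefficient variation plus a separate shift term \emph{and} reads the commutator in \refE{local} as excluding the $\nu_ahz^{-1}$ part of $M_a$; beware that the intermediate reading (full $[M_b^-,M_a]^-$ for the coefficient variation, shift added on top) leaves a residual $(p+1)\bigl(\nu_bM_{a,p+1}-\nu_aM_{b,p+1}\bigr)$ whose $\g_{p+1}$-component does not vanish, and the lemma would be false. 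So your conclusion is correct and your cancellation scheme lands in $\tilde\g_p$ either way, but the mechanism you call decisive is an artifact of one particular reading of \refE{local}; under the paper's reading the relevant terms simply cancel and no analogue of the $(p+1-s)$ factor is needed.
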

\begin{proof}
The lemma will be proved by straightforward computation of the main part of the expression $\partial_aM_b-\partial_bM_a+[M_a,M_b]$ at an arbitrary point $\ga\in\Gamma$.

We write the Laurent expansions of the operators $M_a$ and $M_b$ in the neighborhood of $\ga\in\Gamma$ in the form
\begin{equation}\label{E:M_loc}
   M_a=M_a^-+\frac{\nu_ah}{z}+\sum_{i\ge 0}M_{a,i}z^i,\quad
   M_b=M_b^-+\frac{\nu_bh}{z}+\sum_{i\ge 0}M_{b,i}z^i
\end{equation}
where $M_a^-=\sum_{i=-k}^{-1}M_{a,i}z^i$, $M_b^-=\sum_{i=-k}^{-1}M_{b,i}z^i$.

At the first step we have
\[
\begin{aligned}
\partial_aM_b &=\partial_a\left( M_b^-+\frac{\nu_bh}{z}+\sum_{i\ge 0}M_{b,i}z^i  \right)=\\
&=\partial_aM_b^- + \frac{(\partial_a\nu_b)h}{z}+\frac{\nu_b(\partial_ah)}{z}- \frac{\nu_bh(\partial_az)}{z^2}+O(1).
\end{aligned}
\]
Here, $O(1)$ is the expansion in nonnegative degrees with coefficients in $\g$. We make substitutions $\partial_az=-\nu_a$, and $\partial_ah=[h,M_{a,0}]$ (by \refE{local1}). At the next step replace $\partial_aM_b^-$ with $[M_b^-,M_a^-]+[M_b^-,\frac{\nu_ah}{z}+M_a^+]+L_a^-+O(1)$ by \refE{local}. The second commutator contracts with $[\frac{\nu_ah}{z}+M_a^+,M_b^-]$ in $[M_a,M_b]$, which follows from \refE{M_loc}. Observe that the term $\nu_bh(\partial_az)z^{-2}=-\nu_a\nu_bhz^{-2}$ is symmetric in $a$ and $b$, and annihilates with the corresponding term in $\partial_bM_a$. Therefore, replacing with dots the terms which definitely disappear in the result, we obtain
\[
    \partial_aM_b = [M_b^-,M_a^-]+L_a^-+\frac{(\partial_a\nu_b)h}{z} + \frac{\nu_b(\partial_ah)}{z} +\ldots+O(1).
\]
Computing $\partial_bM_a$ similarly, we obtain
\begin{equation}\label{E:poldela1}
\begin{aligned}
  \partial_aM_b-\partial_bM_a = &2[M_b^-,M_a^-]+ \frac{(\partial_a\nu_b-\partial_b\nu_a)h}{z} + \frac{\nu_b(\partial_ah)-\nu_a(\partial_bh)}{z}\\
  &+L_a^--L_b^-+\ldots+O(1).
\end{aligned}
\end{equation}
Next, we compute $[M_a,M_b]$ starting from \refE{M_loc}:
\begin{equation}\label{E:poldela2}
[M_a,M_b]=[M_a^-,M_b^-]+\sum_{i=0}^\infty \nu_a[h,M_{b,i}]z^{i-1} +
\sum_{i=0}^\infty \nu_b[M_{a,i},h]z^{i-1} + \ldots + O(1).
\end{equation}
Observe that by \refE{local1}, $\nu_b(\partial_ah)z^{-1}$ in \refE{poldela1} annihilates with the term with $i=0$ of the secomd sum in \refE{poldela2}, and similarly $-\nu_a(\partial_bh)z^{-1}$ annihilates with the number zero term of the first sum. The remainder of those two sums in \refE{poldela2} is $O(1)$.

Therefore
\[
  \partial_aM_b-\partial_bM_a+[M_a,M_b]=[M_b^-,M_a^-] +L_a^--L_b^- + \frac{(\partial_a\nu_b-\partial_b\nu_a)h}{z} + O(1).
\]
Since $M_{a,i}\in\tilde\g_i$ and $M_{b,i}\in\tilde\g_i$ for $i<0$, the $[M_b^-,M_a^-]$ possesses this property too. By assumption, $L_a^-$ and $L_b^-$ also possess it, hence $\partial_aM_b-\partial_bM_a+[M_a,M_b]$ is an $M$-operator.
\end{proof}
We notice that the continuation of vector fields from $\L^D$ to $\M$, constructed in \cite{Sh_DGr} for classical Lie algebras by means Tyurin parameters, satisfy the conditions \refE{local}. Since it is not completely obvious, we will show it by the example $\g=\gl(n)$.
\begin{example}
For $\g=\gl(n)$, in the neighborhood of a $\ga$ we have
\[
   M_b^-=\frac{\a\mu_b^t}{z},\quad M_a=\frac{\a\mu_a^t}{z}+M_{a0}+\ldots,
\]
where $\a$ are Tyurin parameter, $\mu_b^t\a=0$ (for $\mu_a$ it is not true because the matrix $\a\mu_a^t$ contains $\nu_ah$). Therefore
\begin{equation}\label{E:prod1}
 [M_b^-,M_a]^-=-\frac{(\mu_a^t\a)\a\mu_b^t}{z^2}+\frac{\a\mu_b^tM_{a0}-M_{a0}\a\mu_b^t}{z}.
\end{equation}
If we compute $(\partial_aM_b)^-$ another way, formally applying the derivation $\partial_a$ and using the Leibnitz formula, we will obtain
\begin{equation}\label{E:prod2}
 (\partial_aM_b)^-=\frac{(\partial_az)\a\mu_b^t}{z^2}+ \frac{(\partial_a\a)\mu_b^t+\a(\partial_a\mu_b^t)}{z}.
\end{equation}
The motion equations of the Tyurin parameters are as follows \cite{Klax,Sh_DGr}:
\begin{equation}\label{E:Tyupa}
        \partial_az=-\mu_a\a,\quad\partial_a\a=-M_{a0}\a+\l\a\ (\l\in\C).
\end{equation}
(for the way to obtain them see also example \ref{PT_ex},  \refSS{Holomorphy}).
By these equations
\[
   (\partial_aM_b)^- - [M_b^-,M_a]^-=\frac{\l\a\mu_b^t+\a(\partial_a\mu_b^t-\mu_b^tM_{a0})}{z}.
\]
The first summand in the nominator belongs to the space $\g_{-1}$ by assumption. To show that the second one also belongs to it, we check that $(\partial_a\mu_b^t-\mu_b^tM_{a0})\a=0$. It can be derived by differentiation of the relation $\mu_b^t\a=0$, and applying the second of the relations \refE{Tyupa}.
\end{example}

%%%%%%%%%%%%%%%%%%%%%%%%%%%%%%%%%%%%%%%%
%%%%%%%%%%%%%%%%%%%%%%%%%%%%%%%%%%%%%%%%
\subsection{Hierarchies of Lax equations}\label{SS:hierarch}
%%%%%%%%%%%%%%%%%%%%%%%%%%%%%%%%%%%%%%%%
The indices $a$, $b$ introduced in the last section, denote here triples of the form $\{ \chi, P\in\Pi, m>-m_P\}$ where $\chi$ is an invariant polynomial of the Lie algebra $\g$, $m_P$ is a multiplicity of the point $P$ in the divisor $D$.
\begin{example}\label{pol_inv_cl}
For $\g=\gl(n)$ we can take $\chi(L)=\tr L^p$, $p\in\Z_+$. In the cases $\g=\so(n)$, $\g=\spn(2n)$ we can manage the same way, assuming that $p\in 2\Z_+$.
\end{example}
We will introduce also $l-g+1$ fixed points $P_j\notin(\Pi\cup\Gamma)$, $j=1,\ldots,l-g+1$ to normalize $M$-operators  ($l$ is the same here as in \refE{dimM^D2''}).

The following fragment, till the end of the proof of \refL{central}, is inspired by \cite{Goldman}. There is the only obstruction to establishing an equivalence between the results here, and there, namely, nonuniqueness of the logarithm mapping on a Lie group.

We define the \emph{gradient} $\d \chi(L)\in\g$ of the polynomial $\chi$ at the point $L\in\g$ by means the equality
\begin{equation}\label{E:variation}
   d\chi(L)=\langle \d \chi(L),\d L\rangle
\end{equation}
where $d\chi$ is the differential of $\chi$ as a function on $\g$, $\langle \cdot\, ,\cdot\rangle$ is a non-degenerate invariant bilinear form on $\g$. If $L\in\L$, i.e. it is considered as a meromorphic function on $\Sigma$ taking values in $\g$, then such will be also $\d \chi(L)$. If this function is considered as a function of a local coordinate $w$ on $\Sigma$, then we write $\d \chi(w)$.
\begin{lemma}\label{L:central}  Let $\chi$ be an invariant polynomial on the Lie algebra $\g$. Then
\[
    [\d \chi(L),L]=0.
\]
\end{lemma}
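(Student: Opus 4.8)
The plan is to reduce the assertion to a pointwise identity in the finite-dimensional Lie algebra $\g$ and to prove it there from two inputs: the invariance of $\chi$, and the invariance (associativity) of the bilinear form $\langle\cdot,\cdot\rangle$ used to define the gradient in \refE{variation}. Since $L$ and $\d\chi(L)$ are both $\g$-valued meromorphic functions on $\Sigma$ and the commutator in the statement is taken pointwise, it suffices to establish $[\d\chi(X),X]=0$ for a fixed value $X=L(P)\in\g$. So first I would fix such an $X$ and work entirely inside $\g$, noting that the conclusion for the meromorphic $L$ then follows by letting $P$ range over $\Sigma$.

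Next I would extract the infinitesimal form of the invariance of $\chi$. For every $Y\in\g$, invariance gives $\chi(\Ad_{\exp(tY)}X)=\chi(X)$ for all $t$; differentiating at $t=0$ and using $\frac{d}{dt}\big|_{t=0}\Ad_{\exp(tY)}X=[Y,X]$ yields $d\chi(X)([Y,X])=0$. By the defining relation \refE{variation} of the gradient, $d\chi(X)(V)=\langle\d\chi(X),V\rangle$, so this reads $\langle\d\chi(X),[Y,X]\rangle=0$ for every $Y\in\g$. The final step is the one genuine manipulation: invoking the invariance of the form in the associative shape $\langle A,[B,C]\rangle=\langle[A,B],C\rangle$, I rewrite $\langle\d\chi(X),[Y,X]\rangle=-\langle\d\chi(X),[X,Y]\rangle=-\langle[\d\chi(X),X],Y\rangle$. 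Hence $\langle[\d\chi(X),X],Y\rangle=0$ for all $Y\in\g$, and non-degeneracy of $\langle\cdot,\cdot\rangle$ forces $[\d\chi(X),X]=0$.

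The computation is short, so there is no serious analytic obstacle; the only points requiring care are bookkeeping ones. I expect the main (minor) difficulty to be keeping the sign and convention of \refE{variation} consistent with the chosen invariant form, and spelling out cleanly that invariance of $\chi$ means $\Ad$-invariance whose infinitesimal consequence is the vanishing $d\chi(X)([Y,X])=0$ (this needs $G$ connected, which is harmless here). Beyond that, one should remark that $\d\chi(L)$ is well defined precisely because $\langle\cdot,\cdot\rangle$ is non-degenerate, the same property that closes the argument at the end.
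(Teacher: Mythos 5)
Your argument is correct, and it takes a slightly different (more direct, purely infinitesimal) route than the paper. The paper first upgrades the invariance $\chi((\Ad g)L)=\chi(L)$ to the \emph{equivariance} of the gradient, $\d\chi((\Ad g)L)=(\Ad g)\,\d\chi(L)$ (relation \refE{equivar}), by differentiating in $L$ and using non-degeneracy; it then specializes $g=\exp(tL)$, for which the left-hand side is constant in $t$, and differentiates at $t=0$ to get $[L,\d\chi(L)]=0$. You instead differentiate the invariance along an arbitrary one-parameter subgroup $\exp(tY)$ to get $\langle\d\chi(X),[Y,X]\rangle=0$, and then transfer the bracket with the associativity $\langle A,[B,C]\rangle=\langle [A,B],C\rangle$ of the invariant form before invoking non-degeneracy. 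Both proofs use exactly the same two inputs (invariance of $\chi$, invariance and non-degeneracy of the form); yours is shorter and self-contained, while the paper's detour through \refE{equivar} is not wasted, since that equivariance relation is reused later (in \refL{Lax_eq}, where $g=\exp(-tM)$, and in the proof of \refT{Hamil}, where it gives $\Psi\,\d\chi(L)\,\Psi^{-1}=\d\chi(K)$). Your preliminary remark that the statement is pointwise in $P\in\Sigma$ is a harmless clarification the paper leaves implicit.
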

\begin{proof}
By definition, invariance of $\chi$ means that for any $g\in\exp\g$
\[
    \chi((\Ad g)L)=\chi(L).
\]
Taking differential of the both sides of the last equality, by its invariance with respect to substitutions, and also by \refE{variation}, we have
\[
   \langle\d\chi((\Ad g)L),\d((\Ad g)L)\rangle=\langle\d\chi(L),\d L\rangle .
\]
Since $\d((\Ad g)L)=(\Ad g)\d L$, and by invariance of the bilinear form (which implies that $\Ad g$ is an orthogonal operator) we have
\[
   \langle (\Ad g^{-1})\d\chi((\Ad g)L),\d L\rangle=\langle\d\chi(L),\d L\rangle .
\]
Since the last equality holds for every $\d L\in\g$, and the bilinear form is non-degenerate, we obtain that $\d\chi(L)$ is equvariant:
\begin{equation}\label{E:equivar}
   \d\chi((\Ad g)L)=(\Ad g)\d\chi(L) .
\end{equation}
Put $g=\exp(tL)$ here, and differentiate the obtained equality at  $t=0$. Since the left hand side is equal to $\d\chi(L)$, and does not depend on  $t$, we will arrive to the statement of the lemma.
\end{proof}
\begin{lemma}\label{L:Lax_eq}    Given $L\in\L$, $M\in\M$ let $\xi_M$ be a vector field on $\L$ defined by
\[
    \xi_ML=[L,M],
\]
$\chi$ be an invariant polynomial on the Lie algebra $\g$. Then
\[
    \xi_M\d \chi(L)=[\d \chi(L),M].
\]
\end{lemma}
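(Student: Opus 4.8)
The plan is to reduce the statement to the equivariance of the gradient that was established inside the proof of \refL{central}, and then to differentiate that identity along a one-parameter group of inner automorphisms. Everything will be done pointwise on $\Sigma$: for fixed $P\in\Sigma$ the values $L(P),M(P)$ lie in the finite-dimensional Lie algebra $\g$, the gradient $\d\chi$ is a polynomial map $\g\to\g$, and the asserted equality is an identity in $\g$ at each such $P$. In particular the pointwise reduction makes it irrelevant that $M$ is only an $M$-operator and need not lie in $\L$; no analytic input beyond smoothness of the polynomial gradient is required.

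First I would pin down the meaning of the left-hand side. By definition $\xi_M$ is the vector field whose value at $L$ is $[L,M]$, so $\xi_M\,\d\chi(L)$ is the \emph{directional derivative} of the map $\d\chi\colon\g\to\g$ at $L$ in the direction $[L,M]$. Concretely, $\xi_M\,\d\chi(L)=\frac{d}{dt}\big|_{t=0}\d\chi(L(t))$ for any smooth curve $L(t)$ with $L(0)=L$ and $\dot L(0)=[L,M]$; since $\d\chi$ is polynomial the result depends only on $\dot L(0)$, so any such curve may be used.

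Next I would recall the equivariance \refE{equivar}, namely $\d\chi((\Ad g)L)=(\Ad g)\d\chi(L)$ for every $g\in\exp\g$, which was obtained in the course of proving \refL{central}. I then specialize to the one-parameter subgroup $g=g(t)=\exp(-tM)$ and differentiate both sides at $t=0$. On the left the inner curve is $L(t)=(\Ad\exp(-tM))L$, whose velocity is $\dot L(0)=[L,M]$, so by the previous paragraph the left-hand derivative is exactly $\xi_M\,\d\chi(L)$. On the right, $\frac{d}{dt}\big|_{t=0}(\Ad\exp(-tM))\d\chi(L)=[\d\chi(L),M]$. Equating the two derivatives yields $\xi_M\,\d\chi(L)=[\d\chi(L),M]$, which is the assertion.

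The one place demanding care, and which I regard as the main (if modest) obstacle, is the bookkeeping of the first and third paragraphs: correctly identifying the action of the vector field $\xi_M$ on the $\g$-valued function $L\mapsto\d\chi(L)$ with the differentiation of the equivariance relation, and keeping the sign in $\dot L(0)=[L,M]$ straight by the judicious choice $g(t)=\exp(-tM)$ (which produces the direction $[L,M]$ directly, with no later sign flip). Once this is set up, the conclusion is immediate.
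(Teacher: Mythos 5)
Your proposal is correct and follows exactly the paper's own argument: substitute $g=\exp(-tM)$ into the equivariance relation \refE{equivar} and differentiate at $t=0$. The extra care you take in identifying $\xi_M\,\d\chi(L)$ with the derivative along the curve $(\Ad\exp(-tM))L$ and in checking the sign is sound and merely fills in details the paper leaves implicit.
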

\begin{proof} Let $g=\exp(-tM)$ in \refE{equivar}. The assertion of the lemma is obtained by differentiation of both parts of the obtained equality at $t=0$.
\end{proof}
\begin{lemma}\label{L:Ma}
For every $a=\{ \chi, P, m\}$, and every $L\in\L$ there exists the only $M$-operator $M_a$ having a unique pole outside $\Gamma$, namely at $P$, where the following relation holds:
\begin{equation}\label{E:sravn}
     M_a(w)=w^{-m}\d \chi(L(w))+O(1)
\end{equation}
($w$ is a local parameter in the neighborhood of $P$), and $M_a(P_j)=0$, $j=1,\ldots,l-g+1$.
If $L\in\L^D$ then $([L,M_a])+D\ge 0$ outside $\Gamma$.
\end{lemma}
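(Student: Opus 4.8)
The plan is to realize $M_a$ as the unique solution of a linear interpolation problem governed by the dimension formula \refE{dimM^D2''}, and then to read off the pole bound for $[L,M_a]$ from \refL{central}. Near $P$ the operator $L$ is meromorphic (a pole is allowed, since $P\in\Pi$), hence so are $\d\chi(L(w))$ and $w^{-m}\d\chi(L(w))$; let $n\ge 0$ be the order of the pole of the latter at $P$ and let $\pi\in\bigoplus_{i=-n}^{-1}\g\,w^{i}$ be its principal part, which is a genuine $\g$-valued principal part by construction. Set $V=\M^{nP}_{\Gamma,\{h\}}$, the space of $M$-operators whose only pole outside $\Gamma$ is at $P$, of order at most $n$; by \refE{dimM^D2''}, $\dim V=(\dim\g)(n+l-g+1)$.

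I would then consider the linear map
\[
  \Phi\colon V\longrightarrow \Big(\bigoplus_{i=-n}^{-1}\g\,w^{i}\Big)\oplus\bigoplus_{j=1}^{l-g+1}\g
\]
sending $M$ to its principal part at $P$ together with the values $M(P_1),\ldots,M(P_{l-g+1})$. The target has dimension $(\dim\g)n+(\dim\g)(l-g+1)=\dim V$, so bijectivity of $\Phi$ is equivalent to its injectivity; granting this, the element $M_a:=\Phi^{-1}(\pi;0,\ldots,0)$ exists and is unique, is an $M$-operator whose only pole outside $\Gamma$ is at $P$, has principal part $\pi$ there (which is exactly \refE{sravn}), and vanishes at the $P_j$.

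Injectivity is the step that uses the normalization points. If $M\in\ker\Phi$, then $M$ has no principal part at $P$, hence is holomorphic on all of $\Pi$, so $M\in\M^{0}_{\Gamma,\{h\}}$, which has dimension $(\dim\g)(l-g+1)$ by \refE{dimM^D2''}; moreover $M$ vanishes at the $l-g+1$ points $P_j\notin\Pi\cup\Gamma$. For these points in general position the evaluation map $\M^{0}_{\Gamma,\{h\}}\to\bigoplus_{j}\g$ is injective — equivalently, the same Riemann--Roch count underlying \refE{dimM^D2''} gives $\dim\M^{-\sum_j P_j}_{\Gamma,\{h\}}=(\dim\g)\big({-}(l-g+1)+(l-g+1)\big)=0$ — so $M=0$. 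Hence $\Phi$ is injective and $M_a$ is well defined.

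Finally, for $L\in\L^{D}$ with $D=\sum_i m_iP_i$ I would bound the poles of $[L,M_a]$ outside $\Gamma$ case by case. At points outside $\Pi\cup\Gamma$ both $L$ and $M_a$ are holomorphic; at $Q\in\Pi$ with $Q\ne P$ the operator $M_a$ is holomorphic while $L$ has a pole of order at most $m_Q$, so $([L,M_a])\ge -m_Q\,Q$. The one delicate point is $P$, and this is precisely where \refL{central} is decisive: writing $M_a=w^{-m}\d\chi(L)+R$ with $R$ holomorphic and pulling the scalar $w^{-m}$ through the bracket,
\[
  [L,M_a]=w^{-m}[L,\d\chi(L)]+[L,R]=[L,R],
\]
since $[L,\d\chi(L)]=0$ by \refL{central}. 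As $R$ is holomorphic and $L$ has at $P$ a pole of order at most $m_P$, the bracket $[L,R]$ has a pole of order at most $m_P$ there, giving $([L,M_a])+D\ge 0$ at $P$ and hence everywhere outside $\Gamma$. I expect the genuine content to sit in the dimension bookkeeping of the first three paragraphs — matching the principal-part dimension $(\dim\g)n$ and the evaluation dimension $(\dim\g)(l-g+1)$ against $\dim V$ via two applications of \refE{dimM^D2''}, with the general position of the $P_j$ as the sole external input — whereas the holomorphy of $[L,M_a]$ is immediate once the scalar $w^{-m}$ is seen to annihilate the leading singularity through \refL{central}.
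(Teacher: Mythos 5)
Your proposal is correct and follows essentially the same route as the paper: the dimension formula \refE{dimM^D2''} matched against the $(\dim\g)\,d$ principal-part conditions and $(\dim\g)(l-g+1)$ normalization conditions for existence and uniqueness, and \refL{central} to reduce $[L,M_a]$ to $[L,O(1)]$ at $P$ so that $([L,M_a])\ge (L)$ outside $\Gamma$. The only difference is that you spell out the injectivity of the interpolation map via a second Riemann--Roch count at the points $P_j$ in general position, a step the paper leaves implicit in its degrees-of-freedom bookkeeping.
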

\begin{proof}
Outside $\Gamma$, the divisor of $M_a$  consists of one point $P$, and has the order $d=\ord_P w^{-m}\d \chi(L(w))$ there. By \refE{dimM^D2''} the dimension of the space of such operators is equal to $(\dim\g)(d+l-g+1)$. The equality \refE{sravn} fixes $(\dim\g)d$ degrees of freedom, and  $(\dim\g)(l-g+1)$ more relation are given by the normalization conditions. Thus the existence, and uniqueness of $M_a$ are proved.

By \refE{sravn}, and \refL{central}, in the neighborhood of the point $P$ we have $[L,M_a]=[L,O(1)]$. The same is true at the other points of $\Pi$ because $M_a$ is holomorphic there. For this reason, $([L,M_a])\ge (L)$ outside $\Gamma$. The $(L)+D\ge 0$ implies $([L,M_a])+D\ge 0$, which proves the second assertion of the lemma.
\end{proof}
\refL{Ma} defines $M_a$ as a function of $L$ which we will denote by $M_a(L)$.

$P\in\Sigma$ is said to be a \emph{regular point} for a Lax operator $L$ if $L(P)$ is well-defined, and is a regular element of the Lie algebra $\g$. $P$ is said to be \emph{non-regular} if this value is well-defined, but is not regular. The set of non-regular points of a given Lax operator is finite. This implies that the set of the Lax operators for which the set of non-regular points has an empty intersection with the set $\Pi$, is open.
\begin{theorem}\label{T:hierarch}
Relations
\begin{equation}\label{E:part_a}
      \partial_aL=[L,M_a]
\end{equation}
where $M_a=M_a(L)$, define a set of commuting vector fields on an open subset in $\L^D$, consisting of Lax operators for which the set of non-regular points has an empty intersection with $\Pi$.
\end{theorem}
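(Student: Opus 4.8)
The plan is to show that the Lie bracket of the vector fields $\partial_a$ and $\partial_b$ vanishes by reducing it, via the Jacobi identity, to a commutator $[L,N]$ with an explicit $M$-operator $N$, and then proving $N=0$ by a dimension-plus-normalization argument. First I would record that on the indicated open subset each $M_a=M_a(L)$ is well defined and depends smoothly on $L$ (by \refL{Ma}, whose construction uses precisely that the non-regular points of $L$ avoid $\Pi$), so that $\partial_aL=[L,M_a]$ is a genuine vector field tangent to $\L^D$ (by \refL{Ma} together with \refT{Lax_corr}), and the directional derivatives $\partial_aM_b$ are meaningful. A direct computation using $\partial_aL=[L,M_a]$, the Leibniz rule, and the Jacobi identity then gives
\[
  [\partial_a,\partial_b]L=[L,N],\qquad N:=\partial_aM_b-\partial_bM_a+[M_a,M_b],
\]
so it suffices to prove $[L,N]=0$, for which $N=0$ is enough.

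Next I would collect three facts about $N$. First, by \refL{mult}, $N$ is an $M$-operator, i.e. it has the correct filtered expansion \refE{M_oper} at every $\ga\in\Gamma$; here one extends $\partial_a,\partial_b$ to vector fields on $\M^D$ satisfying the conditions \refE{local} and \refE{local1}, as in the example following \refL{mult}. Second, since the normalization $M_a(P_j)=0$ holds at the fixed points $P_j$ for \emph{every} $L$ in the open set, differentiating along the flows (and using that the $P_j$ are stationary) gives $(\partial_aM_b)(P_j)=(\partial_bM_a)(P_j)=0$, while $[M_a,M_b](P_j)=[M_a(P_j),M_b(P_j)]=0$; hence $N(P_j)=0$ for $j=1,\ldots,l-g+1$.

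The heart of the argument is the third fact: that $N$ is holomorphic outside $\Gamma$, i.e. the poles of the three terms at $P_a$ and $P_b$ cancel. Writing $A=\d\chi_a(L)$ and $B=\d\chi_b(L)$ for the gradients attached to the triples $a,b$, I would use that by \refE{sravn} the principal part of $M_a$ at $P_a$ equals that of $w^{-m_a}A$ (and likewise for $M_b$ at $P_b$), that $A$ and $B$ commute with $L$ and with each other (by \refL{central}, both being functions of $L$), and that $\partial_aB=[B,M_a]$, $\partial_bA=[A,M_b]$ by \refL{Lax_eq}. Since the local parameter $w$ at the fixed point $P_a\in\Pi$ is flow-independent, these identities express the principal parts of $\partial_aM_b$, $\partial_bM_a$ and $[M_a,M_b]$ at $P_a$ in terms of $w^{-m_a}[A,M_b^{+}]$ and $w^{-m_b}[B,M_a^{+}]$, the superscript $+$ denoting the holomorphic part; a short bookkeeping shows the three contributions cancel identically. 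The same computation applies at $P_b$, and at a point carrying only one of the two poles it is simpler. Thus $N$ has no pole outside $\Gamma$.

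Finally, $N$ is then an $M$-operator holomorphic outside $\Gamma$, so by \refE{dimM^D2''} it lies in a space of dimension $(\dim\g)(l-g+1)$, and the $l-g+1$ vanishing conditions $N(P_j)=0$ are exactly the normalization data that cut this space down to $0$ — the same non-speciality/general-position input that underlies the uniqueness in \refL{Ma}. Hence $N=0$ and $[\partial_a,\partial_b]L=[L,N]=0$, proving commutativity. I expect the main obstacle to be the principal-part cancellation of the preceding paragraph: organizing it cleanly requires handling that $\d\chi(L)$ may itself be singular at $P_a,P_b$ (so that $w^{-m}\d\chi(L)$ contributes several negative powers), and making rigorous the interchange of the flow derivative $\partial_a$ with the Laurent expansion in the fixed coordinate $w$.
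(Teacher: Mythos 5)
Your proposal follows essentially the same route as the paper's proof: reduce $[\partial_a,\partial_b]L$ to $[L,N]$ with $N=\partial_aM_b-\partial_bM_a+[M_a,M_b]$ via the Jacobi identity, show $N$ is an $M$-operator (Lemma~\ref{L:mult}) that is holomorphic outside $\Gamma$ by cancelling principal parts using \eqref{E:sravn}, Lemma~\ref{L:central} and Lemma~\ref{L:Lax_eq}, and kill it by the dimension formula \eqref{E:dimM^D2''} plus the normalization at the $P_j$. The only point to tighten is your justification that $\d\chi_a(L)$ and $\d\chi_b(L)$ commute: ``both being functions of $L$'' is not quite the reason — the paper uses that $L(w)$ is a regular element near $P$ (this is exactly where the hypothesis on non-regular points enters), so its centralizer is an abelian Cartan subalgebra containing both gradients.
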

\begin{proof}
By \refL{Ma} and \refT{Lax_corr}, $\partial_a$ is a tangential vector field on $\L^D$.

The commutator $[\partial_a,\partial_b]$ is being computed as follows:
\[
\partial_a\partial_bL=
\partial_a[L,M_b]=
[\partial_aL,M_b]+[L,\partial_aM_b]=
[[L,M_a],M_b]+[L,\partial_aM_b].
\]
Therefore,
\[
(\partial_a\partial_b-\partial_b\partial_a)L=
[L,\partial_aM_b-\partial_bM_a]+[[L,M_a],M_b]-[[L,M_b],M_a].
\]
Proceeding with help of the Jacoby identity, we obtain
\[
    [\partial_a,\partial_b]L=[L,\partial_aM_b-\partial_bM_a+[M_a,M_b]].
\]

In order $\partial_a$ and $\partial_b$ commuted, it is sufficient that $\partial_aM_b-\partial_bM_a+[M_a,M_b]=0$. Next, we verify that it is really the case. First, we show that the expression on the left hand side is holomorphic at an arbitrary $P\in\Pi$. We assume first that $a$ and $b$ correspond to the same $P\in\Pi$, i.e. $a=(\chi_a,P,m)$, $b=(\chi_b,P,m')$. We denote $M_a-w^{-m}\d\chi_a(L)$ by $M_a^+$, and $M_b-w^{-m'}\d\chi_b(L)$ by $M_b^+$. Then, by \refE{sravn}, $M_a^+$ и $M_b^+$ are holomorphic at~$P$. We have $\partial_aM_b=w^{-m'}\partial_a\d\chi_b(L) + \partial_aM_b^+$. Applying \refL{Lax_eq}, we obtain
\[
  \partial_aM_b=w^{-m'}[\d\chi_b(L),M_a] + \partial_aM_b^+
               =w^{-m'}[\d\chi_b(L),w^{-m}\d\chi_a(L)+M_a^+] + \partial_aM_b^+ .
\]
By \refL{central}, $\d\chi_a(w)$ and $\d\chi_b(w)$ commute with $L(w)$ for any $w$, where  $w$ is a local coordinate in the neighborhood of $P$, as above. According to the assumptions of the theorem, $L(w)\in\g$ is a regular element for sufficiently small $w$. For this reason, its centralizer coincides with the Cartan subalgebra containing it, hence it is commutative. Therefore
\begin{equation}\label{E:vcomm}
   [\d\chi_a(w),\d\chi_b(w)]=0.
\end{equation}
Applying \refE{vcomm} to the previous relation, we obtain
\[
\partial_aM_b = w^{-m'}[\d\chi_b(L),M_a^+] + \partial_aM_b^+ ,
\]
and a similar formula holds for $\partial_bM_a$.

Further on,
\[
\begin{aligned}
  \left[M_a,M_b\right]&=[M_a^+ + w^{-m}\d\chi_a(L),M_b^+ + w^{-m'}\d\chi_b] \\
  &= [M_a^+,M_b^+]+[M_a^+,w^{-m'}\d\chi_b] + [w^{-m}\d\chi_a,M_b^+].
\end{aligned}
\]
Therefore $\partial_aM_b-\partial_bM_a+[M_a,M_b]=\partial_aM_b^+ -\partial_bM_a^++[M_a^+,M_b^+]$. The function on the right hand side of this equality is holomorphic in the neighborhood of $P$.

By \refL{mult}, $\partial_aM_b-\partial_bM_a+[M_a,M_b]$ is an $M$-operator. By just proved, this $M$-operator has zero divisor outside $\Gamma$. According to \refE{dimM^D2''} the dimension of the space of such $M$-operators is equal to $l-g+1$. Because $M_a$ and $M_b$ satisfy the normalizing condition $M_a(P_j)=M_b(P_j)=0$, $j=1,\ldots,l-g+1$, our operator also satisfies this condition. Therefore $\partial_aM_b-\partial_bM_a+[M_a,M_b]~=~0$.

In the case when $a$ and $b$ correspond to different points, the proof is similar.
\end{proof}
\begin{corollary}\label{C:coroll}
The Lax equations $\dot L=[L,M_a]$ give commuting flows on $\L^D$.
\end{corollary}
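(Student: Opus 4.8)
The plan is to deduce the corollary directly from \refT{hierarch}, since that theorem already carries all the analytic and algebraic substance. First I would observe that the Lax equation $\dot L=[L,M_a]$ with $M_a=M_a(L)$, as furnished by \refL{Ma}, is nothing but the equation governing the integral curves of the vector field $\partial_a$ constructed in \refT{hierarch}: the flow of $\partial_a$ through a point $L_0\in\L^D$ is precisely the solution $L(t)$ of $\dot L=[L,M_a(L)]$ with $L(0)=L_0$. By the theorem (via \refL{Ma} and \refT{Lax_corr}) the field $\partial_a$ is tangent to $\L^D$ on the open subset $U\subseteq\L^D$ of Lax operators whose non-regular points avoid $\Pi$, so the corresponding flow stays within $\L^D$.

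Next I would invoke the standard fact of differential geometry that pairwise commuting vector fields on a finite-dimensional manifold generate local flows that commute: writing $\Phi^a_s$ and $\Phi^b_t$ for the time-$s$ and time-$t$ flows of $\partial_a$ and $\partial_b$, the relation $[\partial_a,\partial_b]=0$ forces $\Phi^a_s\circ\Phi^b_t=\Phi^b_t\circ\Phi^a_s$ wherever both compositions are defined. The hypothesis $[\partial_a,\partial_b]=0$ is exactly the conclusion of \refT{hierarch}, and finite-dimensionality of the phase space follows from the dimension formula \refE{dimL^D}, which gives $\dim\L^D=(\dim\g)(\deg D-g+1)$; hence the classical result applies without modification.

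The remaining points require bookkeeping rather than genuine ingenuity, and I do not expect a real obstacle here. One must restrict to the open set $U$ on which $M_a(L)$ is well defined and $\partial_a$ is tangential, and one must read the commutativity of the flows locally in time, since the fields $\partial_a$ need not be complete. With these conventions fixed, commutativity of the Lax flows $\Phi^a_s$ is immediate, which is precisely the assertion of the corollary.
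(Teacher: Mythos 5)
Your argument is correct and follows essentially the same route as the paper: the paper's one-line proof likewise observes that the Lax equations are the integral-curve equations $\dot L=\partial_a L$ of the commuting tangential vector fields $\partial_a$ furnished by \refT{hierarch}. Your additional care about the open set of regular Lax operators, finite-dimensionality via \refE{dimL^D}, and local-in-time flows is sound bookkeeping that the paper leaves implicit.
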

Indeed, these equations can be written as $\dot L=\partial_aL$ where $\partial_a$ are commuting tangential vector fields on $\L^D$.
\begin{example}\label{Ex:M_cl}
In the case of classical Lie algebras from the example \ref{pol_inv_cl} the space of invariant polynomials is generated by the polynomials of the form $\chi_p(L)=\tr L^p$, $p\in\Z_+$ where $p$ is even for orthogonal and symplectic algebras. Then $\d\chi_p(L)=pL^{p-1}$.
\refL{central} and relation \refE{vcomm} become obvious, and \refL{Lax_eq} can be easy proven by induction.
\end{example}
%%%%%%%%%%%%%%%%%%%%%%%%%%%%%%%%%%%%%%%%%%%%%%%%%%%%%%%%%%%%%%%%%%%

\section{Hamiltonian theory}\label{S:Ham_th}

In this chapter, we construct a symplectic structure, and involutive system of Hamiltonians, giving a commutative hierarchy of flows $\dot L=[L,M_a]$ constructed in the previous chapter. As above, all constructions are given in terms of semisimple Lie algebras, and their invariants.

It should be stressed that the basic principles of the Hamiltonian theory of Lax equations with the spectral parameter on a Riemann surfaces are established in \cite{Klax}. Here, we actually show that these principles work in the more general context.

The symplectic structure we use below is invented by I.M.Krichever in \cite{Klax} for $\g=\gl(n)$ and is generalized here to the case of an arbitrary semisimple Lie algebra without any modification (for the classical Lie algebras see \cite{Sh_DGr}, in more general context of soliton theory see \cite{KrPhong}). We call it the  \emph{Krichever--Phong symplectic structure}. For the classical Lie algebras everything presented here has been formulated in terms of Tyurin parameters in \cite{Sh_DGr}, for this reason we omit here some details characteristic for that way of presentation (for example, the contribution of the points $\ga\in\Gamma$ into the symplectic structure in terms of the Tyurin parameters, and similar).

In this section, we keep the following convention on the notation. We do not distinguish between notation of elements of Lie groups and Lie algebras, and notation of the operators of their action. An element, and its operator  are denoted by the same letter. In other words, we keep the notation conventional for the case of classical groups where the operator is the same as the action of the element in the standard representation. In general, we could consider the action in the adjoint action instead (see \cite{Sh_MMJ_15} for example): the presentation below does not depeend on it.

%%%%%%%%%%%%%%%%%%%%%%%%%%%%%%%%%%%%%%%%%%%%%%
\subsection{Symplectic structure}\label{SS:KrPh}

Following \cite{Klax}, we introduce here a closed skew-simmetric  2-form (the Krichever--Phong form) on~$\L^D$, becoming nondegenerate (i.e. a symplectic structure) on some submanifold $\P^D\subset\L^D/G$ (where $G$ is a connected Lie group with the Lie algebra $\g$).

The Krichever--Phong form, as well as an invariant scalar product on $\L_{\Gamma,\{h\}}$, depends on a choice of the holomorphic differential $\varpi$ on the Riemann surface. In the context of the theory of integrable systems the nonuniqueness of the invariant scalar product in the case of loop algebras has been noticed in \cite[p. 66]{R_S_TSh}, for example. In  \cite{GKMMM} the necessity of the choice of $\varpi$ has been explained from the point of view of Seiberg--Witten theory.

Let $\Psi\, :\, \Sigma\to G$ be the function diagonalizing   $L$\label{cannorm} at a generic point of the Riemann surface, $\d L$ and $\d\Psi$ are external differentials of the corresponding functions, which are 1-forms on $\L^D$. Similarly, we consider the function $K\, :\, \Sigma\to \exp\h$ defined by the relation
\[
   \Psi L=K\Psi ,
\]
and $\h$-valued 1-form $\d K$. $K$ and $\Psi$ are defined up to the action of the Weyl group which does not affect anything below.

Let $\Omega$ be a 2-form on $\L^D$ taking values in the space of meromorphic functions on $\Sigma$, defined by the relation
\[
   \Omega=\tr(\Psi^{-1}\d\Psi\wedge\d L-\d
   K\wedge\d\Psi\cdot\Psi^{-1}).
\]

We choose a holomorphic differential $\varpi$ on $\Sigma$ and define a scalar 2-form $\w$ on $\L^D$ by the relation
\[
    \w=\sum\limits_{\ga\in\Gamma} \res_\gamma\Omega \varpi+
    \sum\limits_{p\in\Pi}\res_p\Omega \varpi.
\]
$\Omega$ has one more representation:
\[
 \Omega=2\d\,\tr\left( \d\Psi\cdot\Psi^{-1}K\right)
\]
which obviously implies that $\w$ is closed. As it is pointed out above, we do not consider here the question of nondegeneracy of $\w$, and refer to \cite{Klax} for the corresponding discussion. We only notice that one of the conditions defining symplectic submanifolds in $\L^D$ is holomorphy of the 1-form $\d K\varpi$ on $\Pi$.

%%%%%%%%%%%%%%%%%%%%%%%%%%%%%%%%%%%%%%%%%%%%%%%%%
\subsection{System of Hamiltonians in involution}\label{SS:Ham_sys}

We will establish here that the hierarchies of commuting flows defined by \refT{hierarch} are Hamiltonian with respect to the Krichever--Phong symplectic structure.

Let $a=\{ \chi, P,m\}$ be as introduced in \refSS{hierarch}, and let
\[
H_a(L)=\res_P w^{-m}\chi(L(w))\varpi(w).
\]

For a vector field $e$ on $\L^D$, let $i_e\w$ be the 1-form defined by the relation $i_e\w(X)=\w(e,X)$ (where $X$ is an arbitrary vector field). By definition, $e$ is Hamiltonian if $i_e\w=\d H$ where $H$ is a function called Hamiltonian of $e$. The following theorem asserts the vector fields $\partial_a$ are Hamiltonian. For the classical Lie algebras it is proved in \cite{Klax,Sh_DGr}.
\begin{theorem}\label{T:Hamil}
Let $\partial_a$ be the vector field defined by the relation \refE{part_a}.
Then
\[
   i_{\partial_a}\w=\d H_a .
\]
\end{theorem}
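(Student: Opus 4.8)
The plan is to exploit that the flow $\partial_a$ is \emph{isospectral} and then to contract the Krichever--Phong form with it, localizing the answer at $P$ by the residue theorem. First I would record how the diagonalizing data vary along $\partial_a$. Since $\partial_aL=[L,M_a]$ is, pointwise on $\Sigma$, an infinitesimal conjugation by $M_a$, the eigenvalues of $L$ are preserved, so $\partial_a K=0$, while the eigenvector matrix transforms by $\partial_a\Psi=\Psi M_a$. The normalization $M_a(P_j)=0$ from \refL{Ma} is precisely what makes this compatible with the normalization of $\Psi$, so no extra diagonal (Cartan) correction enters.

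Next I would contract $\Omega=\tr(\Psi^{-1}\d\Psi\wedge\d L-\d K\wedge\d\Psi\,\Psi^{-1})$ with $\partial_a$, using $i_{\partial_a}(\beta\wedge\gamma)=\beta(\partial_a)\gamma-\gamma(\partial_a)\beta$ together with $(\Psi^{-1}\d\Psi)(\partial_a)=M_a$, $(\d\Psi\,\Psi^{-1})(\partial_a)=\Psi M_a\Psi^{-1}$, $\d L(\partial_a)=[L,M_a]$ and $\d K(\partial_a)=0$. This produces
\[
   i_{\partial_a}\Omega=\tr(M_a\,\d L)+\tr(\Psi M_a\Psi^{-1}\,\d K)-\tr([L,M_a]\,\Psi^{-1}\d\Psi).
\]
The first summand $\tr(M_a\,\d L)\,\varpi$ is a genuinely single-valued meromorphic $1$-form on $\Sigma$ whose only poles lie in $\Gamma\cup\Pi$; by the residue theorem its residues over $\Gamma\cup\Pi$ sum to zero, so it drops out of $\w$. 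The remaining two summands involve $\Psi$ and $K$, which carry the branching of the spectral curve, and it is exactly their residues that survive.

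The Hamiltonian then emerges from the $\d K$--term via invariance. Using equivariance \refE{equivar} of the gradient, $\Psi\,\d\chi(L)\Psi^{-1}=\d\chi(K)$, and the defining relation \refE{sravn}, near $P$ I would write $\Psi M_a\Psi^{-1}=w^{-m}\d\chi(K)+O(1)$, so that the leading contribution to the residue at $P$ is
\[
   \res_P w^{-m}\,\tr\!\bigl(\d\chi(K)\,\d K\bigr)\varpi=\res_P w^{-m}\,\d\bigl(\chi(K)\bigr)\varpi=\d\!\left(\res_P w^{-m}\chi(L)\varpi\right)=\d H_a ,
\]
the second equality because $\tr(\d\chi(K)\,\d K)=d\chi(K)[\d K]$ is the variation of $\chi(K)$, and the third because $\chi(K)=\chi(L)$. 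Thus the residue of the $\d K$--term at $P$ reproduces $\d H_a$ exactly.

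The hard part is the residue bookkeeping at all remaining points. I must show that the residues of $\tr(\Psi M_a\Psi^{-1}\,\d K)\varpi$ and of $\tr([L,M_a]\Psi^{-1}\d\Psi)\varpi$ cancel at the points of $\Pi\setminus\{P\}$, at the branch points of the spectral curve, and, most delicately, at the points of $\Gamma$; and that at $P$ the $O(1)$ corrections to $M_a$ together with the commutator term contribute nothing beyond $\d H_a$. At $P$ this uses \refL{central}: since $[\d\chi(L),L]=0$, one has $[L,M_a]=[L,O(1)]$, so the commutator term is no more singular than $L$ itself. At $\Gamma$ the vanishing rests on the grading/filtration structure of the Lax expansions, exactly as in the holomorphy argument of \refT{central},$1^\circ$: the pairing $\langle\g_i,\g_j\rangle=0$ for $i+j<0$ kills the offending residues, with the $\nu h/z$ head of $M_a$ controlled against the expansion of $\d K\,\varpi$, whose holomorphy at $\Gamma$ is one of the conditions cutting out the symplectic submanifold. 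I expect this $\Gamma$--cancellation, reproved here from the grading rather than from explicit Tyurin parameters as in \cite{Klax,Sh_DGr}, to be the main obstacle; assembling it with the residue-theorem vanishing of the first summand yields $i_{\partial_a}\w=\d H_a$.
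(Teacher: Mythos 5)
Your proposal founders on its very first step: the variation of the diagonalizing matrix along the flow is \emph{not} $\partial_a\Psi=\Psi M_a$. Differentiating $\Psi L\Psi^{-1}=K$ along $\partial_aL=[L,M_a]$ only forces $(\partial_a\Psi)\Psi^{-1}-\Psi M_a\Psi^{-1}$ to commute with $K$, i.e.\ to be diagonal; the correct relation is $\partial_a\Psi=\Psi M_a-F_a\Psi$ with $F_a=-\partial_a\Psi\cdot\Psi^{-1}+\Psi M_a\Psi^{-1}=\Psi(\partial_a+M_a)\Psi^{-1}$ a nonzero $\h$-valued function. The normalization $M_a(P_j)=0$ fixes $M_a$ uniquely inside its class of $M$-operators; it does nothing to kill this diagonal term. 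Indeed $F_a$ cannot vanish: $M_a$ has a pole of order $m$ (at least) at $P$ while $\Psi$ is holomorphic with holomorphic inverse there, so $\partial_a\Psi=\Psi M_a$ would make $\partial_a\Psi$ singular at $P$; the principal part of $\Psi M_a\Psi^{-1}$, namely $w^{-m}\d\chi(K)$, must be absorbed by $F_a$. This $F_a$ is precisely the carrier of the Hamiltonian in the paper's proof, which reduces $\Lambda=\Omega(\partial_a,\cdot)$ to $\tr(2M_a\d L-2\d K F_a)$ and localizes $\res_P\tr(\d K\,F_a)\varpi=\d H_a$.

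The omission is fatal rather than cosmetic. From $\Psi L=K\Psi$ and cyclicity of the trace one has the identity $\tr\,\d\Psi[L,M_a]\Psi^{-1}=\tr(-\d LM_a+\d K\,\Psi M_a\Psi^{-1})$ (this is exactly the paper's manipulation of the ``middle term''). Substituting it into your contraction
\[
i_{\partial_a}\Omega=\tr(M_a\,\d L)+\tr(\Psi M_a\Psi^{-1}\,\d K)-\tr([L,M_a]\,\Psi^{-1}\d\Psi)
\]
collapses the second and third summands into $\tr(\d LM_a)$, so the whole expression equals $2\tr(M_a\,\d L)$ identically as a meromorphic $1$-form with poles only in $\Pi\cup\Gamma$; the residue theorem you invoke for the first summand then kills everything and yields $i_{\partial_a}\w=0$, not $\d H_a$. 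Your correct observations --- the vanishing of $\sum\res\,\tr(M_a\d L)\varpi$, the identification $\res_Pw^{-m}\tr(\d\chi(K)\,\d K)\varpi=\d H_a$ via equivariance, $\partial_aK=0$, and the need for holomorphy of the spectral data at $\Gamma$ (the paper's \refL{elimin} and \refL{specMhol}) --- all survive once $F_a$ is restored, because $F_a$ and $\Psi M_a\Psi^{-1}$ share the same principal part at $P$; but without the $-F_a\Psi$ correction there is no term left from which $\d H_a$ can emerge.
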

Before we start proving the theorem, we remark the following. The operators $L$, $\partial_a+M_a$ commute by the Lax equation, for this reason they are diagonalized by the same function $\Psi$. The diagonal forms of these two operators are
\begin{equation}\label{E:eigenPsi}
  K=\Psi L\Psi^{-1},\ F_a=\Psi (\partial_a+M_a)\Psi^{-1}.
\end{equation}
An important role in the proof of the theorem is played by the holomorphy of $K$ and $F_a$ as functions on $\Sigma$. It follows from the following two lemmas which we prove below in \refSS{Holomorphy}.
\begin{lemma}\label{L:elimin}
The poles of the elements $L\in\L$ at the points of the set $\Gamma$ can be eliminated by means of conjugation by a local holomorphic function on $\Sigma$ taking values in $\exp\h$. In a neighborhood of a $\ga\in\Gamma$, this function has the form $e^{-h\ln z}$ where $h$ gives the grading at~$\ga$.
\end{lemma}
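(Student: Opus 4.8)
The plan is to work locally near a fixed $\ga\in\Gamma$, with local coordinate $z$ and grading element $h\in\h$, and to verify that the single candidate named in the statement, $\phi(z)=e^{-h\ln z}$, does the job. Since $h\in\h$, for each $z\ne 0$ the element $\phi(z)$ lies in $\exp\h\subset G$ and is invertible, so conjugation $L\mapsto\phi L\phi^{-1}=\Ad(\phi)L$ is a ($z$-dependent) inner automorphism; moreover $\phi$ is holomorphic on the punctured neighbourhood of $\ga$. The entire content of the lemma is the interaction of this conjugation with the $\Z$-grading, which I will make explicit.

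The key computation rests on the identity $\Ad(e^{X})=e^{\ad X}$. Taking $X=-h\ln z$ gives $\Ad(\phi(z))=e^{-(\ln z)\ad h}$; since, by the construction in \refSS{constr}, $\ad h$ acts on the grading subspace $\g_p$ as multiplication by $p$, the operator $\Ad(\phi(z))$ acts on $\g_p$ simply as multiplication by $z^{-p}$. Thus conjugation by $\phi$ lowers the degree-$p$ homogeneous part of any element by exactly $p$ powers of $z$, which is precisely the shift needed to cancel the grading filtration against the pole.

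Next I would apply this to the expansion \refE{ga_expan} of an element $L\in\L$ at $\ga$. Writing $L(z)=\sum_{p\ge -k}L_pz^p$ with $L_p\in\tilde\g_p=\bigoplus_{q=-k}^{p}\g_q$, and decomposing each coefficient as $L_p=\sum_{q=-k}^{p}L_{p,q}$ with $L_{p,q}\in\g_q$, one obtains
\[
  \Ad(\phi(z))\,L(z)=\sum_{p\ge -k}\ \sum_{q=-k}^{p} z^{-q}L_{p,q}\,z^{p}
  =\sum_{p\ge -k}\ \sum_{q=-k}^{p} L_{p,q}\,z^{p-q}.
\]
Because the filtration condition $L_p\in\tilde\g_p$ forces $q\le p$ in every surviving summand, each exponent $p-q$ is a non-negative integer; hence the conjugated function is holomorphic at $z=0$, and the pole of $L$ at $\ga$ is eliminated.

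The one point that must be handled with care — and where the hypotheses are genuinely used — is single-valuedness. The factor $\ln z$ makes $\phi(z)$ itself multivalued on the punctured neighbourhood, but $\Ad(\phi(z))$ acts on $\g_p$ by the \emph{integer} power $z^{-p}$, which is single-valued precisely because the grading is by integers, i.e. $\a_i(h)\in\Z_+$ for every simple root. Thus, although $\phi$ is defined only up to the monodromy of $\ln z$, the conjugated operator $\Ad(\phi)L$ is a genuine single-valued holomorphic $\g$-valued function near $\ga$. This is the main (and essentially only) subtlety; everything else is the bookkeeping of grading degrees displayed above, and the same argument applies verbatim at each point of $\Gamma$.
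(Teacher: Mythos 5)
Your proof is correct and follows essentially the same route as the paper: both use that $\Ad e^{-h\ln z}$ acts on $\g_s$ as multiplication by $z^{-s}$, so the filtration condition $L_p\in\tilde\g_p$ forces every resulting exponent $p-s$ to be nonnegative. Your explicit remark on single-valuedness spells out what the paper dismisses with "a choice of the branch of $\ln z$ does not matter," but it is the same argument.
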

\begin{lemma}\label{L:specMhol}
The spectrum $F_a$ of the operator $\partial_a+M_a$ is holomorphic at the points of $\Gamma$ along solutions of the equation  $\partial_a L=[L,M_a]$.
\end{lemma}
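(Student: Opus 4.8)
The plan is to prove holomorphy of $F_a$ at each $\ga\in\Gamma$ by the same local gauge‑elimination that underlies \refL{elimin}, applied now to the whole connection $\partial_a+M_a$ rather than to $L$ alone. Recall that, by the Lax equation, $L$ and $\partial_a+M_a$ commute, so $\Psi$ simultaneously brings them to $K=\Psi L\Psi^{-1}$ and $F_a=\Psi(\partial_a+M_a)\Psi^{-1}$. First I would record the infinitesimal isospectrality that identifies $F_a$ as a spectrum: differentiating $K=\Psi L\Psi^{-1}$ along the flow and using $\partial_aL=[L,M_a]$ together with $\partial_a\Psi^{-1}=-\Psi^{-1}(\partial_a\Psi)\Psi^{-1}$ gives $[K,F_a^{\mathrm{mat}}]=\partial_aK$, where $F_a^{\mathrm{mat}}=\Psi M_a\Psi^{-1}-(\partial_a\Psi)\Psi^{-1}$ is the matrix part of $F_a$. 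Since $K$ is diagonal, this forces $F_a^{\mathrm{mat}}$ to be diagonal (and $\partial_aK=0$); thus $F_a$ genuinely is a spectrum, and only its holomorphy is in question.

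Next, near a fixed $\ga\in\Gamma$ I would factor $\Psi=\Phi g$, where $g=e^{-h\ln z}\in\exp\h$ is the pole‑removing function of \refL{elimin} and $\Phi$ is the holomorphic, invertible function diagonalizing the now‑holomorphic $gLg^{-1}$. Conjugating the connection gives $g(\partial_a+M_a)g^{-1}=\partial_a+\widetilde M_a$ with $\widetilde M_a=\Ad(g)M_a-(\partial_ag)g^{-1}$, and a further conjugation by $\Phi$ contributes only the holomorphic term $-(\partial_a\Phi)\Phi^{-1}$. Because $\Phi$ is holomorphic and invertible, holomorphy of $F_a$ is equivalent to holomorphy of $\widetilde M_a$, and this is what I would establish by a direct Laurent computation in the spirit of the proof of \refL{mult}.

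For that computation I would decompose everything into $\ad h$‑eigenspaces, on which $\Ad(g)$ acts as multiplication by $z^{-s}$ on $\g_s$. The filtration conditions $M_{a,i}\in\tilde\g_i$ for $i<0$ make $\Ad(g)M_a^-$ holomorphic exactly as in \refL{elimin}, so the only possible poles of $\Ad(g)M_a$ come from $\nu_a h/z$ (which is $\ad h$‑invariant, hence unchanged) and from the components of $M_{a,i}$ ($0\le i<k$) lying in $\g_s$ with $s>i$. I would then compute $(\partial_ag)g^{-1}$ from $g=e^{-h\ln z}$ using $\partial_az=-\nu_a$ and $\partial_ah=[h,M_{a,0}]+O(z)$ of \refE{local1}: the $\nu_ah/z$ it produces cancels the $\nu_ah/z$ in $\Ad(g)M_a$, while the $[h,M_{a,0}]$ piece, propagated through the exponential, contributes a part $(z^{-s}-1)M_{a,0}^s$ in each $\g_s$ whose polar term cancels the $s>0$ poles generated by $M_{a,0}$. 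The conclusion would be that $\widetilde M_a$ is holomorphic at $\ga$, hence so is $F_a$.

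The main obstacle is the cancellation of the \emph{subleading} poles. The relation $\partial_ah=[h,M_{a,0}]+O(z)$ only removes the top‑order pole, whereas $\Ad(g)M_a$ carries poles of every order from $1$ to $k$ coming from the positive‑degree components of $M_{a,1},\dots,M_{a,k-1}$. Removing all of them requires fixing the full $z$‑expansion of $\partial_ah$ — the refinement of \refE{local1} promised in the footnote to \refL{mult} and recorded as \refE{dvih} — and it is precisely here that the Lax equation enters through the relations \refE{Tyur_eq}: these expressions for $\dot L_p$ in terms of the brackets $[L_i,M_j]$ and the grading are exactly the integrability conditions guaranteeing that a consistent evolution of $h$ making every pole of $\widetilde M_a$ cancel actually exists. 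A secondary point to verify is that $\Phi$ and $(\partial_a\Phi)\Phi^{-1}$ are holomorphic at $\ga$, which follows from holomorphy and (generic) regularity of $gLg^{-1}$ there, i.e.\ from the regular‑point hypothesis used in \refT{hierarch}.
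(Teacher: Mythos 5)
Your overall strategy --- factor $\Psi$ through the pole-removing function of \refL{elimin}, gauge-transform the connection $\partial_a+M_a$, decompose into $\ad h$-eigenspaces, and cancel the resulting poles against $(\partial_ag)g^{-1}$ using the time-dependence of the grading data --- is the same as the paper's. But the decisive step is left open, and the mechanism you propose for closing it is not the right one. You correctly identify that $\Ad(e^{-h\ln z})M_a$ has poles of orders $1,\dots,k$ coming from the components of $M_{a,i}$ ($0\le i<k$) lying in $\g_s$ with $s>i$, and that $\partial_ah=[h,M_{a,0}]+O(z)$ kills only the poles generated by $M_{a,0}$. You then assert that the remaining cancellations follow from the Lax relations \refE{Tyur_eq} as ``integrability conditions.'' They do not: \refE{Tyur_eq} constrains $\dot z$ and the coefficients $\dot L_p$, and says nothing about the subleading $z$-expansion of $\partial_ah$. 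The point of the footnote to \refL{mult} is precisely that the evolution of $h$ beyond leading order is \emph{additional data to be chosen}, not something forced by the Lax equation; so the crux of the lemma --- that all poles of orders $1,\dots,k$ can be cancelled simultaneously --- remains unproved in your write-up.

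The paper closes this gap by construction rather than by appeal to integrability. One writes $h=g^{-1}h_0g$ with $h_0$ fixed and $g$ a local holomorphic $G$-valued function with $g(\ga)$ in the centralizer of $h_0$, factors $\Psi=\Psi_0e^{-h_0\ln z}g$ with $\Psi_0$ holomorphic and invertible, and obtains $F_a=\Psi_0e^{-h_0\ln z}\,gAg^{-1}\,e^{h_0\ln z}\Psi_0^{-1}+O(1)$, where $A=-g^{-1}\partial_ag+\frac{\nu_a}{z}\left(h_0-g^{-1}h_0g\right)+\sum_{i\ge -k} M_{a,i}z^i$. Then $g$ is \emph{defined} to evolve by \refE{dvig}, i.e. $g^{-1}\partial_ag=\frac{\nu_a}{z}\left(h_0-g^{-1}h_0g\right)+\sum_{0\le i<k}M_{a,i}z^i$ (well posed, since $g(\ga)$ centralizes $h_0$ and the first term on the right is therefore $O(1)$). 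With this choice $A$ reduces to $\sum_{i<0}M_{a,i}z^i+\sum_{i\ge k}M_{a,i}z^i$, which conjugation by $e^{-h_0\ln z}$ leaves holomorphic exactly as in \refL{elimin}; the leading term of this prescription reproduces \refE{dvih1} and hence \refE{local1}. Note also that differentiating $e^{-h\ln z}$ with a time-dependent $h$ directly, as you propose, produces $\ln z$ terms; these reorganize into $\Ad(e^{-h\ln z})X-X$ (your $(z^{-s}-1)$ factors) only after passing to the parametrization $h=g^{-1}h_0g$, which is a further reason the paper works with the local gauge $g$ rather than with $\partial_ah$ itself.
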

We assume also that $\Psi$ is holomorphic, and has holomorphic inverse on $\Pi$ which is a generic position requirement for $L$.
\begin{proof}[Proof of \refT{Hamil}]
By definition
\[
i_{\partial_a}\w=\w(\partial_a,\cdot)=-{1\over 2}\left(
\sum_{\ga\in\Gamma}^K\res_\ga\Lambda+\sum_{p\in\Pi}\res_p\Lambda\right),
\]
where $\Lambda=\Omega(\partial_a,\cdot)$. Since
$\d\Psi(\partial_a)=\partial_a\Psi$ and $\d
L(\partial_a)=\partial_aL$ (the evaluation of a differential on a vector field is equal to the derivative along the vector field), we obtain
\[
\Lambda=\tr\left(\partial_a\Psi\cdot\d
L\cdot\Psi^{-1}-\d\Psi\cdot\partial_a
L\cdot\Psi^{-1}-\partial_aK\cdot\d\Psi\cdot\Psi^{-1}+\d
K\cdot\partial_a\Psi\cdot\Psi^{-1}\right).
\]
By the Lax equation
\begin{align*}
\Lambda=&\tr\left((\Psi M_a-F_a\Psi)\d
L\cdot\Psi^{-1}-\d\Psi[L,M_a]\Psi^{-1}+\d K(\Psi
M_a-F_a\Psi)\Psi^{-1}\right)\\
=&\tr\left( M_a\d L-F_a\Psi\d
L\cdot\Psi^{-1}-\d\Psi[L,M_a]\Psi^{-1}+\d K\Psi M_a\Psi^{-1}-\d
KF_a\right).
\end{align*}
Next, transform the middle term. From $\Psi L=K\Psi$ we derive
$\d\Psi\cdot L=-\Psi\d L+\d K\Psi+K\d\Psi$. Therefore
\begin{align*}
\tr\,\d\Psi[L,M_a]\Psi^{-1}= &\,\tr\left( (\d\Psi\cdot
L)M_a\Psi^{-1}-\d\Psi M_aL\Psi^{-1}\right)\\
=&\,\tr\left( (-\Psi\d L+\d K\Psi+K\d\Psi)M_a\Psi^{-1}-\d\Psi
M_aL\Psi^{-1}\right)\\
=&\,\tr\left( -\Psi\d LM_a\Psi^{-1}+\d K\Psi M_a\Psi^{-1}+K\d\Psi
M_a\Psi^{-1}-\d\Psi M_aL\Psi^{-1}\right).
\end{align*}
The last two terms annihilate because
\[
\tr\left( \d\Psi M_aL\Psi^{-1}\right)=\tr\left( \d\Psi
M_a\Psi^{-1}(\Psi L\Psi^{-1})\right)=\tr\left( \d\Psi
M_a\Psi^{-1}K\right),
\]
and we obtain
\[
\tr\,\d\Psi[L,M_a]\Psi^{-1}=\tr\left( -\d LM_a+\d K\Psi
M_a\Psi^{-1}\right).
\]
Substituting this to the last expression for $\Lambda$, we obtain
\[
\Lambda=\tr\left( 2M_a\d L-F_a\Psi\d L\cdot\Psi^{-1}-\d
KF_a\right).
\]
The last two terms in brackets are equal under symbol of trace,  which can be derived by replacing $\Psi\d L$ with $-\d\Psi L+\d
K\Psi+K\d\Psi$, and making use of commutativity $K$ and $F_a$. Finally
\[
\Lambda=\tr\left( 2M_a\d L-2\d KF_a\right).
\]
This implies
\begin{equation}\label{E:final}
i_{\partial_a}\w=\sum_{p\in\Pi}\res_p\tr(\d K\, F_a)\varpi-R_a,
\end{equation}
where
\begin{equation}\label{E:R_a}
R_a=\sum_{\ga\in\Gamma}\res_\ga\tr(\d LM_a)\varpi+\sum_{p\in\Pi}
\res_p\tr(\d LM_a)\varpi.
\end{equation}
In general, there should be the sum of residues of the 1-form $\tr(\d K\,
F_a)\varpi$ over the points $\ga\in\Gamma$  in \refE{final}, but it vanishes since $\d K$ and $F_a$ are holomorphic there. Observe that the function $F_a$ has singularities outside $\Pi\cup\Gamma$. Indeed,
$F_a=-\partial_a\Psi\cdot\Psi^{-1}+\Psi M_a\Psi^{-1}$, and
$\Psi^{-1}$ has poles at the branch points of the spectrum of $L$.

In contrary, $L$ and $M_a$ are holomorphic everywhere except at the points $\Pi\cup\Gamma$. Therefore $R_a=0$ as the sum of residues of a meromorphic 1-form over all its poles. Moreover, by construction of $M_a$, the function $F_a$ is holomorphic at all points of the set $\Pi$ except at $P$. For this reason
\[
i_{\partial_a}\w=\res_P\tr(\d K\, F_a)\varpi.
\]

So far, we followed literally the lines of the proof of the theorem in \cite{Klax} (and in \cite{Sh_DGr}). The remainder of the proof will be given in more general set-up of this paper, i.e. for the $M_a$ defined by an arbitrary invariant polynomial $\chi$ (\refL{Ma}). We recall that $\d\chi$ denotes the gradient of the invariant polynomial $\chi$ on the Lie algebra $\g$ with respect to the Cartan--Killing form, while $\d K$, $\d L$ denote differentials of the corresponding functions on $\L^D$.

In a neighborhood of $P$, first, by holomorphy of $\Psi$, $\Psi^{-1}$
\[
 F_a=-\partial_a\Psi\cdot\Psi^{-1}+\Psi M_a\Psi^{-1}=\Psi M_a\Psi^{-1}+O(1),
\]
and, second, by definition, $M_a=w^{-m}\d\chi(L(w))+O(1)$.  By \refE{equivar} $\Psi\d\chi(L)\Psi^{-1}=\d\chi(\Psi L\Psi^{-1})=\d\chi(K)$.
Therefore $F_a=w^{-m}\d\chi(K)+O(1)$. Since $\d K\varpi$ is also holomorphic at $P$, we obtain
\[
\begin{aligned}
i_{\partial_a}\w=&\res_P\tr(w^{-m}\d\chi(K)\d K)\varpi=\d\res_Pw^{-m}\chi(K)\varpi= \\
=&\d\res_Pw^{-m}\chi(L)\varpi= \d H_a.
\end{aligned}
\]
\end{proof}
\begin{corollary}
Hamiltonians $H_a$ are in involution.
\end{corollary}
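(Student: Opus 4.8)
The plan is to deduce the Corollary directly from \refT{Hamil} together with \refL{central}, by reading off the Poisson bracket of two Hamiltonians as the derivative of one along the flow of the other. By \refT{Hamil} the field $\partial_a$ is the Hamiltonian vector field of $H_a$, so the Poisson bracket induced by $\w$ is $\{H_a,H_b\}=\w(\partial_a,\partial_b)=(i_{\partial_a}\w)(\partial_b)=\d H_a(\partial_b)=\partial_b H_a$. It therefore suffices to show that $H_a$ is constant along the flow of $\partial_b$, i.e. $\partial_b H_a=0$ (and symmetrically $\partial_a H_b=0$).

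First I would differentiate $H_a(L)=\res_P w^{-m}\chi(L(w))\varpi$ along $\partial_b$. Since the evaluation of a differential on a vector field is the derivative along it, and using the defining relation \refE{variation} for the gradient, I get $\partial_b H_a=\res_P w^{-m}\,\langle\d\chi(L),\partial_b L\rangle\,\varpi$. Substituting the Lax form $\partial_b L=[L,M_b]$ from \refE{part_a} turns this into $\partial_b H_a=\res_P w^{-m}\,\langle\d\chi(L),[L,M_b]\rangle\,\varpi$.

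The key step is then purely algebraic and pointwise: by $\g$-invariance of $\langle\cdot,\cdot\rangle$ one has $\langle\d\chi(L),[L,M_b]\rangle=\langle[\d\chi(L),L],M_b\rangle$, and \refL{central} asserts $[\d\chi(L),L]=0$. Hence the integrand vanishes identically as a meromorphic $1$-form near $P$, so $\partial_b H_a=0$ and therefore $\{H_a,H_b\}=0$. I would emphasize that this pins the bracket to zero outright; the alternative route through \refT{hierarch}, using $[\partial_a,\partial_b]=X_{\{H_a,H_b\}}$ together with closedness of $\w$, yields only $\d\{H_a,H_b\}=0$, i.e. that the bracket is a Casimir constant, and would still require exactly the above computation to evaluate that constant.

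The main obstacle I anticipate is not the computation itself but the bookkeeping around it. One must ensure the Poisson bracket is meaningful, i.e. pass to the submanifold $\P^D\subset\L^D/G$ on which $\w$ is nondegenerate (its nondegeneracy being imported from \cite{Klax}); and one must justify applying \refL{central} at the pole $P$, where $L$ and $M_b$ themselves are singular. This is legitimate because the identity $[\d\chi(L),L]=0$ holds wherever $L(w)$ is a regular element of $\g$, hence on a punctured neighborhood of $P$, and then extends by meromorphy, so the product $\langle\d\chi(L),[L,M_b]\rangle$ is the zero function and its residue is unaffected by those poles. Keeping the two differentials apart---$d\chi$ on $\g$ versus the exterior differential $\d$ on $\L^D$---is the only other point where care is needed.
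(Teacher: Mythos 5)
Your argument is correct: the paper states this corollary without proof as an immediate consequence of \refT{Hamil}, and your computation $\{H_a,H_b\}=\partial_b H_a=\res_P w^{-m}\langle[\d\chi(L),L],M_b\rangle\varpi=0$ via invariance of the form and \refL{central} is exactly the natural way to fill in that omitted step with the paper's own tools. Your remark that the route through commutativity of the $\partial_a$ (\refT{hierarch}) alone would only show the bracket is locally constant is a fair point of care, and your direct evaluation settles it.
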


%%%%%%%%%%%%%%%%%%%%%%%%%%%%%%%%%%%%%%%%%%%%%%

Regarding to independence of the Hamiltonians $H_a$ for the basis invariants $\chi_i$ ($i=1,\ldots, r$, where $r=\rank\g$) as to an obvious fact, we count their number below. First, \emph{if $\chi$ is an invariant polynomial on the Lie algebra $\g$, then the function $\chi(L(P))$ has no pole on~$\Gamma$.} It immediately follows by \refL{elimin}: by invariance of the polynomial $\chi$ we have $\chi(L(z))=\chi(\Ad e^{-h\ln z}L(z))$. By \refL{elimin}, $\Ad e^{-h\ln z}L(z)$ has no pole on~$\Gamma$, which implies the required statement concerning  $\chi(L(z))$.
\begin{lemma}\label{L:nham}
Let $N$ be a number of independent Hamiltonians of the form $H_a$. Then for the semisimple Lie algebra $\g$, and nonspecial divisor $D$
\begin{equation}\label{E:nham}
 2N=\dim\g\deg D+r(\deg D-\deg\K).
\end{equation}
\end{lemma}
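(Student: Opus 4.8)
The plan is to count the independent Hamiltonians one basic invariant at a time, identify each function $\chi_i(L)$ with an element of a Riemann--Roch space on $\Sigma$, and then assemble the total using the classical identity for the degrees of the basic invariants of $\g$.

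First I would fix a basis $\chi_1,\dots,\chi_r$ of the invariant polynomials and set $d_i=\deg\chi_i$. For $L\in\L^D$ the function $\chi_i(L)$ is meromorphic on $\Sigma$; by the remark preceding the lemma (a consequence of \refL{elimin}) it has no pole on $\Gamma$, and since $L$ has a pole of order at most $m_P$ at each $P\in\Pi$, the function $\chi_i(L)$ has there a pole of order at most $d_im_P$. Hence $\chi_i(L)\in\Ho(\Sigma,\O(d_iD))$. The Hamiltonians attached to this $\chi_i$, namely $H_a=\res_P w^{-m}\chi_i(L)\varpi$, are linear functionals of $\chi_i(L)$ on this space (Laurent--coefficient read-offs weighted by $\varpi$); granting their independence, their number equals $\dim\Ho(\Sigma,\O(d_iD))$.

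Next I would evaluate this dimension by Riemann--Roch. Since $D$ is effective and nonspecial and $d_i\ge 1$, we have $d_iD\ge D$; the skyscraper sequence $0\to\O(D)\to\O(d_iD)\to\mathcal Q\to 0$ shows that $\He(\O(d_iD))$ is a quotient of $\He(\O(D))=0$, so $d_iD$ is nonspecial as well, and therefore $\dim\Ho(\Sigma,\O(d_iD))=d_i\deg D-g+1$. Summing over $i$ gives $N=\sum_{i=1}^r\bigl(d_i\deg D-g+1\bigr)$. I would then invoke the classical fact that the degrees of the basic invariants satisfy $\sum_{i=1}^r(2d_i-1)=\dim\g$, equivalently $\sum_{i=1}^r d_i=\frac12(\dim\g+r)$, together with $\deg\K=2g-2$. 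Substituting, $2N=2\deg D\sum_i d_i-2r(g-1)=(\dim\g+r)\deg D-r\deg\K=\dim\g\deg D+r(\deg D-\deg\K)$, which is the asserted formula. (As a check, for $\g=\sln(n)$ with degrees $2,\dots,n$ both sides reduce to $(n-1)\bigl[(n+2)\deg D-2(g-1)\bigr]$.)

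The main obstacle is the independence statement used in the first step: that the residue functionals $\res_P w^{-m}\chi_i(L)\varpi$ with $m>-m_P$ really realize the full $\dim\Ho(\O(d_iD))$ independent functions, not fewer. The delicate point is that the admissible range of $m$ reads off only the Laurent coefficients of $\chi_i(L)\varpi$ of order $\ge -m_P$, whereas $\chi_i(L)$ may carry deeper poles; one must use that on a curve of genus $g\ge 1$ a section of $\O(d_iD)$ is over-determined by these forward coefficients, its principal parts being tied to the higher Taylor data through the finite-dimensionality of the Riemann--Roch space, so that the functionals still span the dual. Following the paper, I would treat this spanning as the ``obvious'' genericity input and concentrate the argument on the dimension count above.
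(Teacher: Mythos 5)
Your proposal is correct and follows essentially the same route as the paper: identify each $\chi_i(L)$ as an element of the Riemann--Roch space of the divisor $d_iD$, sum $h^0(d_iD)=d_i\deg D-g+1$ over the basic invariants, and close with the identity $\sum_{i=1}^r(2d_i-1)=\dim\g$ and $\deg\K=2(g-1)$. Your added justifications (nonspecialty of $d_iD$ via the skyscraper sequence, and the explicit flagging of the independence/spanning issue that the paper dismisses as obvious) are welcome refinements but do not change the argument.
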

\begin{proof}
For the proof, we make use of the same identity which is most essential in the proof of the Hitchin theorem on integrability of his celebrated systems \cite{Hit}. Namely, let $d_1,\ldots,d_r$ be the set of degrees of the basis polynomials of the Lie algebra $\g$ ($r=\rank\g$). Then
\begin{equation}\label{E:tozh}
 \sum_{i=1}^r (2d_i-1)=\dim\g .
\end{equation}
This implies that
\begin{equation}\label{E:tozh1}
 \sum_{i=1}^r d_i= \frac{\dim\g+r}{2}.
\end{equation}
The basis Hamiltonians are obtained by means of expanding of functions of the form $\chi_i(L)$ over the basis meromorphic functions on $\Sigma$, holomorphic (as it follows from \refL{elimin}) on $\Gamma$ where $\chi_i$ runs over basis invariants. The divisors of the spaces of such functions are as follows: $d_iD$, $i=1,\ldots,r$. The total dimension of these spaces (equal to the number of Hamiltonians), by the Riemann--Roch theorem, is equal to
\begin{equation}\label{E:N_int}
\begin{aligned}
      N=\sum_{i=1}^r h^0(d_iD) &= \sum_{i=1}^r (d_i\deg D -g+1) \\
            &= (\deg D)\sum_{i=1}^r d_i-r(g-1)\\
            &= (\deg D)\frac{\dim\g+r}{2}-r(g-1),
\end{aligned}
\end{equation}
which immediately implies (taking account of the relation $\deg\K=2(g-1)$) assertion of the lemma.
\end{proof}

%%%%%%%%%%%%%%%%%%%%%%%%%%%%%%%%%%%%%%%%%%%%%%%%%%%
\subsection{Example: Hitchin systems}
\label{SS:Hitchin}

Here we show that in the case when $D=\K$, and $\g$ is one of the classical Lie algebras $\gl(n)$, $\so(2n)$, $so(2n+1)$ or $\spn(2n)$ the number of the above found integrals is exactly what is required for integrability of the system. The Lax pairs are considered in the standard representations of the corresponding classical Lie algebras. Observe that the case $D=\K$ corresponds to the Hitchin systems. We will give only dimensional check of this fact, delaying the full check until another occasion. For $\g=\gl(n)$ the fact is proved in \cite{Klax}.

We will give the space $\L^D$ by Tyurin parameters, see relations \refE{pTgln}--\refE{pTsp2n}. The elements of the space $\L^D$ are operators in the standard representation of the Lie algebra~$\g$. Observe that the set $\{h\}$, defining the gradings, is given up to an action of the group $\Ad G$ while the group acting in the standart module is different, namely, it is the classical group $G$, corresponding to the Lie algebra~$\g$, itself. The relation between $G$ and $\Ad G$ is known: $\Ad G\cong G/{\mathcal Z}(G)$, where ${\mathcal Z}(G)$ is the center og $G$. For this reason, in the cases when ${\mathcal Z}(G)$ is nontrivial (these are $\g=\gl(n)$ with the center ${\mathcal Z}(G)=\C E$, and $\g=\so(2n)$ with the center ${\mathcal Z}(G)=\{\pm E\}$) it is necessary to projectivise the standard module (the Tyurin parameters $\a$) to obtain a well-defined action of $\Ad G$.

Taking account of these remarks, we show that for the simple classical Lie algebras $\so(2n)$, $\so(2n+1)$, $\spn(2n)$ the following relation takes place:
\begin{equation}\label{E:LDtot}
     \dim\L^D=(\dim\g)(\deg D+1).
\end{equation}
The dimension of $\L^D$ is equal to the dimension of  $\L^D_{\Gamma,\{h\}}$ plus the number of Tyurin parameters $\a$ and $\ga$. We take $|\Gamma|$ to be equal to $ng$. In every case we will check that the number of Tyurin parameters is equal to $(\dim\g)g$.  Since $\dim\L^D_{\Gamma,\{h\}}=(\dim\g)(\deg D-g+1)$, the relation \refE{LDtot} will be proven.

For $\g=\so(2n)$ we have $\a\in\C^{2n}$. Taking account of the projectivization we have $\a\in\C P^{2n-1}$. These are $2n-1$ parameter at every $\ga$, and the $\ga$ itself gives one more parameter. We also have one relation: $\a^t\a=0$. In total, we have $(2n-1)\cdot ng=(\dim\g)g$ parameters.

For $\g=\so(2n+1)$ we have $\a\in\C^{2n+1}$, there is one relation $\a^t\a=0$, and $\ga$ itself gives one more parameter. In total, we have  $(2n+1)\cdot ng=(\dim\g)g$ parameters.

For $\g=\spn(2n)$ $\a\in\C^{2n}$, and $\ga$ itself gives one more parameter. In total, we have  $(2n+1)\cdot ng=(\dim\g)g$ parameters.

By this counting the relation \refE{LDtot} is proven in all three cases. Dividing $\L^D$ by $G$ defines an invariant quotient space $\L^D_0$ of dimension $\dim\g\deg D$. For $D=\K$ we have $\dim\L^\K_0=2(\dim\g)(g-1)$. Observe that the number of integrals for $D=\K$ is equal to $(\dim\g)(g-1)$, i.e. the system is integrable.

In the case $\g=\gl(n)$ we have $\dim\L^\K_0=2(n^2(g-1)+1)$, and the number of integrals is equal to the half of this dimension. Indeed, in this case $\dim\L^\K_{\Gamma,\{h\}}=n^2(g-1)+1$. The anomalous unit in this relation is due to the following. The elements $L\in\L$ taking values in $\sln(n)$ contribute $(n^2-1)(g-1)$ into the dimension, as above. The elements taking values in the subalgebra of scalar matrices possess the following property: they are holomorphic outside the divisor $D$. Indeed, for them $\tr L_{-1}=0$ (compare with \refE{pTgln}), and, for the reason they are scalar, $L_{-1}=0$. Therefore if the divisor $D$ is special, a certain "dimension anomaly"\ emerges. Namely, for $D=\K$ the dimension of the space of such elements is equal to $h^0(\K)=g$, which together with $(n^2-1)(g-1)$ exactly gives the expression for $\dim\L^\K_{\Gamma,\{h\}}$. The number of Tyurin parameters is equal to $n^2g$, as above (by the projectivization, we have $\a\in\C P^{n-1}$; this gives $n-1$ parameter for every $\ga$, taking account of the number of the points $\ga$ themselves gives $n\cdot ng$ parameters). Therefore $\dim\L^\K=n^2(g-1)+1 +n^2g=2n^2(g-1)+n^2+1$. Dividing by $GL(n)$ decreases the dimension by $n^2-1$ (the center is not counted since it does not affect the Lax equation, i.e. we actually consider $\L^\K/SL(n)$). Finally we obtain $\dim\L^\K_0=\dim\L^\K-(n^2-1)=2(n^2(g-1)+1)$. As for the number of integrals, for $\g=\gl(n)$ there additionally emerges $h^0(D)$ in \refE{N_int} (there is an invariant of degree 1, namely, trace, while for the semisimple Lie algebras the degrees of the basis invariants are always not less than 2). For $D=\K$ this exactly gives the above "anomalous"\ unit.
%%%%%%%%%%%%%%%%%%%%%%%%%%%%%%%%%%%%%%%%%%%%%%%%%%%%%%%%%%%%
%%%%%%%%%%%%%%%%%%%%%%%%%%%%%%%%%%%%%%%%%%%%%%%%%%%%%%%%%%
\subsection{Example: the Calogero--Moser systems}
\label{SS:CMoser}
We consider here one more series of examples, namely the elliptic Calogero--Moser systems for classical Lie algebras.

We start with the example considered in \cite{Klax}, namely with the elliptic Calogero--Moser system for the root system $A_n$. We define a $\gl(n)$-valued Lax operator giving its entries as follows:
\begin{equation}\label{E:matrLgl}
 L_{ij}=f_{ij}\frac{\s(z+q_j-q_i)\s(z-q_j)\s(q_i)}{\s(z)\s(z-q_i)\s(q_i-q_j)\s(q_j)}\ (i\ne j), \ \
 L_{jj}=p_j
\end{equation}
where $\s$ (and also $\wp$ below) are the Weierstra\ss\ functions,
$f_{ij}\in\C$ are constants. Up to the constants $f_{ij}$ this form of the operator $L$ is determined by the requirement that it is elliptic, and has simple poles at $z=q_i$ ($i=1,\ldots,n$) and $z=0$. The last point is the unique element of the set $\Pi$. By means of the reduction of the remaining gauge degrees of freedom it was obtained in \cite{Klax} that $f_{ij}f_{ji}=1$. For the second order Hamiltonian corresponding to the pole at $z=0$ we obtain, in accordance with \cite{Sh_DGr}, and up to a normalization,
\[
H=\res_{z=0}\,\,\, z^{-1}\!\left(-{1\over 2}\sum_{j=1}^n
p_j^2-\sum\limits_{i<j} L_{ij}L_{ji}\right).
\]
By the addition theorem for the Weierstra\ss\ functions
\[
-L_{ij}L_{ji}=
\frac{\s(z+q_i-q_j)\s(z+q_j-q_i)}{\s(z)^2\s(q_i-q_j)^2}=\wp(q_i-q_j)-\wp(z),
\]
hence
\[
H=-{1\over 2}\sum_{j=1}^n p_j^2+\sum\limits_{i<j} \wp(q_i-q_j).
\]

Next consider the cases $\g=\so(2n)$, $\g=\spn(2n)$. We set
$\Gamma=\{\pm q_1,\dots, \pm q_n\}$ ($|\Gamma|=2n$), and make use of the results of the \refEx{gr_ell} of \refSS{constrM}.

To obtain the Calogero--Moser system corresponding to the series $D_n$, let us take the Lax operator in the form
\begin{equation} \label{E:Lax_D_n}
L=\begin{pmatrix}
     A& B \\
     C& -A^t
  \end{pmatrix},
\end{equation}
where $A$, $B$, $C$ are $n\times n$ matrices, $B=-B^t$, $C=-C^t$.
It is a standard form of the elements of the Lie algebra Ли $\so(2n)$.

Let $A$ be given by the relations \refE{matrLgl}. For $i>j$ we set
\begin{equation}\label{E:matrBC}
B_{ji}=f^B_{ji}\frac{\s(z-q_j-q_i)\s(z+q_i)}{\s(z)\s(z-q_j)\s(q_i+q_j)},
\ \
C_{ij}=f^C_{ij}\frac{\s(z+q_j+q_i)\s(z-q_j)}{\s(z)\s(z+q_i)\s(q_i+q_j)}
\end{equation}
where $f^B_{ij},f^C_{ij}\in\C$ are constant. These relations completely determine the matrices $B$ and $C$ by skew-symmetry of the last. Similar to the case $\gl(n)$ we obtain $f^B_{ij}f^C_{ji}=-1$ reducing the remaining degrees of freedom. For the Hamiltonian we obtain
\begin{align*}
H=&-\res_{z=0}\,\,\, z^{-1}\left(\sum_{i=1}^n p_i^2+2\sum_{i<j}
A_{ij}A_{ji}+2\sum_{i<j}B_{ij}C_{ji}\right)\\
=&-\sum_{i=1}^n
p_i^2+2\sum_{i<j}\wp(q_i-q_j)+2\sum_{i<j}\wp(q_i+q_j)
\end{align*}
which is a conventional form of the second order Hamiltonian for the elliptic Calogero--Moser $D_n$ system.

Proceeding, further on, with the Calogero--Moser system for the root system $C_n$, we take $L$ in the form \refE{Lax_D_n} where $B=B^t$, $C=C^t$. We define the corresponding matrix entries $A_{ij}$ by the right hand sides of the relations
\refE{matrLgl}, and $B_{ij}$, $C_{ij}$ by the relations \refE{matrBC}. The relations \refE{matrBC} make sense also for $i=j$, and, in this case, their contribution into the second order Hamiltonian is equal to
\[
B_{ii}C_{ii}=f_{ii}^Bf_{ii}^C(\wp(2q_i)-\wp(z))
\]
where we can set $f_{ii}^Bf_{ii}^C=2$. Therefore
\[
H = -\sum_{i=1}^n
p_i^2+2\sum_{i<j}\wp(q_i-q_j)+2\sum_{i<j}\wp(q_i+q_j)+2\sum\limits_{i=1}^n
\wp(2q_i)
\]
which is a conventional form of the second order Hamiltonian of the elliptic Calogero--Moser system in the symplectic case.

Finally, for the root system $B_n$ we take $\Gamma=\{\pm q_1,\ldots,\pm q_n, q_0\}$ (see \refEx{gr_so2n+1} of \refSS{constr}), and the Lax operator
\begin{equation} \label{E:Lax_B_n}
L=\begin{pmatrix}
     A   & a &  B \\
    -b^t & 0 & -a^t\\
     C   & b & -A^t
  \end{pmatrix},
\end{equation}
where $A$, $B$, $C$ are $n\times n$ matrices, $B=-B^t$, $C=-C^t$, $a,b\in\C^n$. This is a standard form of an element of the Lie algebra $\so(2n+1)$. Let $A$, $B$, $C$ be given as for $D_n$, while $a$, $b$ are given by their coordinates as follows:
\begin{equation}\label{E:col_ab}
a_i=f^a_i\frac{\s(z-q_0-q_i)\s(z)}{\s(z-q_0)\s(z-q_i)\s(q_i)},
\ \
b_i=f^b_i\frac{\s(z-q_0+q_i)\s(z-q_i)}{\s(z)\s(z-q_0)\s(q_i)}.
\end{equation}
By the addition formulas, $a_ib_i=f^a_if^b_i(\wp(q_i)-\wp(z-q_0))$.
For the Hamiltonian we have
\begin{align*}
H=&-\res_{z=0}\,\,\, z^{-1}\left(\sum_{i=1}^n p_i^2+2\sum_{i<j}
A_{ij}A_{ji}+2\sum_{i<j}B_{ij}C_{ji}+2\sum_{i=1}^n a_ib_i\right) \\
=&-\sum_{i=1}^n
p_i^2+2\sum_{i<j}\wp(q_i-q_j)+2\sum_{i<j}\wp(q_i+q_j)+ 2\sum_{i=1}^n(\wp(q_i)-\wp(z-q_0)).
\end{align*}
Observe that $p_0=0$, therefore $q_0=const$ gives an invariant submanifold of the hierarchy of commuting flows (in the Lax form of the equations of motion, it corresponds to the relation $\nu_0=0$, following from the fact that $M$ is an element of $\so(2n+1)$). Taking the restriction of the system to this invariant submanifold, and omitting $\wp(q_0)$ as a nonessential constant, we arrive to the conventional form of the second order Hamiltonian of the elliptic Calogero--Moser system $B_n$.

%%%%%%%%%%%%%%%%%%%%%%%%%%%%%%%%%%%%%%%%%%%%%%%%%%%%%%%%%%%%%
\subsection{Holomorphy of spectra and method of deformation of Tyurin parameters}
\label{SS:Holomorphy}
Here we give proofs of Lemmas \ref{L:elimin} and \ref{L:specMhol}, and discuss the relation of the last to the equations of motion of the element $h$ which lead to the equations of deformation of Tyurin parameters in the case of classical Lie algebras.
\begin{proof}[Proof of \refL{elimin}]
We consider the adjoint action on $L$ of the local holomorphic function $e^{-h\ln z}$, where $z$ is a local coordinate in the neighborhood of a $\ga\in\Gamma$, $h$ is the element of the Cartan subalgebra used above for giving the grading on $\g$ (a choice of the branch of $\ln z$ does not matter). On the homogeneous subspace of degree $s$ of this grading, i.e. on $\g_s\subset\g$, the operator $-\ad h$ acts as multiplication by $-s$. Therefore, the operator $\Ad e^{-h\ln z}$ operates by multiplication by $z^{-s}$. In the Laurent expansion of the element $L\in\L$ at the point $\ga\in\Gamma$, the coefficient at $z^i$ is an element of  $\tilde\g_i=\bigoplus\limits_{-k\le s\le i}\g_s$. Under the action of $\Ad e^{-h\ln z}$ there emerges a sum of degrees $z^{i-s}$, $-k\le s\le i$ (with certain coefficients) at this place of the Laurent expansion. Obviously, these are nonnegative degrees.
\end{proof}
\begin{remark}\label{R:dnw}
For $M$-operators, a similar argument does not lead to any similar result because for $0\le i<k$, the components $M_i$ have nonzero projections to the homogeneous subspaces $\g_s$ with $s>i$, and there will emerge negative degrees of~$z$ at those places.
\end{remark}
A prototype of \refL{elimin} for $\g=\gl(n)$, has been formulated in \cite{Klax}. The proof given there, in this particular case is also equivalent to our proof. First, the $L_{-1}$ had been transformed to the form drown in the Figure~\ref{An} by conjugation. This corresponds to $h={\rm diag} (-1,0,\ldots,0)$. Then the conjugation by the matrix ${\rm diag} (z,0,\ldots,0)$, which is exactly the same as the $e^{-h\ln z}$ here, had been made.
\begin{proof}[Proof of \refL{specMhol}] In accordance with \refSS{LP}, we represent the grading element in the neighborhood of a $\ga\in\Gamma$ in the form $h=g^{-1}h_0g$ where $h_0$ gives the grading at the point $\ga$, and $g$ is a local holomorphic function taking values in $G$, such that $g(\ga)\in G_0$, $G_0$ being the centralizer of $h_0$ in $G$ (i.e. $Lie(G_0)=\g_0$). Then it follows from \refL{elimin} that
\begin{equation}\label{E:psi_fact}
\Psi=\Psi_0e^{-h_0\ln z}g
\end{equation}
where $\Psi_0$ is holomorphic, and has a holomorphic inverse in the neighborhood of $\ga$. By this, making use of $\partial_az=-\nu_a$ \refE{local1}, we obtain
\begin{equation}\label{E:Fahol}
\begin{aligned}
    F_a=&-\partial_a\Psi\cdot\Psi^{-1}+\Psi M_a\Psi^{-1}\\
    =&-\partial_a\Psi_0\cdot\Psi_0^{-1}-\Psi_0\frac{\nu_ah_0}{z}\Psi_0^{-1} - \Psi_0e^{-h_0\ln z}\partial_ag\cdot g^{-1}e^{h_0\ln z}\Psi_0^{-1} +\\
    &+\Psi_0e^{-h_0\ln z}g^{-1}\left( \frac{\nu_ah_0}{z}+\sum_{i=-k}^\infty M_{a,i}z^i\right)ge^{h_0\ln z}\Psi_0^{-1}.
\end{aligned}
\end{equation}
Taking into account that $\partial_a\Psi_0\cdot\Psi_0^{-1}$ is holomorphic, we obtain
\[
    F_a=\Psi_0e^{-h_0\ln z}gAg^{-1}e^{h_0\ln z}\Psi_0^{-1}+O(1)
\]
where
\[
   A=-g^{-1}\partial_ag+\frac{\nu_a}{z}(h_0-g^{-1}h_0g)+\sum_{i=-k}^\infty M_{a,i}z^i.
\]
Next, we find $g$ from the equation
\begin{equation}\label{E:dvig}
  g^{-1}\partial_ag=\frac{\nu_a}{z}(h_0-g^{-1}h_0g)+\sum_{0\le i<k}M_{a,i}z^i.
\end{equation}
Then $e^{-h_0\ln z}gAg^{-1}e^{h_0\ln z}$ is holomorphic by the same argument as $L$ (the terms mentioned in \refR{dnw} are absent).
\end{proof}
From $h=g^{-1}h_0g$ as a starting point, we find the equations of motion of $h$. To this goal we differentiate the equality, and obtain
\begin{equation}\label{E:dvih}
  \partial_ah=[h,g^{-1}\partial_ag].
\end{equation}
%By the equation \refE{dvig}, we obtain in the zero approximation in $z$ that %$g^{-1}\partial_ag=M_{a,0}$ (for $z=0$, $h_0-g^{-1}h_0g=0$). This results in %the following form of \refE{dvih}:
By the equation \refE{dvig}, setting the initial condition to $g(z)|_{z=0}=id$, we obtain in the zero approximation in $z$ that $g^{-1}\partial_ag=M_{a,0}$ (observe that  $h_0-g^{-1}h_0g=0$ for $z=0$). This results in the following form of \refE{dvih}:
\begin{equation}\label{E:dvih1}
  \partial_ah=[h,M_{a,0}],
\end{equation}
which exactly coincides with the equation \refE{local1} for $h$.
\begin{example}\label{PT_ex}
Let $\g=\gl(n)$. It easy follows from \refSS{An} that $h=\a\mu^t$ in this case where $\a=g(1,0,\ldots,0)^t$,
$\mu^t=(1,0,\ldots,0)g^{-1}$. Differentiating the relation $h=\a\mu^t$, and applying \refE{dvih1}, we obtain
\[
  \partial_a\a\cdot\mu^t+\a\partial_a\mu^t=-M_{a,0}\a\mu^t+\a\mu^tM_{a,0}.
\]
Next, multiply this equality from the right by such vector $\vartheta$ that $\mu^t\vartheta=1$. Then
\begin{equation}\label{E:dviTyur}
  \partial_a\a=-M_{a,0}\a+\l\a
\end{equation}
where $\l=(-\partial_a\mu^t+\mu^tM_{a,0})\vartheta\in\C$.

\refE{dviTyur} is nothing but the motion equation for the Tyurin parametrs found in \cite{Klax}.  In addition, the following equation is true: $\partial_az=-\mu^t\a$. Indeed, it is easy to see, that  $h=-\mu^t\a\cdot diag(-1,0,\ldots,0)$, which implies $\nu_a=-\mu^t\a$. Then the relation $\partial_az=-\nu_a$ \refE{local1} gives the required equation.

\end{example}

       % три файла - части одной и той же главы
%%%%%%%%%%%%%%%%%%%%%%%%%%%%%%%%%%%%%%%%%%%%%%%%%%
\section{Lax integrable systems and conformal field theory}\label{S:LaxCFT}

In this chapter, we will briefly outline the results of \cite{Sh_L_KZ} (see also \cite{Sh_DGr}) which enable one to assign each integrable system of the above discussed type with a unitary projective representation of the corresponding Lie algebra of Hamiltonian vector fields. To the family of spectral curves over the phase space of a system, we apply the technique earlier developed for the tautological bundle over the moduli space of curves \cite{WZWN1,WZWN2,Sh_DGr,Schlich_DGr}. This enables us to construct a Knizhnik--Zamolodchikov-type connection on the phase space, and represent the Hamiltonian vector fields by covariant derivatives with respect to this connection. From the physical point of view this is a Dirac-type prequantization. For the Hitchin systems, the idea of their quantization by means the Knizhnik--Zamolodchikov connection has been multiply used, or at least mentioned, in physical literature (\cite{FeW}, \cite{D_I}, \cite{Olsh_Lev, Olsh}), with different restrictions, for example, for the second order Hamiltonians, or on elliptic curves. In \cite{Sh_L_KZ,Sh_DGr} this idea is realized in full geberality, in the context of Lax equations with the spectral parameter on a Riemann surface, and for the Hamiltonians of all orders.
%%%%%%%%%%%%%%%%%%%%%%%%%%%%%%%%%%%%%%%%%%%%%%%%%%%%%

%%%%%%%%%%%%%%%%%%%%%%%%%%%%%%%%%%%%%%%%%%%%%%%%%%%

\subsection{Centralizer of an element, and its vacuum representation}\label{SS:aff_KN}

Let $\L$ be a Lax operator algebra, and $L\in\L$, $\Psi$ diagonalizes $L$ as it was defined in \refSS{KrPh}, i.e.
\[
   \Psi L=K\Psi
\]
where $K=diag(\k_1,\ldots,\k_n)$. The diagonal elements of the matrix $K$ are roots of the characteristic equation $\det(L(z)-\k)~=~0$. The curve given by this equation, is called the \emph{spectral curve} of the element~$L$. We denote it $\Sigma_L$. It is a $n$-sheeted branch covering of the curve $\Sigma$. \refL{elimin} immediately implies that the meromorphic function $K$ is holomorphic on~$\Gamma$, and has poles only at $P\in\Pi$.

Below, we follow the assumptions and notation of \refS{Ham_th}. In particular, we consider only classical Lie algebras, and use the Tyurin parametrization.

If we denote by $\A$ the algebra of scalar functions on $\Sigma$ holomorphic except at $P\in\Pi$, then $K\in\h\otimes\A$ where $\h\subset\g$ is a diagonal subalgebra. $\A$ is called the \emph{function Krichever--Novikov algebra}, and $\hb=\h\otimes\A$ is called the \emph{current Krichever--Novikov algebra} (only commutative current algebras emerge here, though there could be any reductive Lie algebra instead $\h$ in general). For a detailed presentation of the Krichever--Novikov algebras see \cite{Sh_DGr,Schlich_DGr}).

Vise verse: \emph{for any $h\in\hb$ we have: $\Psi^{-1} h\Psi\in\L$} \cite{Sh_L_KZ,Sh_DGr}, therefore we arrive to the assertion:
\begin{lemma}\label{L:diag1}
The Lie subalgebra commuting with an $L\in\L$, is isomorphic to  $\Psi\hb\Psi^{-1}$.
\end{lemma}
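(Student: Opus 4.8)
The plan is to show that conjugation by $\Psi$ carries the centralizer $Z_\L(L)=\{X\in\L\mid [X,L]=0\}$ isomorphically onto $\hb$, which gives the asserted description. Two ingredients established just above the statement do most of the work: first, $K=\Psi L\Psi^{-1}\in\hb$, so $K$ commutes with every element of the commutative algebra $\hb$; and second, $\Psi^{-1}h\Psi\in\L$ for every $h\in\hb$. Throughout, $\Ad\Psi\colon X\mapsto\Psi X\Psi^{-1}$ is a Lie algebra homomorphism, and $X$ commutes with $L$ if and only if $\Psi X\Psi^{-1}$ commutes with $K=\Psi L\Psi^{-1}$.

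First I would check the easy inclusion $\Psi^{-1}\hb\Psi\subseteq Z_\L(L)$. For $h\in\hb$ the element $\Psi^{-1}h\Psi$ lies in $\L$ by the second ingredient, and since $h$ commutes with $K\in\hb$, its conjugate $\Psi^{-1}h\Psi$ commutes with $\Psi^{-1}K\Psi=L$. For the reverse inclusion, take $X\in Z_\L(L)$; then $\Psi X\Psi^{-1}$ commutes with $K$. At a generic point of $\Sigma$ the Lax operator $L$ is regular, so the diagonal entries of $K$ are pairwise distinct and the pointwise centralizer of $K$ is exactly the diagonal subalgebra $\h$. Hence $\Psi X\Psi^{-1}$ is $\h$-valued, i.e. it lies in $\h\otimes\O$ for a space $\O$ of meromorphic functions on $\Sigma$. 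It remains to identify this function space as $\A$, i.e. to prove that the diagonal entries are holomorphic everywhere except at $\Pi$.

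The main obstacle is precisely this holomorphy, because $\Psi$ and $\Psi^{-1}$ are themselves singular at the points of $\Gamma$ and at the branch points of the spectral curve $\Sigma_L$, yet the combination $\Psi X\Psi^{-1}$ must stay regular there. On $\Gamma$ I would invoke \refL{elimin}: conjugation brings the $\Gamma$-poles of $L$, and likewise those of the simultaneously diagonalized $X$, to holomorphic form, exactly as for $K$, which the preceding discussion already knows to be holomorphic on $\Gamma$. At a branch point $P_0\notin\Pi\cup\Gamma$ the element $X\in\L$ is holomorphic, so its eigenvalues are bounded; going around $P_0$ the sheets of $\Sigma_L$ are permuted and the diagonal entries of $\Psi X\Psi^{-1}$ undergo the same Weyl-group monodromy, so the symmetric functions of these entries are single-valued and bounded, hence holomorphic. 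This is the identical mechanism that makes $K$ holomorphic off $\Pi$, and it forces $\Psi X\Psi^{-1}\in\hb$, so $X\in\Psi^{-1}\hb\Psi$. Combining the two inclusions gives $Z_\L(L)=\Psi^{-1}\hb\Psi$, and $\Ad\Psi$ exhibits this centralizer as a Lie algebra isomorphic to $\hb$, equivalently to $\Psi\hb\Psi^{-1}$, as claimed.
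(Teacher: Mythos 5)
Your proof is correct and follows the same route the paper takes: the paper itself only records the two ingredients you use ($K=\Psi L\Psi^{-1}\in\hb$ via \refL{elimin}, and $\Psi^{-1}h\Psi\in\L$ for $h\in\hb$, the latter cited to \cite{Sh_L_KZ,Sh_DGr}) and then asserts the lemma, so your write-up is essentially the paper's argument with the omitted reverse inclusion filled in. The one place to tighten is the branch-point step: rather than arguing via symmetric functions, note directly that the diagonal entries of $\Psi X\Psi^{-1}$ are eigenvalues of the holomorphic $X$, hence locally bounded, and a bounded meromorphic function (on $\Sigma_L$, where these entries are single-valued) is holomorphic.
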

Since diagonal elements of the matrices from $K\in\hb$ correspond to the sheets of the covering $\Sigma_L\to\Sigma$, every element $K\in\hb$ can be pulled back to $\Sigma_L$, and will give a meromorphic scalar function on $\Sigma_L$, having poles in the preimage of $\Pi$ only. We denote the algebra of such function by $\A_L$. The inverse mapping $\A_L\to\hb$ is given by means construction of direct image, a function $A\in\A_L$ is assigned with the $K\in\hb$ where $K(P)=diag(A(P_1),\ldots,A(P_n))$, and $P_1,\ldots,P_n$ are preimages of the point $P\in\Sigma$.

Consider an element $A\in\A_L$, and its direct image $K\in\hb$. Let $L_A=\Psi K\Psi^{-1}$. Every sheet of the curve $\Sigma_L$ is associated with a certain row in $K$. At the branching points, a coincidence of the eigenvalues of  $K$ is possible. The order of the diagonal entries of the matrix $K$ depends on the order of the covering sheets. The ambiguity here is the same as in the definition of the matrix $\Psi$: the permutation of the rows of $\Psi$ corresponding to an element $w$ of the Weyl group descends to the transformation $\Psi\to w\Psi$ (it can be easy checked for transpositions), and $K$ transforms as follows: $K\to wKw^{-1}$. Hence $L_A=\Psi^{-1}K\Psi$ does not depend on $w$, and $L_A$ is well-defined.
\begin{lemma}\label{L:diag2}
The mapping $\A_L\to\L$ sending $A$ to $L_A$, establishes an isomorphism between $\A_L$ and the Lie subalgebra in $\L$ consisting of elements commuting with $L$.
\end{lemma}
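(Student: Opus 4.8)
The plan is to exhibit $A\mapsto L_A$ as the composite $\A_L\to\hb\to\L$, where the first arrow is the direct-image identification $A\mapsto K$ constructed just above, and the second is conjugation $K\mapsto\Psi^{-1}K\Psi=L_A$ (the form consistent with $L=\Psi^{-1}K_0\Psi$, $K_0$ being the diagonal form of $L$), and to check that each arrow is an isomorphism and that the image of the composite is exactly the centralizer of $L$. First I would record that $L_A$ does commute with $L$: since $K_0$ and $K$ are both diagonal, $[L,L_A]=\Psi^{-1}[K_0,K]\Psi=0$. That $L_A$ lies in $\L$ at all is the ``vise verse'' fact $\Psi^{-1}\hb\Psi\subset\L$ recalled before the statement, while \refL{elimin} controls its singularities, guaranteeing that $K$, and hence $L_A$, is holomorphic on $\Gamma$ and has poles only over $\Pi$ --- precisely the pole condition defining $\A_L$.

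Next I would verify the algebra-homomorphism property. The direct image is multiplicative: if $A,B\in\A_L$ have direct images $K_A,K_B$, then the direct image of $AB$ is the diagonal matrix with entries $A(P_i)B(P_i)$, namely $K_AK_B$; and conjugation is manifestly multiplicative, so $L_AL_B=\Psi^{-1}K_AK_B\Psi=L_{AB}$. Both $\A_L$ and the centralizer of $L$ are commutative --- the latter because $L(P)$ is regular at a generic $P$, so its centralizer in $\g$ is the Cartan subalgebra containing it --- so the bracket is trivial on both sides and is automatically preserved; all the content sits in the associative (equivalently, linear) structure.

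For bijectivity, injectivity is immediate, since $L_A=0$ forces $K_A=0$, whence $A=0$ because pull-back to $\Sigma_L$ inverts the direct image. For surjectivity onto the centralizer I would invoke \refL{diag1}: every element of $\L$ commuting with $L$ is of the form $\Psi^{-1}h\Psi$ with $h\in\hb$, and each such $h$ is the direct image of a unique $A\in\A_L$; hence $A\mapsto L_A$ lands onto the entire centralizer. Assembling the two steps, $A\mapsto L_A$ is a bijective algebra homomorphism, i.e.\ the desired isomorphism onto the Lie subalgebra of elements commuting with $L$.

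The hard part is not this assembly but the two ingredients underlying it. The genuinely delicate point is the direct-image isomorphism $\A_L\cong\hb$ at the branch points of $\Sigma_L\to\Sigma$, where the eigenvalues $\kappa_1,\dots,\kappa_n$ collide and $\Psi$ degenerates: one must check that a function holomorphic on $\Sigma_L$ off the preimage of $\Pi$ assembles into a single-valued diagonal current with no spurious poles over the branch locus, and conversely that the entries of such a current patch to a holomorphic function on the covering. This is exactly where the well-definedness of $L_A$ modulo the Weyl ambiguity $\Psi\mapsto w\Psi$, $K\mapsto wKw^{-1}$, noted before the statement, is needed, and it is the only step that requires more than formal manipulation.
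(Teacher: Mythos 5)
Your proposal is correct and follows essentially the same route as the paper, which gives no separate proof of \refL{diag2} but relies on exactly the ingredients you assemble: the direct-image identification $\A_L\leftrightarrow\hb$, the fact that $\Psi^{-1}\hb\Psi\subset\L$, \refL{diag1} for surjectivity onto the centralizer, and the Weyl-group discussion for well-definedness of $L_A$ at branch points. Your write-up merely makes explicit the homomorphism and bijectivity checks that the paper leaves implicit.
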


$\L$ has a canonical representation in the space $\mathcal F$ of meromorphic vector-functions taking values in $\C^n$, holomorphic outside the sets $\Pi$ and $\Gamma$, and having the expansion of the form
\[
 \psi(z)=\nu\frac{\a}{z} +\psi_0+\ldots ,
\]
at $\ga\in\Gamma$ where $\a$ is the vector of Tyurin parameters.  $\mathcal F$ is an almost graded $\L$-module \cite{Sh_DGr,Schlich_DGr}. The invariance of the space $\mathcal F$ with respect to $\L$, for classical Lie algebras, is easy derived from \refE{pTgln}, \refE{pTso2n}, \refE{pTsp2n}, \refE{pTso2n+1}. We denote the space of semi-infinite external forms on $\mathcal F$ of a fixed charge by ${\mathcal F}^{\infty/2}$ \cite{KaRa}. It is a vacuum $\L$-module. Due to the above constructed morphism  $\A_L\to\L$ (\refL{diag2}), we can consider ${\mathcal F}^{\infty/2}$ as an $\A_L$-module as well.

We will conclude this section with \emph{Sugawara representation}. Given a vacuum module of a current Krichever--Novikov algebra, the Sugawara construction canonically gives a representation of the corresponding Lie algebra of vector fields, in the same space. We apply it to the $\A_L$-module ${\mathcal F}^{\infty/2}$ where $\A_L$ is considered as a commutative Lie algebra, and obtain a representation of the Lie algebra of meromorphic vector fields on $\Sigma_L$ holomorphic outside the preimage of $\Pi$. We denote this representation by $T$ below. We do not give any definition of the Sugawara representation here. It is presented in several places: \cite{KaRa} for the loop algebras, \cite{Sh_DGr,Schlich_DGr} for the Krichever--Novikov algebras. In particular, for the commutative Krichever--Novikov current algebras the construction has been originally proposed in \cite{KNFb}.

Observe that the Sugawara construction for the Lax operator algebras does not exist (and, perhaps, can not exist, see the discussion in \cite{Sh_DGr}), and this is one of the reasons of addressing the algebra $\A_L$ in this context.

Thus, every point of the phase space $\P^D$ is associated with an element $L\in\L$ (where $\L$ is specific for every point), its spectral curve $\Sigma_L$, and the $\L$-module ${\mathcal F}^{\infty/2}$ over it. As a result, we obtain a family of spectral curves on $\P^D$, and a bundle over it with an infinite-dimensional fiber. Symbolically, the whole picture is drown in the Figure~\ref{Spectral}. Below, we construct a finite rank quotient sheaf of this bundle, and a projective flat connection on it. In this way we transform our object to a conformal field theory in the sense of \cite{FrSh}.

\begin{figure}
\begin{picture}(350,250)
%\put(0,0){\circle*{1}}
% база
\qbezier(70,120)(105,130)(142,133) % верх левая сторона
\qbezier(167,134)(230,140)(350,140) % верх правая сторона
\qbezier(70,120)(20,100)(10,40)
\qbezier(10,40)(90,100)(250,0)
\qbezier(250,0)(250,100)(350,140)
\put(280,40){{\Large $\P^D$}}
\put(230,110){\circle*{3}}
\put(218,102){\Large $L$}
\put(230,110){\line(0,1){50}}
% правая риманова поверхность (папа)
\put(230,195){\oval(40,70)}
\qbezier(220,170)(230,160)(240,180) % нижняя губа
\qbezier(225,167)(230,183)(235,174) % верхняя губа
\qbezier(215,200)(220,216)(225,207) % левая бровь
\qbezier(234,206)(239,215)(244,199)% правая бровь
\qbezier(218,205)(220,200)(222,210) % левое нижнее веко
\qbezier(236,208)(239,200)(242,201)% правое нижн веко
\put(189,195){\Large $\Sigma_L$}
% отмеченная точка и кольцо
\put(239,223){\circle*{3}}
\put(236,240){$P_\infty$}
\put(238,238){\line(0,-1){12}}
\qbezier(235,230)(230,213)(247,222)% внутр контур
\qbezier(230,230)(225,208)(249,217)% внешн контур
% укрупнение
%
\qbezier(265,250)(320,270)(330,235)%
\put(305,250){\circle*{3}}  
\qbezier(295,255)(290,240)(318,248)% внутр контур
\qbezier(285,255)(270,230)(326,240)% внешн контур
\put(275,205){\Large $d_L$}
%
%векторное поле Кодаиры-Спенсера
%
\put(291,255){\vector(-1,-3){4}}
\put(305,240){\vector(4,1){12}}
\put(310,212){\Large $\rho(X)$}
\put(323,225){\line(-1,4){4}}
%
% диффеоморфизм
\put(283,210){\line(1,2){15}}
\put(270,210){\line(-3,1){25}}
%
% векторное поле X
%
\put(230,110){\vector(3,1){40}}
\put(270,120){{\Large $X$}}
\put(235,118){\vector(3,1){30}}
\put(235,105){\vector(3,1){30}}
% мама
%
\put(80,110){\circle*{3}}
\put(80,110){\line(0,1){50}}
\put(80,195){\oval(40,70)}
\qbezier(70,180)(80,160)(90,170) % нижняя губа
\qbezier(75,174)(80,183)(85,167) % верхняя губа
\qbezier(95,200)(90,216)(85,207) % левая бровь
\qbezier(76,206)(71,215)(66,199)% правая бровь
\qbezier(92,205)(90,200)(88,210) % левое нижнее веко
\qbezier(74,208)(71,200)(68,201)% правое нижн веко
%
% детка
%
\put(155,90){\circle*{3}}
\put(155,90){\line(0,1){25}}
\put(155,133){\oval(20,35)}
\qbezier(148,130)(155,110)(163,130) % нижняя губа
\qbezier(152,125)(155,130)(159,125) % верхняя губа
\qbezier(147,137)(150,143)(154,138) % левая бровь
\qbezier(157,138)(160,143)(164,137)% правая бровь
\qbezier(149,137)(150,133)(152,138) % левое нижнее веко
\qbezier(159,138)(160,133)(162,137)% правое нижн веко
\end{picture}
\caption{Family of spectral curves on the phase space}\label{Spectral}
\end{figure}
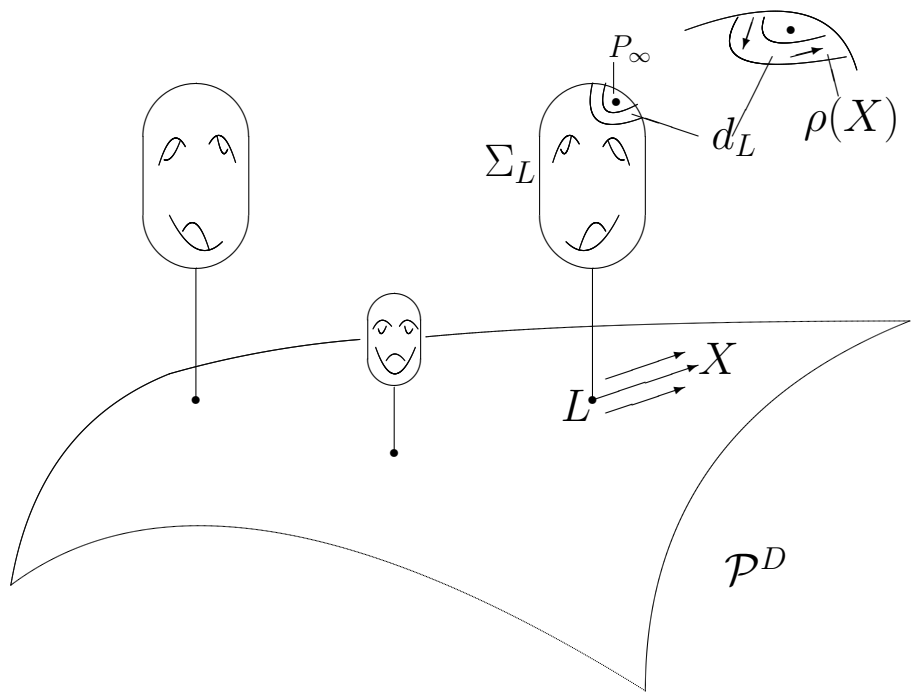

%%%%%%%%%%%%%%%%%%%%%%%%%%%%%%%%%%%%%%%%%%%%%%%%%%%%%%%%%%%%%%%

\subsection{Knizhnik--Zamolodchikov connection}
\label{SS:CFT}
Let $X$ be a tangent vector to $\P^D$ at a point $L$. Consider a deformation of the complex structure of the corresponding spectral curve in the direction of $X$. It is given by a Kodaira--Spencer class in $H^1(\Sigma_L, T\Sigma_L)$ where $T\Sigma_L$ is a tangent sheaf on $\Sigma_L$. The cocycle $\rho(X)$ representing this class can be constructed as follows. We fix a local section of the sheaf of spectral curves. On the Figure~\ref{Spectral} its intersection with the fiber over $L$ is denoted by the marked point $P_\infty$, the notation being used through. Consider the family of gluing functions (giving a smooth structure on spectral curves) defined in an annulus with the "center" at the marked point. We denote such gluing function by $d_L$. We set $\rho(X)=d_L^{-1}\partial_X d_L$. Since $d_L$ can be considered as a diffeomorphism of the annulus, $\rho(X)$ is a local field defined there.

It is shown in \cite{WZWN1,WZWN2,Sh_DGr} that the Kodaira--Spencer cocycle can be represented by a Krichever--Novikov vector field (i.e. by a global meromorphic vector field on $\Sigma_L$ holomorphic outside $\Pi$), and the ambiguity of the construction is compensated by passing to the quotient sheaf which will be defined below. According to that, below we consider $\rho(X)$ as an element of the space $\V_L^{\rm reg}\backslash\V_L/\V_L^{(1)}$ where $\V_L$ is the Lie algebra of Krichever--Novikov vector fields on $\Sigma_L$ (with $\Pi$ as the set of allowed poles), $\V_L^{(1)}$ is the direct sum of its homogeneous subspaces of nonnegative degrees, and $\V_L^{\rm reg}\subset\V_L$ is the subspace of vector fields vanishing at $P_\infty$.\footnote{It is convenient to think that $P_\infty\in\Pi$ and use the following splitting of the set $\Pi:\,\Pi=(\Pi\backslash\{P_\infty\})\cup\{P_\infty\}$ to define an almost graded structure --- compare with \refSS{constr}} Both these subspaces are Lie subalgebras in $\V_L$.

As soon as $\rho(X)$ is a Krichever--Novikov vector field, we consider  $T(\rho(X))$ where $T$ is the Sugawara representation, and then define the operators
\[  \nabla_{X}=\partial_{X}+T(\rho(X)).
\]

Next, we consider the sheaf of $\A_L$-modules ${\mathcal F}^{\infty/2}$ on $\P^D$. Let $\A_L^{reg}\subset\A_L$ be a subalgebra of functions regular at the point $P_\infty$. The sheaf of quotient spaces ${\mathcal F}^{\infty/2}/\A_L^{reg}{\mathcal F}^{\infty/2}$ on $\P^D$ is called the sheaf of \emph{coinvariants}, or the sheaf of \emph{conformal blocks} in another terminology.
\begin{theorem}\label{T:pr_flat} The operators $\nabla_X$ define a projective flat connection $\nabla$ on the sheaf of coinvariants, in particular
\[
 [\nabla_X,\nabla_Y]=\nabla_{[X,Y]}+\l(X,Y)\cdot id
\]
where $\l$ is a cocycle on the Lie algebra of tangent vector fields on $\P^D$, $id$ is the identity operator.
\end{theorem}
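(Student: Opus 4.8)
The plan is to expand the commutator $[\nabla_X,\nabla_Y]$ of the operators $\nabla_X=\partial_X+T(\rho(X))$ directly and to match it term by term against $\nabla_{[X,Y]}$, the residual discrepancy being a scalar operator which is precisely the cocycle $\l$. Throughout I would work first on the total space of the bundle ${\mathcal F}^{\infty/2}$ over $\P^D$, treating $\partial_X$ as the flat reference connection coming from a local trivialization (so that $[\partial_X,\partial_Y]=\partial_{[X,Y]}$), and descend to the sheaf of coinvariants ${\mathcal F}^{\infty/2}/\A_L^{reg}{\mathcal F}^{\infty/2}$ only at the end.

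The first step is to check that each $\nabla_X$ is well-defined on coinvariants. Here one must use that $\rho(X)$ is only specified modulo $\V_L^{\rm reg}$ and $\V_L^{(1)}$. The point is that $T(\V_L^{(1)})$ annihilates the vacuum, while the Sugawara operators attached to $\V_L^{\rm reg}$, together with $\partial_X$, preserve the subspace $\A_L^{reg}{\mathcal F}^{\infty/2}$ by which we quotient; this is where the compatibility of the Sugawara construction with the commutative current algebra $\A_L$ (via the isomorphism of \refL{diag2}) enters. Hence the ambiguity in $\rho(X)$ acts trivially on coinvariants and $\nabla_X$ descends.

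Next I would compute, modulo operators that vanish on coinvariants,
\[
 [\nabla_X,\nabla_Y]=\partial_{[X,Y]}+T\big(\partial_X\rho(Y)\big)-T\big(\partial_Y\rho(X)\big)+[T(\rho(X)),T(\rho(Y))].
\]
The last term is handled by the defining property of the Sugawara representation, which is a \emph{projective} representation of the Lie algebra $\V_L$ of Krichever--Novikov vector fields: $[T(v),T(w)]=T([v,w])+c(v,w)\cdot\mathrm{id}$ for the Krichever--Novikov central cocycle $c$. The cross terms $\partial_X T(\rho(Y))-\partial_Y T(\rho(X))$ are rewritten as $T(\partial_X\rho(Y))-T(\partial_Y\rho(X))$; the variation of the Sugawara structure $T$ itself along $X$ (since $T$ depends on $L$ through $\Sigma_L$) must be shown to contribute only to the central term, which is the delicate bookkeeping point of this step.

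The crux of the argument, and the step I expect to be the main obstacle, is the integrability identity for the Kodaira--Spencer map,
\[
 \partial_X\rho(Y)-\partial_Y\rho(X)+[\rho(X),\rho(Y)]\equiv\rho([X,Y])
\]
modulo $\V_L^{\rm reg}+\V_L^{(1)}$, i.e. modulo exactly the ambiguities shown above to act trivially on coinvariants. This is the Maurer--Cartan relation for the logarithmic derivative $\rho(X)=d_L^{-1}\partial_X d_L$, and it expresses that the family of gluing functions $d_L$ varies consistently with the complex structure on $\P^D$; it is established as in \cite{WZWN1,WZWN2,Sh_DGr} for the tautological family over the moduli of curves, and the present family of spectral curves $\Sigma_L\to\P^D$ is handled by the same cohomological argument. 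Combining this identity with the projectiveness of $T$ yields $[\nabla_X,\nabla_Y]=\nabla_{[X,Y]}+\l(X,Y)\cdot\mathrm{id}$ with $\l(X,Y)=c(\rho(X),\rho(Y))$ together with the contribution of the variation of $T$, and the cocycle property of $\l$ on the tangent vector fields of $\P^D$ then follows from the Jacobi identity applied to the left-hand side.
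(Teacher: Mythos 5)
Your proposal is consistent with the paper, which in fact offers no proof of this theorem at all: it states only that the result is proved in \cite{WZWN2,Sh_DGr,Schlich_DGr} for the conformal field theory on the moduli space of curves with marked points, and that ``the same proofs are working'' for the present family of spectral curves over $\P^D$. Your sketch correctly reconstructs the skeleton of that delegated argument --- well-definedness on coinvariants via the ambiguity of $\rho(X)$ modulo $\V_L^{\rm reg}$ and $\V_L^{(1)}$, projectivity of the Sugawara representation $T$, and the Maurer--Cartan identity for the Kodaira--Spencer cocycle modulo those same subalgebras --- and the two steps you flag as delicate (the variation of $T$ itself along $X$ contributing only centrally, and the integrability identity for $\rho$) are precisely where all the work in the cited references is concentrated, so there is nothing in the paper's own text for your plan to conflict with.
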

In \cite{WZWN2,Sh_DGr,Schlich_DGr} \refT{pr_flat} is formulated and proved for the conformal field theory on the moduli space of curves with marked points and fixed, up to a certain order, jets at those points. We claim that here (as well as in \cite{Sh_DGr}), the situation is completely similar, and the same proofs are working. By this analogy, we call the projective flat connection defined by \refT{pr_flat}, the {\it Knizhnik--Zamolodchikov connection}.

The horizontal sections of the Knizhnik--Zamolodchikov connection are also called \emph{conformal blocks}.

%%%%%%%%%%%%%%%%%%%%%%%%%%%%%%%%%%%%%%%%
\subsection{Representation of the Lie algebra of Hamiltonian vector fields}\label{SS:rep}

By \refT{pr_flat} the mapping $X\to\nabla_X$ is a projective representation of the Lie algebra of vector fields on $\P^D$ in the space of sections of the sheaf of coinvariants. Denote this representation by~$\nabla$. The restriction of the representation $\nabla$ on the Lie subalgebra of Hamiltonian vector fields gives a projective representation of the last. It is remarkable for the reason it is unitary, and the representation operators of Hamiltonians being the spectral invariants (i.e. corresponding to the invariants of the Lie algebra $\g$ as described in \refSS{Ham_sys}), are mutually commuting.
\begin{theorem}[\cite{Sh_L_KZ,Sh_DGr}]
If $X$, $Y$ are Hamiltonian vector fields, and their Hamiltonians are in involution, then $[\nabla_X,\nabla_Y]=\l(X,Y)\cdot~id$. If Hamiltonians are spectral invariants then
\[ [\nabla_X,\nabla_Y]~=~0.
\]
\end{theorem}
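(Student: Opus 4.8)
The plan is to deduce both assertions from \refT{pr_flat}, which already supplies
\[
  [\nabla_X,\nabla_Y]=\nabla_{[X,Y]}+\l(X,Y)\cdot id
\]
for arbitrary tangent fields $X,Y$ on $\P^D$. Everything then reduces to controlling the two terms on the right under the respective hypotheses: the vector-field term $\nabla_{[X,Y]}$ and the scalar curvature cocycle $\l(X,Y)$.

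For the first statement I would invoke the standard symplectic identity $[X,Y]=X_{\{H_X,H_Y\}}$, where $X_\phi$ denotes the Hamiltonian field of a function $\phi$ and $\{\cdot,\cdot\}$ is the Poisson bracket of $\w$. Since $H_X$ and $H_Y$ are in involution, $\{H_X,H_Y\}=0$, hence $[X,Y]=0$. Because $X\mapsto\nabla_X=\partial_X+T(\rho(X))$ is linear in $X$ and $\rho$ is linear, $\nabla_{[X,Y]}=\nabla_0=0$. Feeding this into the formula of \refT{pr_flat} leaves exactly $[\nabla_X,\nabla_Y]=\l(X,Y)\cdot id$.

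For the second statement the extra ingredient is isospectrality. When $H_X=H_a$ is a spectral invariant, the associated flow is the Lax flow $\partial_aL=[L,M_a]$, and for every invariant polynomial $\chi$ one computes, pointwise in the spectral parameter, $\partial_a\chi(L(z))=\langle\d\chi(L),[L,M_a]\rangle=\langle[\d\chi(L),L],M_a\rangle=0$ by \refL{central}. Thus every coefficient of the characteristic polynomial $\det(L(z)-\k)$, and hence the spectral curve $\Sigma_L$, stays constant along the flow; the Kodaira--Spencer class governing the infinitesimal deformation of the complex structure of $\Sigma_L$ therefore vanishes, $\rho(X)=0$, and likewise $\rho(Y)=0$. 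This collapses $\nabla_X$ to the plain base derivative $\partial_X$, and, since the curvature cocycle $\l$ is inherited from the Virasoro central term of the Sugawara representation $T$ and is consequently bilinear in $\rho(X),\rho(Y)$, we get $\l(X,Y)=0$. Combined with $\nabla_{[X,Y]}=0$ (spectral invariants are pairwise in involution, so the argument of the previous paragraph applies), this yields $[\nabla_X,\nabla_Y]=0$.

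The hard part will be the two geometric justifications behind the second statement: first, that pointwise conservation of every $\chi(L(z))$ really forces the \emph{family} of spectral curves over $\P^D$ to be locally trivial in the direction $X$, so that the Kodaira--Spencer class itself (not merely the isomorphism class of $\Sigma_L$) vanishes, with due care at the branch points of $\Sigma_L\to\Sigma$ where the diagonalizing $\Psi$ degenerates; and second, that the curvature cocycle $\l$ of the Knizhnik--Zamolodchikov connection is genuinely a bilinear expression in the representatives $\rho(X),\rho(Y)$, so that $\rho(X)=0$ suffices to annihilate it. Both points are secured by the construction of $\nabla$ through the Sugawara representation on the sheaf of coinvariants, in exact parallel with the moduli-space situation of \cite{WZWN2,Sh_DGr,Schlich_DGr}; once the family of spectral curves is placed in that framework the estimates there transfer without change.
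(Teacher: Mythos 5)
Your proposal is correct and follows essentially the same route as the paper: the first assertion is read off from \refT{pr_flat} once $[X,Y]=0$ is extracted from the involutivity of the Hamiltonians, and the second rests on the observation that isospectrality of the Lax flows freezes the family of spectral curves, forcing $\rho(X)=0$ and hence $\nabla_X=\partial_X$. Your explicit verification $\partial_a\chi(L(z))=\langle\d\chi(L),[L,M_a]\rangle=\langle[\d\chi(L),L],M_a\rangle=0$ via \refL{central} is a welcome expansion of the paper's one-line assertion that the coefficients of the spectral curve are integrals of the Lax equations.

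One remark: the item you flag as a ``hard part'' --- that the curvature cocycle $\l$ is bilinear in the representatives $\rho(X),\rho(Y)$, so that $\rho(X)=0$ annihilates it --- is not needed. Once $\nabla_X=\partial_X$ and $\nabla_Y=\partial_Y$, the paper simply computes $[\nabla_X,\nabla_Y]=[\partial_X,\partial_Y]=\partial_{[X,Y]}=0$ directly, using $[X,Y]=0$; the vanishing of $\l(X,Y)$ is then a consequence of the comparison with \refT{pr_flat} rather than an input requiring separate structural information about the Sugawara central term. Rerouting your last step this way removes one of the two geometric justifications you were worried about; the remaining one (that constancy of the spectral data along the flow kills the Kodaira--Spencer cocycle $d_L^{-1}\partial_Xd_L$, not merely the isomorphism class of $\Sigma_L$) is indeed the genuine content, and the paper likewise disposes of it by noting that the gluing functions $d_L$ themselves are invariant along the phase trajectories.
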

\begin{proof} The first assertion immediately follows from \refT{pr_flat}, since $[X,Y]=0$.

The coefficients of the spectral curve are integrals for the Lax equations. Therefore, if $X$ is a Hamiltonian vector field, and its Hamiltonian is a spectral invariant, then the complex structure, as well as the gluing functions, are invariant along the phase trajectories of the vector field~$X$. Let $d_L$ be the family of the gluing functions (depending on $L$). The invariance along the trajectories immediately implies that $\partial_Xd_L=0$, hence $\rho(X)=d_L^{-1}\partial_Xd_L=0$, and therefore $\nabla_X=\partial_X$. Let $H_X$, $H_Y$ be Hamiltonians depending only on the spectrum of the Lax operator. Then, according to just said,
$[\nabla_X,\nabla_Y]=[\partial_X,\partial_Y]=\partial_{[X,Y]}$,
and $[X,Y]=0$. This implies that $[\nabla_X,\nabla_Y]~=~0$.
\end{proof}

Let $\mathcal G$ be a Lie algebra with an antilinear involution $\dag$, $T$ be its representation in the linear space $V$. The Hermitian scalar product in $V$ is called contravariant if  $T(X)^\dag=T(X^\dag)$ where, on the left hand side, $\dag$ denotes the Hermitian conjugation of operators. According to \cite{KaRa}, a pair consisting of a representation $T$, and a contravariant scalar product on $V$, is called a \emph{unitary representation} of the Lie algebra $\mathcal G$. In such case the restriction of $T$ to the Lie subalgebra of the elements $X$ such that $X^\dag=-X$ (i.e. to the real subalgebra) is unitary in the conventional sense, i.e. $T(X)^\dag=-T(X)$.

We can define a scalar product in the fibres of the sheaf of coinvariants induced by the standard scalar product of semiinfinite monomials \cite{KaRa, KNFb,Sh_DGr}, i.e. by declaring the semiinfinite monomials consisting of basis vectors as comultipliers to be orthogonal.

Let $\w$ be the symplectic form on $\P^D$, i.e. the restriction of the Krichever--Phong form to $\P^D$, and $\w^p/p!$ be the corresponding volume form on $\P^D$. Let $\L^2(\w^p/p!)$ be the  the space of sections of the sheaf of coinvariants which are quadratically integrable in the measure given by the volume form. By the square of a section we mean the square with respect to the above introduced scalar product.
\begin{theorem}[\cite{Sh_L_KZ,Sh_DGr}]\label{T:unit}
The representation $\nabla : X\to\nabla_X$ of the Lie algebra of Hamiltonian vector fields on $\P^D$ in the space of smooth sections in $\L^2(C,\w^p/p!)$ is unitary.
\end{theorem}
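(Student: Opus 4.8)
The plan is to reduce the asserted unitarity to two independent compatibilities: the contravariance of the fiberwise Hermitian form under the Sugawara action, and the measure-preserving property of Hamiltonian flows on $\P^D$. Recall that $\nabla_X=\partial_X+T(\rho(X))$, and that the $\L^2$-pairing of two sections is the integral over $\P^D$, against the Liouville volume $\w^p/p!$, of the fiberwise Hermitian product $\langle\cdot,\cdot\rangle_L$ on the fibers ${\mathcal F}^{\infty/2}/\A_L^{reg}{\mathcal F}^{\infty/2}$ of the sheaf of coinvariants. First I would fix the antilinear involution $\dag$: on $\g$ it is the conjugation singling out the compact real form, which induces an involution on $\L$, hence (through \refL{diag2} and the direct-image construction) on $\A_L$ and on the associated Krichever--Novikov vector fields $\V_L$, and finally a conjugation on the tangent vectors to $\P^D$. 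In the sense of Kac--Raina, ``unitary'' then means precisely $\nabla_X^\dag=\nabla_{X^\dag}$ for every Hamiltonian vector field $X$.

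The first step is to record that the Sugawara representation $T$ of $\V_L$ on ${\mathcal F}^{\infty/2}$ is contravariant for the standard monomial form $\langle\cdot,\cdot\rangle_L$, i.e. $T(v)^\dag=T(v^\dag)$ for $v\in\V_L$; this is the Kac--Raina unitarity of vacuum modules and is already available from \cite{KaRa,Sh_DGr}. Since $\rho$ is compatible with the involutions, $\rho(X^\dag)=\rho(X)^\dag$ — which one reads off from the construction $\rho(X)=d_L^{-1}\partial_Xd_L$ together with the reality of the gluing data — this yields $T(\rho(X))^\dag=T(\rho(X^\dag))$ fiber by fiber. The second step is to show that the Knizhnik--Zamolodchikov connection is a metric connection for $\langle\cdot,\cdot\rangle_L$, i.e.
\[
   \partial_X\langle s,s'\rangle_L=\langle\nabla_X s,s'\rangle_L+\langle s,\nabla_{X^\dag}s'\rangle_L ,
\]
for smooth sections $s,s'$. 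Here $\partial_X$ differentiates both the sections and the base-dependent form $\langle\cdot,\cdot\rangle_L$; the essential point is that the variation of the form is exactly compensated by the connection term $T(\rho(X))$ via the contravariance just established.

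Finally I would integrate this identity over $\P^D$ against $\w^p/p!$. Because $X$ is Hamiltonian its flow preserves the symplectic volume, so the Lie derivative $\mathcal{L}_X(\w^p/p!)$ vanishes and the integral of the total derivative $\partial_X\langle s,s'\rangle_L$ is zero on $\L^2$-sections; hence
\[
   \langle\nabla_X s,s'\rangle+\langle s,\nabla_{X^\dag}s'\rangle=0,
\]
which is exactly $\nabla_X^\dag=\nabla_{X^\dag}$. The main obstacle is the second step, the metric compatibility of $\nabla$: one must control the dependence of $\langle\cdot,\cdot\rangle_L$ on the base point and verify that its $X$-variation is cancelled by $T(\rho(X))$, and one must confirm $\rho(X^\dag)=\rho(X)^\dag$, for which the reality of the Kodaira--Spencer gluing cocycle is decisive. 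The remaining bookkeeping — restricting to real $X$ (where $X^\dag=-X$, so $\nabla_X$ becomes anti-Hermitian), checking that the projective cocycle $\l$ of \refT{pr_flat} is imaginary on the real subalgebra, and ruling out boundary contributions in the integration by parts — is routine and parallels the moduli-space case of \cite{WZWN2,Sh_DGr,Schlich_DGr}.
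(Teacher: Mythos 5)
Your strategy coincides with the paper's. The paper itself only sketches the proof, deferring the details to \cite{Sh_L_KZ,Sh_DGr}; the single ingredient it records explicitly is precisely your third step --- by the Poincar\'{e} theorem on absolute integral invariants the volume form $\w^p/p!$ is preserved by Hamiltonian flows, so averaging in the invariant measure gives unitarity --- while your first two steps (contravariance of the Sugawara action for the standard monomial form, compatibility of the involutions with $\rho$) supply the fiberwise half that the paper leaves to the references. One concrete slip in your write-up: the metric-compatibility identity $\partial_X\langle s,s'\rangle_L=\langle\nabla_X s,s'\rangle_L+\langle s,\nabla_{X^\dag}s'\rangle_L$ has the wrong sign. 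Test it with a constant fiber metric and $T(\rho(X))=0$: for a real field $X$ (so $X^\dag=-X$) the Leibniz rule gives $\partial_X\langle s,s'\rangle=\langle\partial_X s,s'\rangle+\langle s,\partial_X s'\rangle$, whereas your right-hand side is $\langle\partial_X s,s'\rangle-\langle s,\partial_X s'\rangle$. Moreover, integrating your version against the invariant volume yields $\langle\nabla_X s,s'\rangle+\langle s,\nabla_{X^\dag}s'\rangle=0$, i.e.\ $\nabla_X^\dag=-\nabla_{X^\dag}$, the negative of the contravariance you are after. The correct identity is $\partial_X\langle s,s'\rangle_L=\langle\nabla_X s,s'\rangle_L-\langle s,\nabla_{X^\dag}s'\rangle_L$; with that repair the same integration by parts gives $\nabla_X^\dag=\nabla_{X^\dag}$, and the rest of your argument is sound.
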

We refer to \cite{Sh_L_KZ,Sh_DGr} for the proof. Here, we only notice that, by the Poincar\'{e}\ theorem on absolute integral invariants, the symplectic form, and its powers, are absolute integral invariants of Hamiltonian phase flows \cite{Arn}. Hence the volume form $\w^p/p!$ defines an invariant measure on $\P^D$ with respect to the Hamiltonian phase flows, and averaging in invariant measure gives a unitary representation.
  % с этим файлом может не транслироваться из-за слишком большого
                % количества неопределенных ссылок (без него, или с ним, но без
                % библиографии) транслируется

%%%%%%%%%%%%%%%%%%%%%%%%%%%%%%%%%%%%%%%%%%%%%%%%%%%%%%%%%%%%%%%%%%%

%}

%%%%%%%%%%%%%%%%%%%%%%%%%%%%%%%%%%%%%%%%%%%%%%%%%%%%
\end{document}
%%%%%%%%%%%%%%%%%%%%%%%%%%%%%%%%%%%%%%%%%%%%%%%%
%%   THE END
%%%%%%%%%%%%%%%%%%%%%%%%%%%%%%%%%%%%%%%%%%% 